\numberwithin{equation}{section}
\theoremstyle{plain}
\newtheorem{theorem}{Theorem}[section]
\newtheorem{lemma}[theorem]{Lemma}
\newtheorem{proposition}[theorem]{Proposition}
\newtheorem{corollary}[theorem]{Corollary}
\theoremstyle{definition}
\newtheorem{example}[theorem]{Example}
\newtheorem{definition}[theorem]{Definition}
\newtheorem{notation}[theorem]{Notation}
\newtheorem{remark}[theorem]{Remark}
\newcommand{\e}{\mathsf e}
\begin{document}

\title[New Topologies for clones]{\LARGE Exploring New Topologies for the Theory of Clones} 
\author[A. Bucciarelli and A. Salibra]{Antonio Bucciarelli and Antonino Salibra}
\address{Institut de Recherche en Informatique Fondamentale\\
Universit\'e Paris Cit\'e\\ 8 Place Aur\'elie Nemours, 75205 Paris Cedex 13, France}

\subjclass{Primary: 08A05; Secondary: 08A65}

\keywords{Clone,  $\omega$-Clone, Ideal, Topology,  $\omega$-Polymorphism, Invariant $\omega$-relation}

\begin{abstract}

Clones of operations of arity $\omega$  (referred to as $\omega$-operations) have been employed by Neumann \cite{neu70} to represent varieties of infinitary algebras defined by operations of at most arity $\omega$. More recently, clone algebras \cite{BS22} have been introduced to study clones  of functions, including $\omega$-operations, within the framework of one-sorted universal  algebra. Additionally, polymorphisms of arity $\omega$, which are $\omega$-operations preserving the relations of a given first-order structure, have recently been used to establish model theory results \cite{BHM12} with  applications in the field of complexity of CSP problems.

In this paper, we undertake  a topological and  algebraic  study of polymorphisms of arity $\omega$ and their corresponding invariant relations. Given a Boolean ideal $X$ on the set $A^\omega$, we endow the set of $\omega$-operations on $A$ with a topology, which we refer to as $X$-topology. Notably, the topology of pointwise convergence can be retrieved as a special case of this approach. Polymorphisms and invariant relations are then defined parametrically, with respect to the $X$-topology. We characterise the $X$-closed clones of $\omega$-operations in terms of $Pol^\omega$-$Inv^\omega$ and present a method to relate $Inv^\omega$-$Pol^\omega$ to the classical (finitary) $Inv$-$Pol$.

\end{abstract}
\maketitle
\vspace{-1.00cm}
\section{Introduction}

Clones are sets of finitary operations that include all projections and are closed under composition (see \cite{L06,SZ86,T93}). They play a significant role in universal algebra, as the set of all term operations of an algebra always constitutes a clone, and, in fact, every clone is of this form.
Therefore, comparing clones of algebras is much more appropriate than comparing their basic operations for the purpose of classifying algebras based on different behaviours. 
In addition to their significance in universal algebra, clones also play an important role in the study of first-order structures. The polymorphism clone of a first-order structure, containing all finitary operations that preserve the structure, holds valuable information and serves as a powerful analytical tool.
Clones are also essential in theoretical computer science, especially in the context of constraint satisfaction problems (CSPs) (e.g. see \cite{B15,BKW17,B21,B22}). In a CSP, a specific structure  is fixed (the template), and the problem involves deciding whether a given conjunction of atomic formulas over the signature of the template is satisfiable in the structure. Jeavons' groundbreaking discovery \cite{J98} states that the complexity of a CSP for a finite structure is entirely determined by the polymorphism clone of that structure.

For countable structures classified as $\omega$-categorical, the polymorphism clone carries a substantial amount of information. In \cite{BP15} primitive positive bi-interpretability of two $\omega$-categorical structures $A$ and $B$ is linked to the isomorphism of their polymorphism clones as topological clones. 
Besides the theoretical interest they might have, primitive positive interpretations are additionally motivated by an application in theoretical computer science: when a relational structure $B$ has a primitive positive interpretation in a relational structure $A$ with a finite signature, then $\mathrm{CSP}(B)$ has a polynomial-time reduction to $\mathrm{CSP}(A)$. 

The complexity of a CSP problem can be well-characterised in finite or $\omega$-categorical relational structures, 
as a relation is primitive positive definable if and only if it is preserved by all the polymorphisms.
A generalisation to infinite structures that are not necessarily $\omega$-categorical is presented in \cite{BHM12}, showing that every CSP can be formulated with a template where a relation is primitive positive definable if and only if it is first-order definable and preserved by all the polymorphisms of arity $\omega$ (hereafter called $\omega$-polymorphisms).

Efforts have been made to encode clones into one-sorted algebras, leading to concepts like abstract $\aleph_0$-clones ($\aleph_0$-AC) \cite{neu70} and clone algebras (CA) \cite{BS22,S22}. These provide algebraic theories for understanding clones and their related operations.
A crucial feature of these two approaches is connected to the role played by projections in clones. In clone algebras and abstract $\aleph_0$-clones these are abstracted out, and take the form of a countable infinite system of fundamental elements (nullary operations) $\e_0, \e_1,\ldots , \e_n, \ldots$ of the algebra.
An important consequence of the abstraction of  projections is the abstraction of functional composition, obtained  in CAs (resp. $\aleph_0$-ACs) by introducing an operator $q_n$ of arity $n+1$ for every $n\geq 0$ (resp. $q$ of arity $\omega$).

The axioms of CAs and $\aleph_0$-ACs characterise up to isomorphism algebras of functions, called respectively functional clone algebras (FCAs) and concrete $\aleph_0$-clones ($\aleph_0$-CC). The elements of a FCA and of a $\aleph_0$-CC with value domain $A$ are  operations $\varphi: A^\omega \to A$ of arity $\omega$ (referred to as {\em $\omega$-operations}).
In this framework the nullary operators $\e_i$ are the projections, and the $q$'s operators are the compositions of $\omega$-operations.  The universe of a FCA (resp. $\aleph_0$-CC)  is referred to as an $\omega$-clone (resp. infinitary $\omega$-clone).
It is notable that every infinitary $\omega$-clone is also an $\omega$-clone, and the set of $\omega$-polymorphisms of a relational structure forms an infinitary $\omega$-clone as well.
Furthermore, infinitary $\omega$-clones naturally extend the concept of clones, as every clone can be encoded into an appropriate infinitary $\omega$-clone.

In the algebraic approach to the complexity of CSPs, a key tool is the Galois connection $Pol$-$Inv$ between polymorphisms and invariant relations. An important problem is to characterise clones of operations $F$ for which $F=Pol(Inv(F))$. 
To identify the ``good clones'' satisfying $F=Pol(Inv(F))$, two approaches can be followed.
 The topological one  involves equipping the set of operations with the topology of pointwise convergence, so that  the good clones are the ones that are topologically closed. 
 This approach is widely followed in the literature (e.g. see \cite{B21}).
   The topology-free approach takes into account the cardinality of the basic set $A$ and the arity of operations. 
   Bruno Poizat fully developed the topology-free approach in \cite{P81}. In \cite{Po80,Po84} R. P\"oschel characterised the Galois closed sets of relations with the help of infinitary operations.


In this paper, we adopt the topological approach as we primarily focus on $\omega$-operations and relations of arity $\omega$, referred to as $\omega$-relations.
We present  a method for defining topologies on sets of functions. The key idea is to choose a Boolean ideal $X$ of subsets of $B$ to endow $A^B$ with a topology. A subset $F$ of $A^B$ is $X$-closed if it satisfies the following condition:
 $$\forall g\in A^B \ [(\forall d\in X \ \exists f\in F\ g_{|d}=f_{|d}) \Rightarrow g\in F].
 $$
When  $X$ is the family of finite subsets of $B$, we recover the topology of pointwise  convergence on $A^B$, referred to as local topology in this paper.

An ideal $X$ on $A^\omega$ determines the so-called $X$-topology on the set $O_A^{(\omega)}$ of all the $\omega$-operations on $A$.  We provide four significant examples of $X$-topologies on $O_A^{(\omega)}$: the local, global, trace and uniform topologies, respectively. We prove that if $C$ is an $\omega$-clone, then the local, global, trace and uniform closures of $C$ are all $\omega$-clones. The same result for the infinitary $\omega$-clones only holds for the local and global topologies.

In the infinitary approach to polymorphisms and invariant relations, topology plays an important role 
also in the $\omega$-relations. The framework we develop in this paper enables us to define, given an ideal $X$ on $A^\omega$, an ideal $\mathcal X(m)$ on  $\omega$, which depends on a matrix $m:\omega\times\omega\to A$.
Therefore, we can equip the set $A^\omega$ with the $\mathcal{X}_m$-topology. These parametric topologies are used to define the concepts of $X$-polymorphism and $X$-invariant relation, both of which are parametric with respect to the ideal $X$ on $A^\omega$.
We prove that: (i) if $R\subseteq A^\omega$ is an $\omega$-relation, then the  set $Pol^\omega_X(R)$ of all the $X$-polymorphisms of $R$ is an $X$-closed infinitary $\omega$-clone; (ii) an infinitary $\omega$-clone $C$ is $X$-closed if and only if $C=Pol^\omega_X(Inv^\omega_X\, C)$. 


For the opposite side $Inv^\omega$-$Pol^\omega$ of the Galois connection, we define the notion of $\omega$-relation clone and show that the operators acting in such $\omega$-relation clones preserve the local closedness of $\omega$-relations.
In order to characterise the $\omega$-relation clones  of locally closed $\omega$-relations ($c\omega$-relation clones) 
we define the notion of decreasing sequence of finitary relations. Each of these sequences has a locally closed $\omega$-relation as a limit  and we show that 
$\mathcal R$ is a $c\omega$-relation clone iff $\mathcal R =\mathrm{Lim}\,\mathcal S$, for some relation clone $\mathcal S$.
In the last result of the paper, Theorem \ref{thm:convergent}, we  present a method to relate $Inv^\omega$-$Pol^\omega$ to $Inv$-$Pol$.

\vspace{-0.5cm}
\section{Preliminaries}\label{sec:prel}

The notations in this paper are pretty standard. For
concepts, notations and results not covered hereafter, the reader is
referred to \cite{BS81,mac87} for universal algebra, to \cite{FMMT22a,L06,SZ86} for the theory of clones and to \cite{BS22,S22} for clone algebras. 


In this paper, the symbol $\omega$ represents the first infinite ordinal. A finite ordinal $n<\omega$ can be understood as the set of elements $\{0, \ldots, n-1\}$.

In the rest of this section, the symbol $A$ will be used to denote an arbitrary set.

\subsection{$\omega$-Sequences}

\begin{enumerate}
\item If $s\in A^\omega$, then $set(s)=\{s_i: i\in\omega\}$.
\item If $s\in A^\omega$ and $a_0,\dots,a_{n-1}\in A$ then $s[a_0,\dots,a_{n-1}]\in A^\omega$ is defined as follows:
  $$s[a_0,\dots,a_{n-1}]_i=\begin{cases}a_i&\text{if $0\leq i\leq n-1$}\\ s_i&\text{if $i \geq n$}\end{cases}$$
If $\boldsymbol a = a_0,\dots,a_{n-1}$, we write $s[\boldsymbol a]$ for $s[a_0,\dots,a_{n-1}]$.
    
 \item If $a\in A$, then $a^\omega\in A^\omega$ is such that $(a^\omega)_i = a$ for all $i$. Similarly, we define $a^n$ to be the finite sequence $(a,\dots,a)$ with $n$ occurrences of $a$.

 \item Sometimes, we employ word notation to represent $\omega$-sequences. For example, $0^n1^\omega$ signifies the sequence $(0, ..., 0, 1, 1, 1, ...)$ with $n$ occurrences of $0$ followed by an infinite repetition of $1$.
\item If $s\in A^\omega$ and $d\subseteq_{\mathrm{fin}} \omega$, then $s_{|d}$ is a map from $d$ to $A$ such that
$(s_{|d})_i=s_i$ for every $i\in d$. In particular, if $n$ is a natural number, then $s_{|n}=(s_0,\dots,s_{n-1})\in A^n$.

\end{enumerate}

\subsection{Traces}\label{sec:tra}

The  equivalence relation $\equiv$ on $A^\omega$, defined by:
$$s\equiv r\ \Leftrightarrow\ |\{i: s_i\neq r_i\}|< \omega,$$
will be of particular importance in this article.
A \emph{trace on $A$} is a subset $\mathsf a$  of $A^\omega$ closed under the relation $\equiv$:
$$s\in\mathsf a\ \text{and}\ r\equiv s \Rightarrow r\in\mathsf a.$$
The equivalence class of $s\in A^\omega$ will be denoted by $[s]_\equiv$.
A \emph{compact trace} is a trace, which is union of a finite number of equivalence classes.
A \emph{basic trace} refers to a single equivalence class of the equivalence relation $\equiv$.

Observe that a compact trace is a compact (or finite) element in the complete lattice of traces, much like how  in a topological space the compact open sets are compact elements in the lattice of open sets.


\subsection{Operations and $\omega$-operations}
\begin{enumerate}
\item A \emph{finitary operation on $A$} is a function $f:A^n\to A$ for some $n\in\omega$. We denote by $O_A$ the set of all finitary operations on $A$. If $F\subseteq O_A$ then $F^{(n)}=\{f\in F\ |\ f:A^n\to A \}$.

 \item An  \emph{$\omega$-operation on $A$} is a function $\varphi:A^\omega\to A$. We denote by $O_A^{(\omega)}$ the set of all $\omega$-operations on $A$.


\item If $f: A^n\to A$ is a finitary operation, then the \emph{top extension $f^\top$ of $f$} is the $\omega$-operation defined as follows: $f^\top(s)=f(s_0,\dots,s_{n-1})$ for every $s\in A^\omega$. If $F$ is a set of finitary operations, then we define
$F^\top=\{f^\top : f\in F\}$.

\item If $C$ is a set of $\omega$-operations, then we define $C_{\mathrm{fin}}=\{f\in O_A: f^\top \in C\}$.

\item An $\omega$-operation $\varphi:A^\omega\to A$ is \emph{semiconstant} if for every basic trace $[s]_\equiv$, we have:
$r,u\in [s]_\equiv \Rightarrow \varphi(r)=\varphi(u)$.

\item The $\omega$-operation $\e^A_n:A^\omega\to A$ is the projection in the $n$-th coordinate: $\e^A_n(s)=s_n$ for every $s\in A^\omega$.

\item If $\varphi_i\in O_A^{(\omega)}$ for $i\in\omega$, then $\boldsymbol \varphi$ denotes the sequence 
$(\varphi_0,\varphi_1,\dots,\varphi_n,\dots)$.
\end{enumerate}

\subsection{Relations and $\omega$-relations}\label{sec:rel}
\begin{enumerate}
\item A \emph{finitary relation on $A$} is a subset $S$ of $A^n$ for some natural number $n\neq 0$.
We denote by $Rel_A$ the set of all finitary relations on $A$.

\item An \emph{$\omega$-relation on $A$} is a subset $R$ of $A^\omega$. We denote by $Rel_A^{(\omega)}$ the set of all $\omega$-relations on $A$.

\item $Rel^{\leq\omega}_A=Rel_A\cup Rel^{(\omega)}_A$.

\item If $S\subseteq A^n$ is a finitary relation, then the \emph{top extension $S^\top$ of $S$} is the $\omega$-relation defined as follows: $S^\top=\{s\in A^\omega: (s_0,\dots,s_{n-1})\in S\}$. If $\mathcal S$ is a set of finitary relations, then
$\mathcal S^\top=\{S^\top : S\in \mathcal S\}$ is a set of $\omega$-relations.

\item If $\mathcal R$ is a set of $\omega$-relations on $A$, then $\mathcal R_{\mathrm{fin}}= \{S\in Rel_A:  S^\top\in \mathcal R\}$.

\item $\Delta_A=\{ a^\omega: a\in A\}$ is called the \emph{diagonal $\omega$-relation}.
\end{enumerate}

\subsection{Matrices}\label{sec:matr} Let $\alpha,\beta\leq \omega$. Then $\mathsf M_\alpha^\beta(A)$ denotes the set of matrices of order $\alpha\times \beta$ on $A$, i.e., $\mathsf M_\alpha^\beta(A)$ is the set of all maps $m:\alpha\times\beta\to A$. 
When there is no danger of confusion, we write $\mathsf M_\alpha^\beta$ for $\mathsf M_\alpha^\beta(A)$.

If $m\in \mathsf M_\alpha^\beta$, then $m_i$ is the $i$-th row of $m$ and $m^j$ is the $j$-th column of $m$. 
The element belonging to row $i$ and  column $j$ is denoted by $m^j_i$. 

We adopt the following notations:
\begin{enumerate}
\item If $B\subseteq A^\alpha$, then $\mathsf M_\alpha^{\beta,B}$  denotes the set of all matrices $m\in \mathsf M_\alpha^\beta$ such that each column $m^j$  belongs to $B$. 

\item $\mathsf M = \bigcup_{\alpha,\beta\leq \omega} \mathsf M_\alpha^\beta$ and $\mathsf M^\omega = \bigcup_{\alpha\leq \omega} \mathsf M_\alpha^\omega$. 

\item If $m\in \mathsf M_\alpha^\beta$, then $\mathrm{row}(m)=\{ m_i: i\in \alpha\}$.

\item If $m\in \mathsf M_\alpha^\beta$ and $\varphi:A^\beta \to A$ is a function, then we define $\varphi[m]\in A^\alpha$ by
$$\varphi[m]= (\varphi(m_i): i< \alpha).$$
\item If $m\in \mathsf M_\alpha^\beta$ and  $\mathbf s=(s^0,\dots,s^{k-1})$ with $s^i\in A^\alpha$, then $m[\mathbf s]$ is the matrix defined as follows:
$$m[\mathbf s]^j=\begin{cases}s^j&\text{if $0\leq j\leq k-1$}\\ m^j&\text{if $k \leq j <\beta$}\end{cases}$$
\end{enumerate}

\subsection{The local topology on spaces of functions}\label{sec:loctop}
Let $A$ and $B$ sets. In the context of this work, the topology of pointwise convergence on $A^B$ is referred to as the \emph{local topology}. 
When $A$ is equipped with the discrete topology, the local topology  is the product topology on $A^B$.

The local topology on $A^B$ can be defined using a subbase consisting of sets of the form $O_{d,f} = \{ g \in A^B : g_{|d} = f_{|d} \}$, where $d$ ranges over all finite subsets of $B$ and $f$ belongs to $A^B$. It is important to note that if $B$ is finite, then the local topology coincides with the discrete topology.

For a subset $R \subseteq A^B$, the local closure of $R$ in $A^B$, denoted by $\mathrm{Cl}_L(R)$, is defined as follows: $f \in \mathrm{Cl}_L(R)$ if for every finite subset $d \subseteq B$, there exists $g \in R$ such that $f_{|d} = g_{|d}$.

In this paper we will consider the local topology on the set $A^\alpha$ for $\alpha \leq \omega$ and on the set $O_A^{(\omega)}$ of all $\omega$-operations. As specified before, the local topology on $A^\alpha$ for $\alpha < \omega$ is the discrete topology.


When $R$ is an $\omega$-relation, we employ $\overline R$ to represent the local closure of $R$. Specifically, for indicating the set of matrices, whose columns belong to the local closure of $R$, the notation $\mathsf M_\alpha^{\beta,{\overline R}}$ is preferred over $\mathsf M_\alpha^{\beta,\mathrm{Cl}_L(R)}$ for its enhanced readability.

\begin{lemma}\label{lem:clofin} 
   For every  relation $S\subseteq A^n$, the $\omega$-relation $S^\top$ is  locally closed.
\end{lemma}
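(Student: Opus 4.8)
The plan is to prove the equality $S^\top=\overline{S^\top}$. By the definition of the local closure, the inclusion $R\subseteq\overline{R}$ holds for every $R\subseteq A^\omega$, so it suffices to establish the reverse inclusion $\overline{S^\top}\subseteq S^\top$. Accordingly, I would fix an arbitrary thread $s\in\overline{S^\top}$ and aim to show $s\in S^\top$, which by definition of the top extension means exactly $(s_0,\dots,s_{n-1})\in S$.

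The key observation driving the argument is that membership of a thread in $S^\top$ is a condition on its first $n$ coordinates only. I would therefore instantiate the local-closure hypothesis at the single finite set $d=\{0,\dots,n-1\}=n$. Since $s\in\overline{S^\top}$, for this particular $d$ there exists $g\in S^\top$ with $s_{|n}=g_{|n}$, i.e. $(s_0,\dots,s_{n-1})=(g_0,\dots,g_{n-1})$. Because $g\in S^\top$ is by definition the statement $(g_0,\dots,g_{n-1})\in S$, I immediately get $(s_0,\dots,s_{n-1})\in S$, hence $s\in S^\top$, completing the inclusion.

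I do not expect any genuine obstacle here: the entire content is the remark that the defining condition of $S^\top$ is finitary, constraining only finitely many coordinates, so a single well-chosen subbasic set $d=n$ already detects membership. Conceptually, this is just the statement that $S^\top$ is the preimage of $S$ under the projection $A^\omega\to A^n$ onto the first $n$ coordinates, which is continuous (and has clopen image, $A^n$ being discrete); the argument above is the elementary unfolding of that fact, and it is the reason one expects $S^\top$ to be not merely closed but in fact clopen in the local topology.
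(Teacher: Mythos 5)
Your proof is correct and coincides with the paper's own argument: both instantiate the local-closure condition at the finite set $d=\{0,\dots,n-1\}$ to find an element of $S^\top$ agreeing with $s$ on its first $n$ coordinates, whence $(s_0,\dots,s_{n-1})\in S$. The closing remark about $S^\top$ being a (clopen) preimage under the projection onto the first $n$ coordinates is a nice conceptual gloss but adds nothing beyond the paper's proof.
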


\begin{proof}
 Let $s\in \overline{S^\top}$. There exists $u\in S^\top$ such that $u_i=s_i$ for $0\leq i\leq n-1$. This implies that $(s_0,\ldots,s_{n-1})\in S$ and therefore $s\in S^\top$. 
 \end{proof}

\subsection{Clones and polymorphisms}\label{sec:cp}
In this section we recall notations and terminology on clones and polymorphisms we will use in the following. 

\begin{enumerate}
\item  The \emph{composition}  of $f\in  O_A^{(n)}$ with  $g_1,\dots,g_n\in   O_A^{(k)}$ is the operation $f(g_1,\dots,g_n)\in  O_A^{(k)}$ defined as follows, for all $\mathbf a\in A^k$: 
$$f(g_1,\dots,g_n)(\mathbf a)= f(g_1(\mathbf a),\dots, g_n(\mathbf a)).$$


\item   A {\em  clone on a set $A$} is a subset $F$ of $O_A$ containing all projections
$p^{(n)}_i:A^n\to A$ ($n\geq i$) and closed under composition. 

\end{enumerate}



\begin{definition}\label{def:2.2}
  Given a finitary operation $f:A^k \to A$ and a finitary relation $S\subseteq A^n$, we say that
$f$ is a \emph{polymorphism of $S$} and that $S$ is an \emph{invariant relation} of $f$ if, for all matrices $m\in\mathsf M_n^{k,S}$,
 $f[m]\in S$.
\end{definition}

An operation $f$ is a polymorphism of a set of relations if it is a polymorphism of each of them. Similarly, a finitary relation $S$ is an invariant relation of a set $F$ of operations if it is an invariant relation of each element of $F$.

Let $\mathcal S$ be a set of finitary relations on $A$ and $F$ be a set of finitary operations on $A$.
We define $Pol\,\mathcal S= \{ f\in O_A: \text{$f$ is a polymorphism of each $S\in \mathcal S$}\}$
and $Inv\,F=\{S\in Rel_A: \text{$S$ is an invariant relation of each $f\in F$} \}$. 

\begin{definition}\label{def:src}
  For a set $A$, a \emph{relation clone on $A$} is a set $\mathcal S$ of finitary relations satisfying the following conditions:
\begin{itemize}
\item[(i)]  $\Delta_A^{(n)}= \{(a_1,\dots,a_n) : a_1,\dots,a_n\in A, a_1=a_2=\dots=a_n \}\in \mathcal S$ for every $n\geq 1$.
\item[(ii)] If $S,U\in \mathcal S$, $S$ is $n$-ary and $U$ is $m$-ary, then 
$$S\times U=\{(a_0,\dots,a_{n-1},b_0,\dots,b_{m-1}): $$
$$\qquad\qquad\qquad\qquad\qquad\qquad(a_0,\dots,a_{n-1})\in S, (b_0,\dots,b_{m-1})\in U \}\in\mathcal S.$$
\item[(iii)] If $S\in\mathcal S$ is $n$-ary and $f:m\to n$ is a map, then
$\pi_f(S)=\{s\circ f: s\in S\}\in\mathcal S$.
 \item[(iv)] If $S_i\in\mathcal S$ ($i\in I$) is a family of relations in $\mathcal S$ having the same arity, then $\bigcap_{i\in I}S_i\in \mathcal S$.
\end{itemize}
 A relation clone $\mathcal S$ is called \emph{strong} if it satisfies the following further property: 
\begin{itemize}
\item[(v)] If $\{S_i\}_{i\in I}$ is a directed family of relations $S_i\in \mathcal S$  having the same arity, then $\bigcup_{i\in I}S_i\in \mathcal S$.
\end{itemize}
\end{definition}

It should be noted that when $\mathcal S$ is a relation clone on a finite set, property (v) holds vacuously and in   property (iv)
the finite intersections suffice.

It is well known that, for every set $\mathcal S$ of finitary relations and set $F$ of finitary operations,  $Pol\,\mathcal S$ is a clone and $Inv\,F$ is a strong relation clone.

Recall that the local topology  on $O_A$ is obtained by viewing this space as the sum space of the spaces $A^{A^k}$, and each $A^{A^k}$ as a power of $A$, which itself is taken to be discrete.

The following result, due to Romov \cite{rom77}, can be found in \cite[Corollary 1.9]{SZ86} and \cite[Proposition 6.1.5]{B21}.

\begin{proposition}\label{prop:b21} For a clone $F$ of finitary operations on $A$, $F$ is closed under the local  topology on $O_A$ if and only if  $Pol(Inv\,F) = F$. 
\end{proposition}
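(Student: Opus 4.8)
The plan is to prove both directions of the equivalence, with the forward direction (local closedness implies $Pol(Inv\,F)=F$) being the substantive one.

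For the easy direction, suppose $Pol(Inv\,F)=F$. Since $Inv\,F$ is a strong relation clone and $Pol\,\mathcal{S}$ is always a clone that is locally closed (this last fact is standard: $Pol\,\mathcal{S} = \bigcap_{S\in\mathcal S}Pol(S)$, and each $Pol(S)$ is cut out by closed conditions of the form $f[m]\in S$ for matrices $m$ with columns in $S$, each such condition depending only on finitely many values of $f$ when $S$ is fixed), we conclude $F = Pol(Inv\,F)$ is locally closed. I would state the local-closedness of $Pol\,\mathcal S$ explicitly as the key observation here.

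For the forward direction, assume $F$ is a clone that is locally closed in $O_A$, and show $Pol(Inv\,F)\subseteq F$ (the reverse inclusion $F\subseteq Pol(Inv\,F)$ is immediate, since every operation in $F$ preserves every relation it invariantly preserves). So let $g\in Pol(Inv\,F)$ with $g\in O_A^{(k)}$; I must show $g\in F$. By local closedness, it suffices to show that for every finite $d\subseteq A^k$ there exists $f\in F^{(k)}$ with $g_{|d}=f_{|d}$. Fix such a finite set $d=\{\mathbf a^{(1)},\dots,\mathbf a^{(r)}\}\subseteq A^k$. The core idea is to build a witnessing invariant relation from the graphs of the operations in $F$: consider the relation $S_d\subseteq A^r$ consisting of all tuples $(h(\mathbf a^{(1)}),\dots,h(\mathbf a^{(r)}))$ as $h$ ranges over $F^{(k)}$. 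One checks that $S_d\in Inv\,F$: given a matrix whose columns are such tuples, say the $j$-th column coming from $h_j\in F^{(k)}$, applying any $f\in F^{(n)}$ row-wise produces the tuple arising from the composite $f(h_1,\dots,h_n)\in F^{(k)}$, which lies in $S_d$ precisely because $F$ is closed under composition. Since $g\in Pol(Inv\,F)$ and the columns of the matrix formed from $S_d$ lie in $S_d$, evaluating $g$ on the projection tuples forces $(g(\mathbf a^{(1)}),\dots,g(\mathbf a^{(r)}))\in S_d$, which means exactly that there is some $f\in F^{(k)}$ agreeing with $g$ on all of $d$.

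The main obstacle, and the step requiring care, is the verification that $S_d$ is genuinely an invariant relation of every $f\in F$, and in particular that the matrix whose columns are the projections (the tuples coming from the $k$ coordinate projections $p^{(k)}_i$, which lie in $F$) is the right matrix to feed to $g$. One must correctly identify the arity $r$ of $S_d$ with the number of points in $d$, set up the $r\times k$ matrix whose rows are the elements $\mathbf a^{(1)},\dots,\mathbf a^{(r)}$ of $d$, and confirm that its columns belong to $S_d$ because they are the images under the projections $p^{(k)}_0,\dots,p^{(k)}_{k-1}\in F$. Getting the bookkeeping between rows, columns, and the indexing of $d$ right is where the argument could go wrong; conceptually it is just the classical Galois-closure argument for $Pol$-$Inv$, transported to the finite-approximation setting supplied by local closedness.
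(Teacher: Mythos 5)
Your proof is correct: the easy direction via local closedness of $Pol\,\mathcal S$ and the forward direction via the ``graph'' relation $S_d=\{(h(\mathbf a^{(1)}),\dots,h(\mathbf a^{(r)})):h\in F^{(k)}\}$ together with the matrix whose rows are the points of $d$ (whose columns lie in $S_d$ via the projections) is exactly the classical Geiger--Bondar\v{c}uk--Romov argument. The paper itself gives no proof of Proposition~\ref{prop:b21} --- it only cites Romov and the references \cite{SZ86,B21} --- and your argument coincides with the standard proof found there, so there is nothing to flag beyond the minor bookkeeping you already acknowledge.
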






The following theorem is due to  Geiger \cite{gei68} and, independently, to Bondarc\v{u}k, Kalu\v{z}nin, Kotov and Romov \cite{BKKR69}.

\begin{theorem}\label{thm:invpol} For a set $\mathcal S$ of finitary relations on a finite set $A$, $\mathcal S$ forms a relation clone if and only if $\mathcal S = Inv(Pol\,\mathcal S)$.
\end{theorem}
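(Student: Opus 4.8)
The plan is to prove the two directions separately, doing almost all of the work in the forward direction.

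The backward direction is immediate from facts already recorded: if $\mathcal S = Inv(Pol\,\mathcal S)$ then $\mathcal S$ has the form $Inv\,F$ with $F = Pol\,\mathcal S$, and $Inv\,F$ is always a (strong) relation clone, so $\mathcal S$ is a relation clone. For the forward direction, assume $\mathcal S$ is a relation clone. The inclusion $\mathcal S \subseteq Inv(Pol\,\mathcal S)$ is the trivial half of the Galois connection (each $S \in \mathcal S$ is preserved by its own polymorphisms), so everything reduces to proving $Inv(Pol\,\mathcal S) \subseteq \mathcal S$. Fix $R \in Inv(Pol\,\mathcal S)$ of arity $k$. Since $A$ is finite, $R$ has finitely many tuples $r^1, \dots, r^n$; form the $k \times n$ matrix $U$ with columns $r^1, \dots, r^n$ and let $u_1, \dots, u_k \in A^n$ be its rows. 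Applying the projections $p^{(n)}_i$ (which lie in every clone $Pol\,\mathcal S$) to $U$ recovers the tuples of $R$, while $R \in Inv(Pol\,\mathcal S)$ gives $g[U] \in R$ for every $g \in Pol\,\mathcal S$; hence $R = \{\, g[U] : g \in Pol\,\mathcal S,\ g\ \text{$n$-ary}\,\}$.

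The heart of the argument is a single membership lemma about a \emph{characteristic relation}. For each $m$, enumerate $A^m = \{x^1, \dots, x^N\}$ with $N = |A|^m$ and put $\kappa_m = \{\, (g(x^1), \dots, g(x^N)) : g \in Pol\,\mathcal S,\ g\ \text{$m$-ary}\,\} \subseteq A^N$, the set of value tables of the $m$-ary polymorphisms. I would show $\kappa_m \in \mathcal S$ by realising it as the solution set of the canonical primitive positive condition over $\mathcal S$: a tuple $c \in A^N$ is \emph{admissible} iff for every $S \in \mathcal S$ of arity $s$ and all indices $r_1, \dots, r_s \le N$ for which $(g(x^{r_1}), \dots, g(x^{r_s})) \in S$ for all $m$-ary $g \in Pol\,\mathcal S$, one has $(c_{r_1}, \dots, c_{r_s}) \in S$. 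Each such constraint cuts out a relation of $\mathcal S$ — built from $S$ by a product with a full relation $A^j \in \mathcal S$ and a coordinate map $\pi_f$ — and the admissible tuples are their intersection, so admissibility defines a relation in $\mathcal S$ by conditions (i)–(iv). That this defined relation is \emph{exactly} $\kappa_m$ is where finiteness enters decisively: since the $N$ positions are indexed by all of $A^m$, an admissible $c$ is the full graph of a map $h\colon A^m \to A$, and the admissibility condition says precisely that $h$ preserves every $S \in \mathcal S$ — here one observes that, because every polymorphism preserves $S$, the hypothesis ``the columns of the test matrix lie in $S$'' coincides with the condition quantifying over all $g$. Thus $h \in Pol\,\mathcal S$ and $c \in \kappa_m$, with no extension step required.

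Granting $\kappa_n \in \mathcal S$, the target relation is recovered by a single coordinate map: each $u_i$ equals some $x^{f(i)}$, and for $f\colon k \to N$ one has $\pi_f(\kappa_n) = \{\, (g(u_1), \dots, g(u_k)) : g \in Pol\,\mathcal S,\ g\ \text{$n$-ary}\,\} = \{\, g[U] : g \,\} = R$. Hence $R = \pi_f(\kappa_n) \in \mathcal S$ by condition (iii), which completes the inclusion $Inv(Pol\,\mathcal S) \subseteq \mathcal S$. I expect the decisive and most delicate point to be the membership $\kappa_m \in \mathcal S$, and specifically its ``realisability'' half: turning a tuple that satisfies the canonical condition into an actual polymorphism. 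The naive attempt — running the construction on the matrix $U$ whose columns are the tuples of $R$ — fails, because a relation-preserving partial map on its rows need not extend to a total polymorphism; passing to $\kappa_m$, whose rows exhaust $A^m$ so that admissible tuples are already total functions, is exactly what removes the extension problem, and it works only because $A$ is finite. This is precisely the step that breaks down for infinite base sets and for arity $\omega$, and it is what motivates the topological reformulation developed in the rest of the paper.
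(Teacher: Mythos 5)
The paper does not prove this statement: Theorem \ref{thm:invpol} is quoted as a classical result of Geiger \cite{gei68} and of Bondarc\v{u}k--Kalu\v{z}nin--Kotov--Romov \cite{BKKR69}, so there is no in-paper argument to compare against. Your proposal is a correct reconstruction of exactly that classical argument: reduce to $Inv(Pol\,\mathcal S)\subseteq\mathcal S$, write $R=\{g[U]:g\in Pol\,\mathcal S\ \text{$n$-ary}\}$ using the matrix of its (finitely many) tuples, show the characteristic relation $\kappa_m$ of value tables of $m$-ary polymorphisms lies in $\mathcal S$, and recover $R$ as $\pi_f(\kappa_n)$. Your identification of where finiteness is used (admissible tuples are already total functions on $A^m$, so no extension step is needed) and of why this is exactly what fails in the infinite and arity-$\omega$ settings is also accurate.

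One small imprecision: the constraint relation $\{c\in A^N:(c_{r_1},\dots,c_{r_s})\in S\}$ is not in general obtainable from $S$ by a product with a full relation and a single $\pi_f$ alone. When the indices $r_1,\dots,r_s$ repeat (which you must allow, since the rows of a test matrix for a polymorphism may coincide), one also needs to intersect with diagonal relations of the form $\{w:w_i=w_j\}$, themselves built from $\Delta_A^{(2)}$ via (i), (ii) and (iii), before projecting. Since you invoke all of conditions (i)--(iv) for the final intersection anyway, this is a routine repair rather than a gap. A second, purely cosmetic corner case is $R=\emptyset$ (so $n=0$): there one checks separately that $\emptyset\in\mathcal S$, e.g.\ as $S\cap\Delta_A^{(s)}$ for any $S\in\mathcal S$ admitting no constant tuple, which is the only situation in which $Inv(Pol\,\mathcal S)$ can contain the empty relation.
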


The following generalisation of Theorem \ref{thm:invpol} was shown by Romov \cite[Theorem 3.5]{rom08}.

\begin{theorem}\label{thm:invpolromov} For a set $\mathcal S$ of finitary relations on a countable set $A$, $\mathcal S$ forms a strong relation clone if and only if $\mathcal S = Inv(Pol\,\mathcal S)$.
\end{theorem}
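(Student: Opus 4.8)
The statement is a Galois-closure result, so I would prove the two directions separately. For the implication from right to left, suppose $\mathcal S = Inv(Pol\,\mathcal S)$. Then $\mathcal S$ is of the form $Inv\,F$ with $F = Pol\,\mathcal S$ a set of finitary operations, and by the fact recalled just before Proposition \ref{prop:b21} every such $Inv\,F$ is a strong relation clone; hence $\mathcal S$ is a strong relation clone, and countability plays no role here. For the converse I would first record the general Galois inclusion $\mathcal S\subseteq Inv(Pol\,\mathcal S)$, which holds for arbitrary $\mathcal S$ since each $S\in\mathcal S$ is by definition invariant under every $f\in Pol\,\mathcal S$. It then remains to prove the nontrivial inclusion $Inv(Pol\,\mathcal S)\subseteq\mathcal S$ under the hypothesis that $\mathcal S$ is a strong relation clone.

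So fix an $n$-ary relation $R\in Inv(Pol\,\mathcal S)$; since $A$ is countable, $R\subseteq A^n$ is at most countable. The plan is to exhibit $R$ as a \emph{directed union} of relations that manifestly lie in $\mathcal S$, and then to invoke the strong axiom (v). For a finite subset $P=\{s_1,\dots,s_p\}\subseteq R$, let $\hat M_P\in\mathsf M_n^p$ be the matrix whose columns are the tuples of $P$, with rows $\hat\rho_1,\dots,\hat\rho_n\in A^p$, and set
$$R_P=\{\,g[\hat M_P]: g\in (Pol\,\mathcal S)^{(p)}\,\}=\{\,(g(\hat\rho_1),\dots,g(\hat\rho_n)) : g\in (Pol\,\mathcal S)^{(p)}\,\}.$$
Using only projections one gets $P\subseteq R_P$, and using that $R\in Inv(Pol\,\mathcal S)$, so that every $g\in Pol\,\mathcal S$ preserves $R$, one gets $R_P\subseteq R$. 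Since every tuple of $R$ already appears in some $R_P$ (take $P$ a singleton) and $P\subseteq P'$ gives $R_P\subseteq R_{P'}$, the family $\{R_P\}_{P\subseteq_{\mathrm{fin}}R}$ is directed, all its members are $n$-ary, and $\bigcup_P R_P=R$. Thus, if each $R_P\in\mathcal S$, axiom (v) yields $R\in\mathcal S$ and we are done.

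The heart of the argument is therefore a \emph{generation lemma}: for finite $P$, $R_P$ coincides with the smallest relation $\langle P\rangle_{\mathcal S}$ of $\mathcal S$ containing $P$, which exists and lies in $\mathcal S$ because $A^n\in\mathcal S$ (by (i)--(ii)) and $\mathcal S$ is closed under intersection (by (iv)). One inclusion, $R_P\subseteq\langle P\rangle_{\mathcal S}$, is routine: for any $g\in (Pol\,\mathcal S)^{(p)}$ and any $T\in\mathcal S$ with $P\subseteq T$, the columns of $\hat M_P$ lie in $T$, so $g[\hat M_P]\in T$; intersecting over all such $T$ gives the claim. The reverse inclusion $\langle P\rangle_{\mathcal S}\subseteq R_P$ is the real content and is an interpolation statement: given $w\in\langle P\rangle_{\mathcal S}$ one must produce a single $g\in (Pol\,\mathcal S)^{(p)}$ with $g(\hat\rho_i)=w_i$ for all $i\le n$. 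The values can first be prescribed consistently on the finite set $\{\hat\rho_1,\dots,\hat\rho_n\}\subseteq A^p$ (a coincidence $\hat\rho_i=\hat\rho_{i'}$ forces $w_i=w_{i'}$, since the corresponding equality relation lies in $\mathcal S$, contains $P$, hence contains $w$), and the task is to extend this partial map to a total operation preserving all of $\mathcal S$.

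This extension step is where I expect the main obstacle to lie, and where countability of $A$ is indispensable. Over a finite base set it is exactly the mechanism behind Theorem \ref{thm:invpol}, since one can fix $g$ on the whole finite set $A^p$ at once. Over a countable $A$ the domain $A^p$ is infinite, so I would enumerate $A^p=\{v_0,v_1,\dots\}$ and build $g$ by a step-by-step (back-and-forth, compactness-flavoured) construction, at each stage choosing $g(v_k)\in A$ so as not to violate any constraint imposed by a relation of $\mathcal S$ on the finitely many points already fixed. The delicate point is to guarantee that a legal value always exists, that is, that the inverse system of finite partial polymorphisms compatible with $w$ is nonempty and surjective enough to admit a limit; this is precisely where the membership $w\in\langle P\rangle_{\mathcal S}$ and the closure properties (i)--(iv) are fed in, and where countability of $A$ ensures that the construction terminates after $\omega$ stages in a genuine total $g\in Pol\,\mathcal S$. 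Granting the lemma, $R_P=\langle P\rangle_{\mathcal S}\in\mathcal S$ for every finite $P$, and the directed union closes up to $R\in\mathcal S$ by axiom (v).
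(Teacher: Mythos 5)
The paper does not prove this statement: it is quoted from Romov \cite[Theorem 3.5]{rom08} and used as a black box, so there is no in-paper argument to compare yours against. On its own merits, your proposal is sound and is essentially the standard proof of this result. The right-to-left direction is correctly dispatched by the remark preceding Proposition \ref{prop:b21}, the decomposition $R=\bigcup_{P\subseteq_{\mathrm{fin}}R}R_P$ as a directed union of $n$-ary relations is correct (directedness uses that $Pol\,\mathcal S$ is a clone, so a $p$-ary witness composes with projections to a $p'$-ary one), and the identification $R_P=\langle P\rangle_{\mathcal S}$ is exactly the right generation lemma. The one step you flag without carrying out — that a partial map on a finite $D\subseteq A^p$ violating no $\mathcal S$-constraint with rows in $D$ can always be extended to $D\cup\{v_k\}$ — does close, and the reason is worth recording: satisfying all constraints with rows in $D$ is equivalent to the value tuple lying in $\langle D\rangle_{\mathcal S}$, the smallest $|D|$-ary member of $\mathcal S$ containing the columns of the matrix with rows $D$ (one inclusion because $\langle D\rangle_{\mathcal S}$ is itself such a constraint, the other because every constraint relation contains those columns and lies in $\mathcal S$); and the projection of $\langle D\cup\{v_k\}\rangle_{\mathcal S}$ onto the $D$-coordinates is a member of $\mathcal S$ (by clause (iii) of Definition \ref{def:src}) containing the columns of the $D$-matrix, hence contains $\langle D\rangle_{\mathcal S}$, so a legal value for $v_k$ always exists. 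Countability of $A$ then lets the $\omega$-stage construction exhaust $A^p$, and clauses (i)--(iv) plus the strong clause (v) finish the argument exactly as you describe. So the proposal is a faithful reconstruction of the cited proof rather than a divergence from it; the only caveat is the degenerate case $R=\emptyset$, which both your argument and the axioms as stated handle only under the convention that the empty family counts as directed in clause (v).
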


\section{$\omega$-Clones}

In this section we introduce the notions of an $\omega$-clone and an infinitary $\omega$-clone. The reader may consult \cite{neu70,BS22,S22} for an algebraic theory of (infinitary) $\omega$-clones.

\begin{definition}
  An \emph{$\omega$-clone on a set $A$} is a subset $C$ of $O_A^{(\omega)}$ containing all projections $\e_n^A:A^\omega\to A$ and closed under the following family of operations $q_n^A$ of arity $n+1$ ($n\geq 0$):
  $$q^A_n(\varphi,\psi_0,\dots,\psi_{n-1})(s)=\varphi(s[\psi_0(s),\dots, \psi_{n-1}(s)]),$$
   for every $s\in A^\omega$ and $\varphi,\psi_0,\dots,\psi_{n-1}\in  C$.
\end{definition}

As a matter of notation, we sometimes write $q_n(x,\bar y)$ for  $q_n(x,y_0,\dots,y_{n-1})$.

If $C$ is an $\omega$-clone, then the algebra $(C, q_n^A, \e^A_n)_{n\geq 0}$ is called a \emph{functional clone algebra $(\mathsf{FCA})$ with value domain $A$} (see \cite{BS22}).

The following was one of the main results in \cite{BS22}.

\begin{theorem}\label{thm:bs} 
The class $\mathbb I \mathsf{FCA}$ is a variety of algebras, axiomatised by the following identities:
 \begin{itemize}
\item[(C1)] $q_n(\e_i,y_0,\dots,y_{n-1})=y_i$ $(0\leq i\leq n-1)$;
\item[(C2)] $q_n(\e_i,y_0,\dots,y_{n-1})=\e_i$ $(i\geq n)$;
\item[(C3)] $q_n(x,\e_0,\e_1,\dots,\e_{n-1})=x$;
\item[(C4)]  $q_n(x,y_0,\dots,y_{n-1})= q_k(x,y_0,\dots,y_{n-1},\e_n,\dots,\e_{k-1})$ ($k\geq n$);
 \item[(C5)] $q_n(q_n(x,\bar y),\bar z)=q_n(x,q_n(y_0,\bar z), \dots,q_n(y_{n-1},\bar z))$.
\end{itemize}

\end{theorem}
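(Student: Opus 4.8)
The statement asserts that the equationally defined class $\mathrm{Mod}(\mathrm{C1}\text{--}\mathrm{C5})$ coincides with $\mathbb I\mathsf{FCA}$. Since every class axiomatised by identities is automatically a variety, the whole theorem follows once both inclusions are established, so the plan is to prove \emph{soundness}, $\mathbb I\mathsf{FCA}\subseteq\mathrm{Mod}(\mathrm{C1}\text{--}\mathrm{C5})$, and \emph{completeness} (a representation theorem), $\mathrm{Mod}(\mathrm{C1}\text{--}\mathrm{C5})\subseteq\mathbb I\mathsf{FCA}$. Birkhoff's theorem then upgrades the equational description to the assertion that $\mathbb I\mathsf{FCA}$ is a variety.

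For soundness I would simply verify each identity in an arbitrary $(C,q_n^A,\e^A_n)$ by unfolding the definitions of $q_n^A$, $\e^A_n$ and the thread operation $s[\boldsymbol a]$. The relevant elementary facts are: $(s[\boldsymbol a])_i=a_i$ for $i\le n-1$ and $(s[\boldsymbol a])_i=s_i$ for $i\ge n$ (these give C1 and C2); $s[s_0,\dots,s_{n-1}]=s$ (C3); and the fact that padding the substituted block with projections leaves every coordinate $\ge n$ untouched, so $s[\psi_0(s),\dots,\psi_{n-1}(s),s_n,\dots,s_{k-1}]=s[\psi_0(s),\dots,\psi_{n-1}(s)]$ (C4). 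Identity C5 is the associativity of functional substitution: writing $u=s[\chi_0(s),\dots,\chi_{n-1}(s)]$ one checks $u[\psi_0(u),\dots,\psi_{n-1}(u)]=s[q^A_n(\psi_0,\bar\chi)(s),\dots,q^A_n(\psi_{n-1},\bar\chi)(s)]$, and applying $\varphi$ gives exactly the right-hand side. All of this is routine.

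Completeness is the substance. Given an abstract $\mathbf C\models\mathrm{C1}\text{--}\mathrm{C5}$ I would construct an \emph{injective} homomorphism $\Phi\colon\mathbf C\to(O_A^{(\omega)},q_n^A,\e^A_n)$ for a suitable value domain $A$. A useful reduction comes first: the image of \emph{any} homomorphism out of $\mathbf C$ automatically contains each $\e^A_n=\Phi(\e_n)$ and is closed under $q^A_n$ (because $\mathbf C$ contains the $\e_n$ and is closed under the $q_n$), hence is itself an $\omega$-clone, i.e. an FCA; so it suffices to produce a \emph{faithful} functional representation, and injectivity then yields $\mathbf C\cong\Phi(\mathbf C)\in\mathbb I\mathsf{FCA}$. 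The natural idea is to build $A$ from $C$ (or from a term model) and to send $c\in C$ to the $\omega$-operation obtained by ``substituting the entries of a thread for the projections $\e_i$''. Evaluated at the generic thread $\mathbf e=(\e_0,\e_1,\dots)$ this returns $c$ by C3 in the case $n=0$, which gives faithfulness at once; on a thread eventually equal to $\mathbf e$ the value is $q_n(c,t_0,\dots,t_{n-1})$, well defined (independent of $n$) by C4 and C3, and there the homomorphism equations reduce, via C1 and the associativity identity C5, to algebraic identities already available.

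The main obstacle is to extend this partial prescription to a \emph{total, coherent} $\omega$-operation on all of $A^\omega$ and to verify the homomorphism identity for \emph{every} thread. Since each $q^A_n$ rewrites only finitely many coordinates while an abstract element of $\mathbf C$ may depend essentially on infinitely many projections, no axiom reduces a general thread to an eventually-$\mathbf e$ one; hence the value of $\Phi(c)$ off the eventually-$\mathbf e$ threads is not forced and must be supplied by a careful choice of value domain (for instance a suitable term or quotient model, or by realising $\mathbf C$ as a subalgebra of a product of point-evaluations, exploiting that spaces of $\omega$-operations are closed under products and subalgebras, with the combined representation on $\prod_i A_i$ acting componentwise). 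This is precisely the step where C1--C5 must be used in full, and it is the delicate point; once the faithful homomorphism is in place, its image is an FCA and the two inclusions together give the theorem.
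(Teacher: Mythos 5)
The paper does not actually prove Theorem \ref{thm:bs}: it is imported from \cite{BS22} with the remark ``the following was one of the main results in \cite{BS22}'', so there is no in-paper argument to measure yours against. On its own terms, your soundness half is correct and routine, and your reduction of completeness to the construction of a single faithful homomorphism $\Phi\colon\mathbf C\to (O_A^{(\omega)},q_n^A,\e_n^A)$ is the right framing, since the image of such a $\Phi$ automatically contains all projections and is closed under the $q_n^A$, hence is an $\omega$-clone.

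The gap is the one you name yourself and then leave open. The prescription $\Phi(c)(\mathbf e[t_0,\dots,t_{n-1}])=q_n(c,t_0,\dots,t_{n-1})$, with $\mathbf e=(\e_0,\e_1,\dots)$, is well defined by C4 and represents $\mathbf C$ faithfully as an algebra of operations defined \emph{only on the single $\equiv_\omega$-class of $\mathbf e$} --- what \cite{BS22} calls a basic functional clone algebra. The theorem, however, asserts isomorphism with an $\omega$-clone in the sense of this paper, i.e.\ with operations defined on all of $A^\omega$. Because $s[\boldsymbol a]\equiv_\omega s$, the operations $q_n^A$ never move a thread out of its $\equiv_\omega$-class, so the homomorphism identity must be verified class by class; on classes other than $[\mathbf e]_\omega$ nothing is forced by C1--C5, yet an arbitrary extension is unavailable, since C1 already forces $\Phi(\e_i)$ to be the genuine $i$-th projection on \emph{every} thread, and for an infinite-dimensional $c$ there is no ``evaluate $q_m(c,r_0,\dots,r_{m-1})$ for large $m$'' fallback. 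Your suggested repairs do not obviously close this: the point evaluations $\varphi\mapsto\varphi(s)$ are maps into the value domain rather than clone-algebra homomorphisms into FCAs, and a product of basic FCAs lives on $\prod_i[s^i]_\omega$, which is in general a proper subtrace of $(\prod_i A_i)^\omega$, so one is back to the same extension problem. Producing this extension --- equivalently, proving that every basic FCA is isomorphic to one with complete trace --- is exactly the substance of the cited result, and without it your argument establishes only the basic-trace representation, not the theorem as stated.
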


\begin{definition} An $\omega$-clone $C$ is \emph{infinitary} if it is closed under the following operation $q^A$ of arity $\omega$:
 $$q^A(\varphi,\psi_0,\dots,\psi_{n-1},\psi_n,\dots)(s)=\varphi(\psi_0(s),\dots, \psi_{n-1}(s), \psi_n(s),\dots).$$
   for every $s\in A^\omega$ and $\varphi,\psi_i\in  C$.
\end{definition}

 The operation $q^A$ has been introduced by Neumann in \cite{neu70}. 
Note that $q^A_n(\varphi,\psi_0,\dots,\psi_{n-1})= q^A(\varphi,\psi_0,\dots,\psi_{n-1},\e_n^A,\e_{n+1}^A,\dots)$.

If $C$ is an infinitary $\omega$-clone, then the infinitary algebra $(C, q^A, \e^A_n)_{n\geq 0}$ is called in \cite{neu70} a \emph{concrete $\aleph_0$-clone ($\aleph_0$-$\mathsf{CC}$) with value domain $A$}.

As a matter of notation, we write $q(x,\bar y)$ for  $q(x,y_0,y_1,\dots,y_k,\dots)$. 

 In \cite{neu70} it was shown the following theorem.

\begin{theorem}\label{thm:neu} The class $\mathbb I\,\aleph_0$-$\mathsf{CC}$  is a variety of infinitary algebras axiomatised by the following identities:
 \begin{itemize}
\item[(N1)] $q(\e_n,\bar y)=y_n$;
 \item[(N2)] $q(x,\e_0,\e_1,\dots,\e_k,\dots)=x$;
 \item[(N3)] $q(q(x,\bar y),\bar z)=q(x,q(y_0,\bar z), \dots,q(y_k,\bar z),\dots)$.
\end{itemize}
\end{theorem}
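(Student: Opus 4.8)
The plan is to prove the equality $\mathbb I\,\aleph_0$-$\mathsf{CC} = \mathrm{Mod}(N1,N2,N3)$ between the isomorphic closure of the class of concrete $\aleph_0$-clones and the class of all infinitary algebras of this signature satisfying the three identities, and then to observe that the right-hand side is automatically a variety. Indeed, satisfaction of an identity is preserved by the formation of subalgebras, direct products and homomorphic images, and this verification goes through verbatim for infinitary signatures, since an identity is simply an equation between two (possibly infinitary) terms; hence any class $\mathrm{Mod}(\Sigma)$ is closed under $\mathbb H$, $\mathbb S$ and $\mathbb P$. Thus the only real content is the two inclusions.

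For the inclusion $\mathbb I\,\aleph_0$-$\mathsf{CC}\subseteq\mathrm{Mod}(N1,N2,N3)$ (soundness), since the identities are isomorphism-invariant it suffices to check them in an arbitrary concrete $\aleph_0$-clone $C\subseteq O_A^{(\omega)}$, which is a direct computation from the definitions of $q^A$ and $\e_n^A$. For (N1), $q^A(\e_n^A,\bar\psi)(s)=\e_n^A(\psi_0(s),\psi_1(s),\dots)=\psi_n(s)$; for (N2), $q^A(\varphi,\e_0^A,\e_1^A,\dots)(s)=\varphi(s_0,s_1,\dots)=\varphi(s)$; and for (N3), writing $t=(\chi_0(s),\chi_1(s),\dots)$, both sides evaluate at an arbitrary thread $s$ to $\varphi(\psi_0(t),\psi_1(t),\dots)$.

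The substantive direction is $\mathrm{Mod}(N1,N2,N3)\subseteq\mathbb I\,\aleph_0$-$\mathsf{CC}$, a Cayley-style representation. Given an algebra $\mathcal A=(A,q,\e_n)_{n\geq 0}$ satisfying the identities, I would take the value domain to be the underlying set $A$ itself and send each $a\in A$ to the $\omega$-operation $\hat a:A^\omega\to A$ defined by $\hat a(s)=q(a,s_0,s_1,\dots)=q(a,\bar s)$, reading the thread $s$ as the argument tuple (legitimate since each $\e_n\in A$, so threads of constants make sense). Four checks then finish the proof: (i) $a\mapsto\hat a$ is injective, because evaluating at the thread $(\e_0,\e_1,\dots)$ gives $\hat a((\e_0,\e_1,\dots))=q(a,\e_0,\e_1,\dots)=a$ by (N2), so $\hat a=\hat b$ forces $a=b$; (ii) $\hat{\e_n}=\e_n^A$ by (N1); (iii) the map commutes with the infinitary composition, i.e.\ $\widehat{q(a,\bar b)}=q^A(\hat a,\hat{b_0},\hat{b_1},\dots)$, which after unfolding both sides at a thread $s$ is precisely the instance of (N3) with $x=a$, $\bar y=\bar b$, $\bar z=\bar s$; (iv) consequently the image $\hat{\mathcal A}=\{\hat a:a\in A\}$ contains all projections $\e_n^A$ and is closed under $q^A$, hence is an infinitary $\omega$-clone, so $(\hat{\mathcal A},q^A,\e_n^A)_{n\geq 0}$ is a concrete $\aleph_0$-clone and $a\mapsto\hat a$ is an isomorphism of $\mathcal A$ onto it.

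I expect the main obstacle to be step (iii): one must match the abstract law (N3), whose right-hand side substitutes $q(y_k,\bar z)$ coordinatewise, with the concrete composition $q^A$, whose definition feeds each value $\hat{b_k}(s)=q(b_k,\bar s)$ into $\hat a$. The crux is the bookkeeping, in particular recognising that the tuple $(q(b_0,\bar s),q(b_1,\bar s),\dots)$ fed into $\hat a$ on the right is exactly the second-level argument tuple appearing in (N3). A point worth flagging explicitly, given the infinitary signature, is that closure of $\hat{\mathcal A}$ under the arity-$\omega$ operation $q^A$ is delivered for free by the homomorphism property (iii), so no separate argument about closure under infinitely many arguments is required.
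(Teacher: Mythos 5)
Your proof is correct. The paper itself gives no proof of this theorem --- it is quoted from Neumann \cite{neu70} --- but your argument (soundness by direct computation in a concrete $\aleph_0$-clone, completeness via the Cayley-style representation $a\mapsto \hat a$ with $\hat a(s)=q(a,\bar s)$, injectivity from (N2) at the thread $(\e_0,\e_1,\dots)$, and the homomorphism property as an instance of (N3)) is exactly the standard representation argument, the infinitary analogue of the proof of Theorem \ref{thm:bs} in \cite{BS22}.
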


 
 \subsection{Clones as $\omega$-clones}\label{sec:clomega}
The following  lemmas establish that the theory of clones can be encoded within the theory of $\omega$-clones.

We define an equivalence relation $\approx$ on the set $O_A$ of finitary operations as follows:
$$f\approx g\ \text{if}\ f^\top=g^\top.$$
For every $k\in\omega$ and $f\in O_A$, the set $[f]_\approx\cap A^{A^k}$ is either a singleton or empty. If it is a singleton, then $[f]_\approx\cap A^{A^n}$ is a singleton for every $n\geq k$. 
It was shown in \cite{BS22} that every clone $F$ is closed under the equivalence $\approx$, i.e. $f\in F$ and $g\approx f$ implies $g\in F$.

\begin{lemma}\label{lem:clomegacl} Let $F\subseteq O_A$ and $C\subseteq O_A^{(\omega)}$. We have:

\begin{enumerate}
\item If $F$ is a clone, then $F^\top$ is an infinitary $\omega$-clone and $(F^\top)_{\mathrm{fin}}=F$.
\item If $C$ is an $\omega$-clone, then $C_{\mathrm{fin}}$ is a clone.
\end{enumerate}
\end{lemma}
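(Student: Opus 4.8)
The plan is to prove the two statements of Lemma \ref{lem:clomegacl} separately, in both cases verifying the defining closure conditions of the relevant structure directly from the definitions of the top extension $(-)^\top$ and the operation $(-)_{\mathrm{fin}}$.

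\medskip

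\noindent\textbf{Part (1).} Assume $F$ is a clone. To show $F^\top$ is an infinitary $\omega$-clone, I must check that $F^\top$ contains all projections $\e_n^A$ and is closed under the infinitary operation $q^A$ (closure under each $q_n^A$ then follows from the identity $q_n^A(\varphi,\psi_0,\dots,\psi_{n-1})=q^A(\varphi,\psi_0,\dots,\psi_{n-1},\e_n^A,\e_{n+1}^A,\dots)$ noted after the definition of infinitary $\omega$-clones). For the projections, the point is that $\e_n^A=(p^{(n+1)}_n)^\top$, where $p^{(n+1)}_n:A^{n+1}\to A$ is the finitary projection onto the $n$-th coordinate; since $F$ is a clone it contains this projection, so $\e_n^A\in F^\top$. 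For closure under $q^A$, I would take $\varphi=f^\top$ and $\psi_i=g_i^\top$ with $f,g_i\in F$. The key observation is that because $f$ and each $g_i$ depend only on finitely many coordinates, the infinite composition $q^A(f^\top,g_0^\top,g_1^\top,\dots)$ in fact agrees with the top extension of a finitary composition of $f$ with finitely many of the $g_i$. Concretely, if $f:A^n\to A$ then $q^A(f^\top,\bar g^\top)(s)=f(g_0^\top(s),\dots,g_{n-1}^\top(s))$ depends only on $g_0,\dots,g_{n-1}$; choosing a common arity $k$ large enough to accommodate $f,g_0,\dots,g_{n-1}$ (using that $F$ is closed under $\approx$ so arities can be harmonised by adding dummy variables), this equals $(f(g_0,\dots,g_{n-1}))^\top$, which lies in $F^\top$ since $F$ is closed under composition. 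Finally, $(F^\top)_{\mathrm{fin}}=F$ follows from the definition $C_{\mathrm{fin}}=\{f: f^\top\in C\}$ together with the fact that $f^\top=g^\top$ forces $f\approx g$, and $F$ is $\approx$-closed.

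\medskip

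\noindent\textbf{Part (2).} Assume $C$ is an $\omega$-clone. To show $C_{\mathrm{fin}}=\{f\in O_A: f^\top\in C\}$ is a clone, I must verify it contains all finitary projections and is closed under composition. For projections, $p^{(n)}_i{}^\top=\e_i^A\in C$ since $C$ contains all $\e_n^A$, hence $p^{(n)}_i\in C_{\mathrm{fin}}$. For closure under composition, suppose $f:A^n\to A$ and $g_1,\dots,g_n:A^k\to A$ all lie in $C_{\mathrm{fin}}$, so that $f^\top,g_1^\top,\dots,g_n^\top\in C$. I would show that $(f(g_1,\dots,g_n))^\top$ can be obtained from these top extensions by applying the clone operation $q_n^A$, possibly after a bookkeeping adjustment so that the finitary composition on the $\omega$-clone side matches the genuine composition on $A^k$. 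The substance of this step is that $q_n^A(f^\top,g_1^\top,\dots,g_n^\top)(s)=f(g_1^\top(s),\dots,g_n^\top(s))$ by definition of $q_n^A$, and the right-hand side, as an $\omega$-operation, is exactly the top extension of $f(g_1,\dots,g_n)$ read off from the first $k$ coordinates; hence it lies in $C$, giving $f(g_1,\dots,g_n)\in C_{\mathrm{fin}}$.

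\medskip

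\noindent The step I expect to demand the most care is matching arities in both parts: the operations $q_n^A$ and $q^A$ substitute $\omega$-operations into \emph{all} coordinates of their first argument, whereas finitary composition and the top extension only involve finitely many coordinates. Reconciling these requires the remark preceding the lemma—that $[f]_\approx\cap A^{A^n}$ stabilises as a singleton for $n$ large and that clones are $\approx$-closed—so that one may freely pad the finitary operations with dummy arguments to land in a common arity without leaving $F$ (resp.\ $C_{\mathrm{fin}}$). Once this arity bookkeeping is set up cleanly, the verifications reduce to unwinding the definitions of $q_n^A$, $q^A$, and $(-)^\top$ and matching the two sides pointwise on an arbitrary thread $s\in A^\omega$.
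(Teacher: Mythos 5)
Your proof is correct and follows essentially the same route as the paper's (much terser) argument: identify $\e_n^A$ with the top extension of a finitary projection, rewrite $q^A$ (resp.\ $q_n^A$) applied to top extensions as the top extension of a finitary composition after harmonising arities via the equivalence $\approx$, and use $\approx$-closedness of clones to get $(F^\top)_{\mathrm{fin}}=F$. The extra care you take with the arity bookkeeping is exactly the content the paper compresses into the phrase ``where the finitary operations $h_i$ have the same arity $m$ and $g_i\approx h_i$.''
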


\begin{proof} (1) We have $\e^A_i = (p^{(n)}_i)^\top$ ($n\geq i$) and, for every $f:A^n\to A$, 
$q^A(f^\top,g_0^\top,\dots,g_k^\top,\dots)=(f(h_0,\dots,h_{n-1}))^\top$, where the finitary operations $h_i$ have the same arity $m$ and $g_i\approx h_i$ for every $0\leq i\leq n-1$. We have $(F^\top)_{\mathrm{fin}}=F$, because the clone $F$ is closed under the equivalence $\approx$.
 
 (2) Similar to (1).
\end{proof}

The following lemma will be used in the proof of Proposition \ref{prop:4.9bis} below.

\begin{lemma}\label{lem:4.9boh} Let $\varphi$ be an $\omega$-operation on $A$ and $S\subseteq A^n$ be a finitary relation. Then the following conditions are equivalent:
 \begin{enumerate}
\item  For all matrices $m\in\mathsf M_n^{\omega,S}$,
 $\varphi[m]\in S$.
\item For all matrices $m\in\mathsf M_\omega^{\omega,S^\top}$, $\varphi[m]\in S^\top$.
\end{enumerate}
\end{lemma}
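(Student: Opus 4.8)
The plan is to establish both implications by exploiting the fact that, for a thread $t\in A^\omega$, membership $t\in S^\top$ depends only on the restriction $t_{|n}=(t_0,\dots,t_{n-1})$: indeed $t\in S^\top$ if and only if $t_{|n}\in S$. Consequently, for a matrix $m$ with $\omega$ columns, a column $m^j$ lies in $S^\top$ precisely when its first $n$ entries $(m^j_0,\dots,m^j_{n-1})$ lie in $S$, and $\varphi[m]\in S^\top$ precisely when $(\varphi(m_0),\dots,\varphi(m_{n-1}))\in S$. The whole argument is then a translation between $n\times\omega$ matrices over $S$ and $\omega\times\omega$ matrices over $S^\top$, obtained by truncating, respectively extending, in the row direction.

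For the implication $(1)\Rightarrow(2)$, I would start from an arbitrary $m\in\mathsf M_\omega^{\omega,S^\top}$ and let $m'\in\mathsf M_n^{\omega}$ be the matrix formed by the first $n$ rows $m_0,\dots,m_{n-1}$ of $m$. Each column of $m'$ is $(m^j_0,\dots,m^j_{n-1})$, which belongs to $S$ because $m^j\in S^\top$; hence $m'\in\mathsf M_n^{\omega,S}$. Applying hypothesis (1) gives $\varphi[m']\in S$. Since the rows of $m'$ are exactly $m_0,\dots,m_{n-1}$, we have $\varphi[m']=(\varphi(m_0),\dots,\varphi(m_{n-1}))$, so this tuple lies in $S$, which is precisely the statement $\varphi[m]\in S^\top$.

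For the reverse implication $(2)\Rightarrow(1)$, I would start from $m\in\mathsf M_n^{\omega,S}$ and build an extension $\tilde m\in\mathsf M_\omega^\omega$ by keeping $m_0,\dots,m_{n-1}$ as its first $n$ rows and filling every row of index $\geq n$ with a fixed element $a\in A$ (such an $a$ exists whenever $n\geq 1$, since the entries of $m$ already witness $A\neq\emptyset$). For each $j$, the column $\tilde m^j$ has first $n$ entries equal to $m^j\in S$, so $\tilde m^j\in S^\top$ and thus $\tilde m\in\mathsf M_\omega^{\omega,S^\top}$. Hypothesis (2) then yields $\varphi[\tilde m]\in S^\top$, i.e.\ $(\varphi(\tilde m_0),\dots,\varphi(\tilde m_{n-1}))\in S$; as the first $n$ rows of $\tilde m$ coincide with those of $m$, this reads $\varphi[m]\in S$.

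I do not expect any genuine obstacle: the equivalence is essentially a bookkeeping statement about restricting and extending matrices in the row coordinate, so the only point requiring a word of care is the degenerate situation. When $A=\emptyset$ or $n=0$ both conditions hold trivially (the relevant sets of matrices are empty, or the target tuple is the empty tuple), and hence the filling step in $(2)\Rightarrow(1)$ may safely assume $n\geq 1$ and $A\neq\emptyset$.
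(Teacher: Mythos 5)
Your proof is correct and follows essentially the same route as the paper's: both directions are handled by passing between an $n\times\omega$ matrix over $S$ and an $\omega\times\omega$ extension of it over $S^\top$, using that $\varphi[p]\in S^\top$ iff $\varphi[p]_{|n}\in S$. Your explicit treatment of the degenerate cases ($n=0$, $A=\emptyset$) is a minor extra care the paper omits.
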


\begin{proof}
  Let $m\in  \mathsf M_n^\omega$. An extension of $m$ is any matrix $p \in \mathsf M_\omega^\omega$ such that $p_i=m_i$ for every $0\leq i\leq n-1$. If $m\in  \mathsf M_n^{\omega,S}$, we have  $p \in \mathsf M_\omega^{\omega,S^\top}$ for every extension $p$ of $m$. The equivalence of (1) and (2) follows because $\varphi[p]\in S^\top$ if and only if $\varphi[m]=\varphi[p]_{|n}\in S$.
\end{proof}
 
Let $R\subseteq A^\alpha$ be a relation for $\alpha\leq\omega$ and $\varphi:A^\omega\to A$ be an $\omega$-operation. We define 
$$Pol^\top(R)= \{f^\top : \forall m\in\mathsf M_\alpha^{\omega,R},\ f^\top[m]\in R\}$$ 
$$Inv^\top(\varphi)=\{S^\top: \forall m\in\mathsf M_\omega^{\omega,S^\top},\ \varphi[m]\in S^\top \}.$$
Similarly, we can define $Pol^\top(\mathcal R)$ for a set $\mathcal R$ of relations and $Inv^\top(C)$ for a set $C$ of $\omega$-operations.

\begin{proposition}\label{prop:4.9bis} Let $f:A^k\to A$ be a finitary operation and $S\subseteq A^n$ be a finitary relation. Then the following conditions are equivalent:
 \begin{enumerate}
\item $f\in Pol(S)$  (see Definition \ref{def:2.2}).
\item  $f^\top\in Pol^\top(S)$.
\item $f^\top\in Pol^\top(S^\top)$.
\end{enumerate}
Therefore, $(Pol\, S)^\top= Pol^\top(S^\top)$ and  $(Inv\,f)^\top=Inv^\top(f^\top)$.
\end{proposition}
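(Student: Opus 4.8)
The plan is to establish the chain of equivalences $(1)\Leftrightarrow(2)\Leftrightarrow(3)$ and then deduce the two closing identities. The equivalence $(2)\Leftrightarrow(3)$ should be essentially immediate from Lemma~\ref{lem:4.9boh}: setting $\varphi=f^\top$, condition~(1) of that lemma says that $f^\top[m]\in S$ for all $m\in\mathsf M_n^{\omega,S}$, which is exactly the defining condition of $f^\top\in Pol^\top(S)$, while condition~(2) of the lemma says $f^\top[m]\in S^\top$ for all $m\in\mathsf M_\omega^{\omega,S^\top}$, which is precisely $f^\top\in Pol^\top(S^\top)$. So $(2)\Leftrightarrow(3)$ is just a restatement of that lemma.

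The real content is $(1)\Leftrightarrow(2)$, relating the classical polymorphism condition on the $n\times k$ matrices $\mathsf M_n^{k,S}$ to the condition on the $n\times\omega$ matrices $\mathsf M_n^{\omega,S}$. First I would unwind the definitions: $f\in Pol(S)$ means that for every $m\in\mathsf M_n^{k,S}$ (an $n\times k$ matrix whose $k$ columns all lie in $S$) we have $f[m]=(f(m_i):i<n)\in S$; whereas $f^\top\in Pol^\top(S)$ means that for every $m'\in\mathsf M_n^{\omega,S}$ (an $n\times\omega$ matrix whose columns all lie in $S$) we have $f^\top[m']=(f^\top(m'_i):i<n)\in S$. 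The key observation is that $f^\top$ only reads the first $k$ entries of each row: $f^\top(m'_i)=f(m'_{i,0},\dots,m'_{i,k-1})$.

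For the direction $(1)\Rightarrow(2)$, given $m'\in\mathsf M_n^{\omega,S}$, I would restrict to its first $k$ columns, forming $m\in\mathsf M_n^{k,S}$; since the columns of $m'$ lie in $S$, so do the first $k$ of them, hence $m\in\mathsf M_n^{k,S}$, and $f^\top[m']=f[m]\in S$ by~(1). For the converse $(2)\Rightarrow(1)$, given $m\in\mathsf M_n^{k,S}$ I would pad it out to an $n\times\omega$ matrix $m'$ by repeating an admissible column (for instance one of the columns of $m$, all of which lie in $S$) in positions $k,k+1,\dots$; then $m'\in\mathsf M_n^{\omega,S}$, and $f^\top[m']=f[m]$, so $f[m]\in S$ follows from~(2). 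The only mild subtlety here is checking that $S$ is nonempty so that an admissible padding column exists; if $S=\emptyset$ then $\mathsf M_n^{k,S}$ is empty and~(1) holds vacuously, while the padding argument is unnecessary, so this edge case is harmless. This padding-and-restriction bookkeeping is the step I expect to be the main, though routine, obstacle.

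Finally, to obtain $(Pol\,S)^\top=Pol^\top(S^\top)$, I would note that $f\in Pol\,S$ iff $f^\top\in Pol^\top(S^\top)$ by $(1)\Leftrightarrow(3)$, so $f^\top$ ranges exactly over $Pol^\top(S^\top)$ as $f$ ranges over $Pol\,S$, giving the claimed equality of sets of $\omega$-operations. The identity $(Inv\,f)^\top=Inv^\top(f^\top)$ is the dual reading of the same equivalences: $S\in Inv\,f$ iff $f\in Pol\,S$ iff (by $(1)\Leftrightarrow(2)$, now read with the $Inv^\top$ definition in mind) $S^\top\in Inv^\top(f^\top)$, so applying $(-)^\top$ to $Inv\,f$ yields exactly $Inv^\top(f^\top)$. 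I would record both identities as direct corollaries of the equivalence chain, without further computation.
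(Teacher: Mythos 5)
Your proposal is correct and follows essentially the same route as the paper: the equivalence $(2)\Leftrightarrow(3)$ is read off directly from Lemma~\ref{lem:4.9boh} with $\varphi=f^\top$, and $(1)\Leftrightarrow(2)$ is the same extension/restriction argument the paper uses (the paper phrases it via ``all extensions $p\in\mathsf M_n^{\omega,S}$ of $m$'', you phrase it via padding with a column of $m$ — the same thing). Your extra care about the empty-relation edge case and the explicit derivation of the two closing identities only make explicit what the paper leaves implicit.
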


\begin{proof}
  (2) $\Leftrightarrow$ (3) follows from Lemma \ref{lem:4.9boh}.

  (1) $\Leftrightarrow$ (2) For all 
  $m\in \mathsf M_n^{k,S}$, $f[m]\in S$  if and only if for all extensions
  $p\in \mathsf M_n^{\omega,S}$ of $m$, $f^\top[p]\in S$.
  \end{proof}

We conclude this section by rephrasing Proposition \ref{prop:b21} using the language of $\omega$-clones.
We begin with the following lemma, where we demonstrate that the operator $\cdot^\top$ is continuous with respect to the local topology.



\begin{lemma}\label{lem:toplocclosed} Let $F\subseteq O_A$ be a set of finitary operations closed under the equivalence $\approx$.
  If $F^\top$ is locally closed in the subspace  $O_A^\top$ of $O_A^{(\omega)}$, then $F$ is locally closed in $O_A$.
\end{lemma}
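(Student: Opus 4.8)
The plan is to run the argument through top extensions: starting from an operation $f$ in the local closure of $F$, I would produce $f^\top$ in the local closure of $F^\top$ inside the subspace $(O_A)^\top$, then invoke the assumed local closedness of $F^\top$ to get $f^\top\in F^\top$, and finally use the $\approx$-closure of $F$ to land back at $f\in F$.

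First I would record the structural reduction. Since $O_A$ carries the sum topology of the clopen pieces $A^{A^k}$, local closure is computed arity by arity, so it suffices to fix $k$ and show that $F^{(k)}$ is locally closed in $A^{A^k}$. Concretely I take $f:A^k\to A$ lying in the local closure $\overline{F^{(k)}}$; by the description of the local closure recalled in Section~\ref{sec:loctop} this means that for every finite $d\subseteq A^k$ there is $g\in F^{(k)}$ with $f_{|d}=g_{|d}$. The goal is then to deduce $f\in F$.

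The heart of the argument is to transport this condition to the $\omega$-setting. Given a finite set $D=\{s^1,\dots,s^m\}\subseteq A^\omega$, I would set $d=\{(s^j_0,\dots,s^j_{k-1}):1\leq j\leq m\}$, a finite subset of $A^k$. Because $f^\top(s^j)=f(s^j_0,\dots,s^j_{k-1})$ depends only on the first $k$ coordinates, any $g\in F^{(k)}$ agreeing with $f$ on $d$ satisfies $g^\top(s^j)=f^\top(s^j)$ for all $j$, so $f^\top_{|D}=g^\top_{|D}$ with $g^\top\in F^\top$. As $D$ was an arbitrary finite subset of $A^\omega$ and $f^\top\in(O_A)^\top$, this is precisely the statement that $f^\top$ lies in the local closure of $F^\top$ taken in the subspace $(O_A)^\top$.

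Finally, the hypothesis that $F^\top$ is locally closed in $(O_A)^\top$ forces $f^\top\in F^\top$, hence $f^\top=h^\top$ for some $h\in F$, i.e.\ $f\approx h$; since $F$ is closed under $\approx$, this yields $f\in F$, as required. I expect the only genuinely delicate point to be the passage from a finite $D\subseteq A^\omega$ to the finite $d\subseteq A^k$: it rests on the observation that a top extension inspects only finitely many coordinates, which collapses the richer family of finite subsets of $A^\omega$ onto finite subsets of $A^k$ and thereby makes the two local-closure conditions comparable. The sum-space reduction and the concluding appeal to $\approx$-closure are routine by comparison.
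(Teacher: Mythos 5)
Your proof is correct and follows essentially the same route as the paper's: given $f$ in the local closure of $F$, you pass each finite $D\subseteq A^\omega$ to the finite set of its first-$k$-coordinate restrictions in $A^k$, conclude $f^\top\in\overline{F^\top}\cap(O_A)^\top=F^\top$, and finish with the $\approx$-closure of $F$. The explicit sum-space reduction to $F^{(k)}$ is only a minor elaboration of what the paper does implicitly by choosing the approximant of the same arity.
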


\begin{proof} 
  Let $g$ be an $n$-ary operation in the local closure of $F$ in $O_A$. 
  We will prove that $g^\top$ belongs to the local closure of $F^\top$ in $O_A^{(\omega)}$.
 Let $d\subseteq_{\mathrm{fin}} A^\omega$ be arbitrary and $d'=\{ s_{|n} : s\in d\}$. There exists an $n$-ary operation $f\in F$ such that $f_{|d'}=g_{|d'}$. Since $(f^\top)_{|d}=(g^\top)_{|d}$,   by the arbitrariness of $d$ we obtain that $g^\top$ belongs to the local closure of $F^\top$ in $O_A^{(\omega)}$. Using the hypothesis that $F^\top$ is locally closed in the subspace  $O_A^\top$ of $O_A^{(\omega)}$,
 we deduce that $g^\top \in F^\top$, and consequently, $g \in F$ by considering that $F$ is closed under the equivalence $\approx$.
\end{proof}



\begin{proposition}\label{prop:4.9tris} Let $F \subseteq O_A$ be a clone. Then the following conditions are equivalent:
 \begin{enumerate}
\item $Pol(Inv\,F)=F$.
\item $F^\top$ is locally closed in the subspace $O_A^\top$ of $O_A^{(\omega)}$.
\item   $Pol^\top(Inv^\top\,F^\top)=F^\top$.
\end{enumerate}
\end{proposition}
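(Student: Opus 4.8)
The plan is to prove $(1)\Leftrightarrow(2)$ and $(1)\Leftrightarrow(3)$, which together give the full equivalence. Two of the four implications are essentially free. The implication $(2)\Rightarrow(1)$ is immediate: $F$ is a clone, hence closed under $\approx$, so Lemma \ref{lem:toplocclosed} converts local closedness of $F^\top$ in $(O_A)^\top$ into local closedness of $F$ in $O_A$, and Proposition \ref{prop:b21} then gives $Pol(Inv\,F)=F$. The equivalence $(1)\Leftrightarrow(3)$ is a formal translation resting on Proposition \ref{prop:4.9bis}, and the remaining implication $(1)\Rightarrow(2)$ carries the real content.

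For $(1)\Leftrightarrow(3)$ I would extend the identities $(Pol\,S)^\top = Pol^\top(S^\top)$ and $(Inv\,f)^\top = Inv^\top(f^\top)$ of Proposition \ref{prop:4.9bis} from single relations and operations to sets, writing $Pol$, $Inv$, $Pol^\top$, $Inv^\top$ of a set as an intersection over its members. This yields $Inv^\top(F^\top) = (Inv\,F)^\top$ and then $Pol^\top(Inv^\top\,F^\top) = (Pol(Inv\,F))^\top$, so that $(3)$ becomes $(Pol(Inv\,F))^\top = F^\top$. Since $Pol(Inv\,F)$ and $F$ are clones, hence closed under $\approx$, and since $(\cdot)^\top$ is injective on $\approx$-closed sets by the identity $(G^\top)_{\mathrm{fin}} = G$ of Lemma \ref{lem:clomegacl}, this is equivalent to $Pol(Inv\,F) = F$, i.e.\ $(1)$. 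The only subtlety is that $(\cdot)^\top$ is not injective on plain relations (a relation and its cylindrifications share a top extension); in checking $\bigcap_f (Inv\,f)^\top = (\bigcap_f Inv\,f)^\top$ I would pass to the minimal-arity representative of a given top extension and use the routine fact that cylindrification preserves invariance.

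The substance is $(1)\Rightarrow(2)$. By Proposition \ref{prop:b21}, $(1)$ means $F$ is locally closed in $O_A$, and the goal is $\overline{F^\top} \cap (O_A)^\top \subseteq F^\top$. Take $g^\top$ in this set with $g\colon A^n \to A$ (we may take $n\geq 1$); it suffices to show $g \in \overline F$, as then $g \in F$ and $g^\top \in F^\top$. Fix a finite $d' \subseteq A^n$ and lift each $t \in d'$ to the thread $\hat t$ with $\hat t_i = t_i$ for $i<n$ and $\hat t_i = t_0$ for $i \geq n$, putting $d = \{\hat t : t \in d'\}$. Since $g^\top \in \overline{F^\top}$, some $f^\top \in F^\top$ agrees with $g^\top$ on $d$, and by $\approx$-closure I take a representative $\bar f \in F$ of arity $N \geq n$. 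Folding the padding back into the clone, the composition $h = \bar f(p^{(n)}_0,\dots,p^{(n)}_{n-1},p^{(n)}_0,\dots,p^{(n)}_0)$, with $N-n$ trailing copies of $p^{(n)}_0$, lies in $F$, and $h(t) = \bar f(\hat t_0,\dots,\hat t_{N-1}) = f^\top(\hat t) = g^\top(\hat t) = g(t)$ for every $t \in d'$; so $h$ witnesses $g \in \overline F$ at $d'$.

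I expect this recovery step to be the main obstacle. The witness $f^\top$ furnished on the lifted set $d$ has a priori the wrong arity, and one must extract from it an $n$-ary operation still lying in $F$. Padding each tuple by a genuine coordinate rather than by a fixed constant is exactly what makes the extracted operation $h$ a clone composition of $\bar f$ with projections, and hence a member of $F$; padding by a constant would fail, since clones need not contain constant operations. This is the only point in the argument where the full clone structure, beyond mere closure under $\approx$, is used.
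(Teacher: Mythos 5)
Your proof is correct, but it closes the three-way equivalence along a different cycle than the paper. You share with the paper the implication $(2)\Rightarrow(1)$ (via Lemma \ref{lem:toplocclosed} and Proposition \ref{prop:b21}) and the translation $(1)\Leftrightarrow(3)$ resting on Proposition \ref{prop:4.9bis}; your extra care about the non-injectivity of $(\cdot)^\top$ on relations and its injectivity on $\approx$-closed sets of operations is exactly the right bookkeeping, and your appeal to $(G^\top)_{\mathrm{fin}}=G$ handles it. Where you genuinely diverge is in how the cycle is closed: the paper proves $(3)\Rightarrow(2)$ by showing that any $g^\top$ in the local closure of $F^\top$ inside $(O_A)^\top$ lies in $Pol^\top(Inv^\top F^\top)$, using that membership of $\varphi[m]$ in a top extension $S^\top$ is decided by finitely many rows of the matrix $m$ — i.e.\ it routes through the invariant relations. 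You instead prove $(1)\Rightarrow(2)$ directly and purely topologically: lift a finite $d'\subseteq A^n$ to threads padded by the coordinate $t_0$, extract an approximant $f^\top\in F^\top$ on the lifted set, and fold the padding back into an $n$-ary member $h=\bar f(p^{(n)}_0,\dots,p^{(n)}_{n-1},p^{(n)}_0,\dots,p^{(n)}_0)$ of $F$; this is in effect a proof of the converse of Lemma \ref{lem:toplocclosed} for clones, which the paper states only in one direction. Your observation that padding by a genuine coordinate rather than a constant is what keeps $h$ inside the clone is the correct and essential point. The paper's route keeps all the work inside the $Pol$--$Inv$ machinery; yours isolates a self-contained topological statement (for a clone $F$, local closedness of $F$ in $O_A$ and of $F^\top$ in $(O_A)^\top$ are equivalent) that the relational argument never makes explicit. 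Both are sound; yours is arguably more modular. The only minor point to make explicit is the reduction to $n\geq 1$ for nullary $g$, which you rightly dispatch via $\approx$-closure of $F$.
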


\begin{proof} (2) $\Rightarrow$ (1) By Lemma \ref{lem:toplocclosed} and Proposition \ref{prop:b21}.

  (3) $\Rightarrow$ (2) Let $g^\top$ be an element of the local closure of $F^\top$ in $O_A^{(\omega)}$. We show that $g^\top\in Pol^\top(Inv^\top\,F^\top)$. Let $S^\top\in Inv^\top(F^\top)$ be such that $S\subseteq A^k$. By  Proposition \ref{prop:4.9bis}  $(Inv\,F)^\top=Inv^\top(F^\top)$, so that $S^\top\in (Inv\, F)^\top$ and therefore
  $S\in Inv(F)$. Given  $m\in \mathsf M_\omega^{\omega,S^\top}$, there exists $f^\top\in F^\top$ such that
  $g^\top(m_i)=f^\top(m_i)$ for all $0\leq i\leq k-1$.  As $f^\top[m]\in S^\top$, then
  $(f^\top(m_0),\ldots,f^\top(m_{k-1}))=(g^\top(m_0),\ldots,g^\top(m_{k-1}))\in S$.
  It follows that $g^\top[m]\in S^\top$. We get the conclusion $g^\top\in Pol^\top(Inv^\top\,F^\top)=F^\top$.

  (1) $\Rightarrow$ (3) For every finitary relation $S$, by  Proposition \ref{prop:4.9bis} we have that  $f\in Pol\, S$ if and only if 
$f^\top\in Pol^\top S^\top$. The conclusion follows from  $(Inv\,F)^\top=Inv^\top(F^\top)$.
\end{proof}

\vspace{-0.5cm}
 \section{Topologies on function spaces}\label{sec:topo}
 In this section we introduce a method for defining topologies over sets of functions, and we
 provide some examples of such topologies, that will be studied in the rest of this work. The basic idea is that, for endowing $A^B$ with a topology, it is enough to  choose a suitable family $X$ of subsets of $B$.
 The only condition needed for this to give raise to a topology on $A^B$ is that $X$ must be closed under finite union. The topologies introduced in this section are a particular case of those defined in \cite[Section 1, p. 353]{BK14}, when $A$ and $B$ are discrete topological spaces. 
 \begin{definition}
 Let $B$ be a set. A \emph{union semilattice} $X$ on $B$ is a nonempty family of subsets of $B$ closed under finite union. 
  \end{definition}
 A union semilattice $X$ generates the (Boolean) ideal $X\!\!\downarrow =\{c\subseteq B : (\exists d\in X)\ c\subseteq d \}$.

 \begin{definition}
   Let $X$ be a union semilattice on  $B$ and  $C\subseteq A^B$ for a given set $A$.
   The {\em $X$-closure of $C$},  $\mathrm{Cl}_X(C)$,   is defined as follows:
   $$\mathrm{Cl}_X(C)=\{\varphi\in A^B\ |\  \forall d\in X\exists \psi\in C: \varphi_{|d}=\psi_{|d}\}.
   $$
 \end{definition}

We say that $C$ is {\em $X$-closed} if $C=\mathrm{Cl}_X(C)$.
 Given two sets $A$ and $B$ and a union semilattice $X$ on $B$, let
 $$\tau_X^A=\{O\subseteq A^B: A^B\setminus O \mbox{ is $X$-closed}\}.$$

\begin{lemma}\label{lem:X} Let $A$ be a set and $X, Y$ be union semilattices on a set  $B$. Then we have:
\begin{enumerate}
\item $\tau_X^A$ is a topology on $A^B$.
\item If $X \subseteq Y\!\!\downarrow$ then $\tau_X^A\subseteq \tau_{Y}^A$.
\item $X$ and $X\!\!\downarrow$ define the same topology.
\end{enumerate}

\end{lemma}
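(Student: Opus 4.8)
The plan is to treat $\tau_X^A$ through its closed sets, which by definition are exactly the $X$-closed subsets of $A^B$, i.e. the fixed points of the operator $\mathrm{Cl}_X$. So for (1) I would verify that the family of $X$-closed sets satisfies the closed-set axioms of a topology: it contains $\emptyset$ and $A^B$ and is stable under arbitrary intersection and finite union; the open-set axioms for $\tau_X^A$ then follow by complementation. I would first record two elementary properties of $\mathrm{Cl}_X$, both immediate from the definition: \emph{extensivity} ($C\subseteq\mathrm{Cl}_X(C)$, witnessed by taking $\psi=\varphi$) and \emph{monotonicity} ($C\subseteq D\Rightarrow\mathrm{Cl}_X(C)\subseteq\mathrm{Cl}_X(D)$). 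Then $A^B$ is trivially $X$-closed, and $\emptyset$ is $X$-closed because, $X$ being nonempty (equivalently $\emptyset\in X\!\!\downarrow$, as holds in all our examples), the defining condition $\forall d\in X\,\exists\psi\in\emptyset\ (\cdots)$ is unsatisfiable, so $\mathrm{Cl}_X(\emptyset)=\emptyset$. Stability under arbitrary intersection is then free from monotonicity: $\mathrm{Cl}_X(\bigcap_i C_i)\subseteq\mathrm{Cl}_X(C_j)=C_j$ for each $j$, hence $\mathrm{Cl}_X(\bigcap_i C_i)\subseteq\bigcap_i C_i$, and extensivity gives equality.

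The one substantial point in (1) — and the only place where the union-semilattice hypothesis is actually used — is stability under finite union, which I would obtain from the identity $\mathrm{Cl}_X(C\cup D)=\mathrm{Cl}_X(C)\cup\mathrm{Cl}_X(D)$. The inclusion $\supseteq$ is monotonicity. For $\subseteq$ I would argue contrapositively: if $\varphi\notin\mathrm{Cl}_X(C)$ and $\varphi\notin\mathrm{Cl}_X(D)$, choose witnesses $d_1,d_2\in X$ such that no element of $C$ (resp. of $D$) agrees with $\varphi$ on $d_1$ (resp. $d_2$), and set $d=d_1\cup d_2$, which lies in $X$ precisely because $X$ is closed under finite union. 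Then no element of $C\cup D$ can agree with $\varphi$ on $d$ (agreement on $d$ forces agreement on $d_1$ and on $d_2$), so $\varphi\notin\mathrm{Cl}_X(C\cup D)$. This is the heart of the argument; everything else is bookkeeping. Applying the identity to two $X$-closed sets $C,D$ gives $\mathrm{Cl}_X(C\cup D)=C\cup D$, completing (1). (Idempotency of $\mathrm{Cl}_X$, which would additionally show that $\mathrm{Cl}_X$ coincides with the topological closure, is an easy bonus via the same restriction-transitivity argument, but it is not needed for the lemma.)

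For (2) and (3) I would isolate a single comparison principle, \emph{antimonotonicity in the family}: if $X_1\subseteq X_2$ then $\mathrm{Cl}_{X_2}(C)\subseteq\mathrm{Cl}_{X_1}(C)$, since a condition quantified over the larger family $X_2$ implies the one over $X_1$. Next I would prove $\mathrm{Cl}_X=\mathrm{Cl}_{X\!\!\downarrow}$: one inclusion is antimonotonicity applied to $X\subseteq X\!\!\downarrow$, and for the other, given $\varphi\in\mathrm{Cl}_X(C)$ and any $d\in X\!\!\downarrow$, pick $d'\in X$ with $d\subseteq d'$; agreement of some $\psi\in C$ with $\varphi$ on $d'$ yields agreement on $d$, so $\varphi\in\mathrm{Cl}_{X\!\!\downarrow}(C)$. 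Hence $X$ and $X\!\!\downarrow$ have identical $X$-closed sets and therefore define the same topology, which is (3). Finally, for (2), from $X\subseteq Y\!\!\downarrow$ antimonotonicity gives $\mathrm{Cl}_Y(C)=\mathrm{Cl}_{Y\!\!\downarrow}(C)\subseteq\mathrm{Cl}_X(C)$; so if $C$ is $X$-closed then $C\subseteq\mathrm{Cl}_Y(C)\subseteq\mathrm{Cl}_X(C)=C$, making $C$ also $Y$-closed. Every $X$-closed set being $Y$-closed is exactly $\tau_X^A\subseteq\tau_Y^A$. I expect no serious obstacle beyond the finite-union step of (1); the main care point is the degenerate case $X=\emptyset$, which I sidestep by the standing convention that a union semilattice is nonempty (so that $\emptyset$ is $X$-closed and $A^B$ is open).
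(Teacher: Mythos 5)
Your proof is correct and follows essentially the same route as the paper's: the finite-union-of-witnesses argument for closure under binary unions is exactly the paper's proof of (1), and your antimonotonicity-plus-downset argument for (2) and (3) is the paper's one-line observation $C\subseteq\mathrm{Cl}_Y(C)\subseteq\mathrm{Cl}_X(C)$ spelled out. You are merely more explicit about the routine closed-set axioms (including the degenerate case $X=\emptyset$, which the paper silently excludes since all its ideals contain the local ideal $L$).
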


\begin{proof}
  (1)  Let $C_1,C_2$ be $X$-closed sets. We want to prove that their union $C_1\cup C_2$ is $X$-closed. 
 Suppose $\varphi\notin C_1\cup C_2$ but $\varphi\in \mathrm{Cl}_X(C_1\cup C_2)$. 
  Since $C_1$ and $C_2$ are $X$-closed and $\varphi\notin C_1\cup C_2$, there exist $d_1, d_2\in X$ such that $\varphi_{|d_1}\neq\psi_{|d_1}$ for every $\psi\in C_1$, and $\varphi_{|d_2}\neq\psi_{|d_2}$ for every $\psi\in C_2$.
   This contradicts the hypothesis that there exists  $\gamma\in C_1\cup C_2$ with $\varphi_{|d_1\cup d_2}=\gamma_{|d_1\cup d_2}$. 
   Furthermore, the intersection of an arbitrary family of $X$-closed sets is trivially $X$-closed.

(2) For every $C\subseteq A^B$, we have that $C\subseteq \mathrm{Cl}_Y(C) \subseteq \mathrm{Cl}_X(C)$. If $C$ is $X$-closed, then $C$ is $Y$-closed.

(3) It immediately follows from (2).
\end{proof}

According to Lemma \ref{lem:X}(3), any union-closed semilattice on $B$ defines an equivalent topology as its corresponding ideal. Therefore, from now on we will focus on ideals. 

This approach based on ideals allows us to recover well-known topologies on $A^B$, such as the  topology of pointwise convergence (referred to as the local topology in this work) and the uniform topology (refer to \cite{Mun}, Section 19). 
Furthermore, it provides a convenient framework for defining new topologies, particularly for studying (infinitary) $\omega$-clones.

\vspace{-0.5cm}
\section{Topologies on $O_A^{(\omega)}$}
An ideal $X$ on $A^\omega$ determines the so-called $X$-topology on $O_A^{(\omega)}$, as described in Section \ref{sec:topo}.  
In this section we provide four examples of $X$-topologies on $O_A^{(\omega)}$: the local, global, trace and uniform topologies, respectively.  These topologies  are defined by the following ideals: 
\begin{itemize}
\item The \emph{local ideal} $L=\{d\subseteq A^\omega : |d|<\omega\}$ defines the local topology; 
\item The \emph{global ideal} $G=\{d\subseteq A^\omega : |d|\leq\omega\}$ defines the global topology; 
\item 
  The \emph{trace ideal}  (see Section \ref{sec:tra} for the definition of trace)
  $$ T= \{ d\subseteq A^\omega:  |d|\leq\omega\ \text{and $\exists$ a compact trace $\mathsf c$ s.t.}\ d\subseteq \mathsf c\}$$ gives rise to the trace topology.
\item The \emph{uniform ideal}
$$ U=\{d\subseteq A^\omega : |d|\leq\omega\ \text{and $\forall$ basic trace $\mathsf c$},\ |\mathsf c\cap d|<\omega\}$$
determines the uniform topology.

\end{itemize}

%
%
%
%

We recall from Section \ref{sec:topo} that the local closure of $C$ is denoted by $\mathrm{Cl}_L(C)$, the global closure by $\mathrm{Cl}_G(C)$, the trace closure by $\mathrm{Cl}_T(C)$, and the uniform closure by $\mathrm{Cl}_U(C)$.

Let $C$ be a set of $\omega$-operations on $A$.
We recall from Lemma \ref{lem:X}(2) that if $X\subseteq Y$ are ideals, then $\mathrm{Cl}_Y(C)\subseteq \mathrm{Cl}_X(C)$ and $\mathrm{Cl}_X(C)$ is  $Y$-closed.

\begin{lemma}\label{lem:clos}
  Let $C$ be a set of $\omega$-operations on $A$. Then
  $\mathrm{Cl}_L(C)$ is  globally, trace, uniform closed, and $\mathrm{Cl}_U( C), \mathrm{Cl}_T( C)$ are  globally closed:  $$ \mathrm{Cl}_G(C)\subseteq \mathrm{Cl}_U( C) \cap \mathrm{Cl}_T( C)\subseteq \mathrm{Cl}_U( C) \cup \mathrm{Cl}_T( C) \subseteq \mathrm{Cl}_L(C).$$ 
\end{lemma}

\begin{proof} We have: $L\subseteq    T\cap  U \subseteq T\cup  U\subseteq  G$. 
\end{proof}

Notice that if $C$ is locally closed, then $C= \mathrm{Cl}_G(C)= \mathrm{Cl}_U( C) = \mathrm{Cl}_T( C) = \mathrm{Cl}_L(C)$.

The operation of  $X$-closure preserves the $\omega$-clones, if the ideal $X$ that determines the topology enjoys the property expressed in the following definition.

\begin{definition}\label{def:subst}
An ideal $X$ on $A^\omega$ is 
\begin{enumerate}
\item \emph{substitutive}  if for every $d\in X$, $n\in\omega$ and $\boldsymbol\varphi\in (O_A^{(\omega)})^n$,
we have $d_{\boldsymbol\varphi}=\{s[\varphi_0(s),\dots, \varphi_{n-1}(s)]\ |\ s\in d\}\in X$.
\item  \emph{infinitely substitutive} if for every $d\in X$ and $\boldsymbol\varphi\in (O_A^{(\omega)})^\omega$,
we have $d_{\boldsymbol\varphi}=\{(\varphi_i(s): i\in\omega)\ |\ s\in d\}\in X$.
\end{enumerate}
\end{definition}

Notice that an infinitely substitutive ideal is also substitutive, because one can choose $\varphi_k=\e_k$ for all $k\geq n$.

\begin{proposition}\label{prop:sub} Let $X$ be an ideal on $A^\omega$ and $C$ be a set of $\omega$-operations. Then we have:
\begin{enumerate}
\item If $X$ is substitutive and $C$ is an $\omega$-clone, then $\mathrm{Cl}_X(C)$ is an $\omega$-clone.
\item If $X$ is infinitely substitutive and $C$ is an infinitary $\omega$-clone, then $\mathrm{Cl}_X(C)$ is an infinitary $\omega$-clone.
\end{enumerate}
\end{proposition}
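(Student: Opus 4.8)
The plan is to show that $\mathrm{Cl}_X(C)$ contains the projections and is closed under the relevant composition operators, assuming $C$ already is (being an $\omega$-clone in part (1), an infinitary $\omega$-clone in part (2)). Containment of the projections $\e_n^A$ is immediate: each $\e_n^A$ lies in $C\subseteq \mathrm{Cl}_X(C)$, since $C\subseteq\mathrm{Cl}_X(C)$ always holds. So the whole content is the closure under the composition operators, and the defining property of a substitutive (resp. infinitely substitutive) ideal is precisely tailored to make this work.

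For part (1), I would take $\varphi,\psi_0,\dots,\psi_{n-1}\in\mathrm{Cl}_X(C)$ and show $q_n^A(\varphi,\psi_0,\dots,\psi_{n-1})\in\mathrm{Cl}_X(C)$. By definition of $X$-closure, I must exhibit, for an arbitrary $d\in X$, some element of $C$ agreeing with $q_n^A(\varphi,\psi_0,\dots,\psi_{n-1})$ on $d$. The key step is to apply substitutivity: given $d\in X$, each $\psi_i\in\mathrm{Cl}_X(C)$ gives, on that same $d$, an approximant in $C$; I would first replace the $\psi_i$ on $d$ by operations $\psi_i'\in C$ with $(\psi_i')_{|d}=(\psi_i)_{|d}$ (taking a common $d$ works because an ideal is closed under finite unions, so a single $d$ covering all the approximation requirements can be chosen). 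Then the set $d_{\boldsymbol\psi'}=\{s[\psi_0'(s),\dots,\psi_{n-1}'(s)] : s\in d\}$ belongs to $X$ by substitutivity, so $\varphi$ can be approximated by some $\varphi'\in C$ on $d_{\boldsymbol\psi'}$. Finally $q_n^A(\varphi',\psi_0',\dots,\psi_{n-1}')\in C$ because $C$ is an $\omega$-clone, and a direct unwinding of the definition $q_n^A(\varphi,\bar\psi)(s)=\varphi(s[\psi_0(s),\dots,\psi_{n-1}(s)])$ shows it agrees with $q_n^A(\varphi,\bar\psi)$ on all $s\in d$: the inner substitution matches because the $\psi_i'$ agree with $\psi_i$ on $d$, and the outer application of $\varphi'$ versus $\varphi$ matches because the arguments $s[\psi_0'(s),\dots,\psi_{n-1}'(s)]$ range over $d_{\boldsymbol\psi'}$, exactly where $\varphi'$ was chosen to agree with $\varphi$.

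Part (2) is the same argument with $n$ replaced by $\omega$: one uses infinite substitutivity to guarantee that $d_{\boldsymbol\psi'}=\{(\psi_i'(s):i\in\omega):s\in d\}\in X$, then approximates $\varphi$ there and invokes closure of the infinitary $\omega$-clone $C$ under $q^A$. Here I would first approximate the countably many $\psi_i$ on the single $d$; this is where some care is needed, since I need all countably many approximants to agree with their $\psi_i$ on the \emph{same} $d$, but this is exactly what $\mathrm{Cl}_X$ provides for each fixed $d$, so there is no difficulty — each $\psi_i\in\mathrm{Cl}_X(C)$ yields its own $\psi_i'\in C$ for that $d$.

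The step I expect to be the main obstacle — or at least the one demanding the most care — is verifying the agreement on $d$ of the two composites, because it requires chaining two separate approximation conditions across the substitution: agreement of the inner $\psi_i'$ on $d$, and agreement of $\varphi'$ on the \emph{shifted} domain $d_{\boldsymbol\psi'}$. The whole proof hinges on recognizing that substitutivity is exactly the hypothesis that makes $d_{\boldsymbol\psi'}$ land back in $X$, so that $\varphi$ can be approximated precisely where its argument lives. Once this alignment is seen, the calculation is routine and part (2) is a verbatim transcription of part (1) with the finitary composition operator $q_n^A$ and its finite argument tuple replaced by the infinitary $q^A$ and an $\omega$-tuple.
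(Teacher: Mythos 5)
Your proposal is correct and follows essentially the same route as the paper: fix $d\in X$, approximate the inner operations on $d$ by elements of $C$, use (infinite) substitutivity to place the shifted set of threads back in $X$, approximate the outer operation there, and chain the two agreements. The only cosmetic difference is that you apply substitutivity to the tuple of approximants while the paper applies it to the original tuple $\boldsymbol\psi$ (and your remark about needing finite unions to get a common $d$ is superfluous, since the definition of $\mathrm{Cl}_X$ already gives each approximant on the single given $d$); neither point affects correctness.
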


\begin{proof} We will prove the first statement. The proof of the second one is similar.
Let $\psi,\psi_0,\ldots,\psi_{n-1}\in \mathrm{Cl}_X(C)$. We want to show that $q_n^A(\psi,\boldsymbol\psi)\in \mathrm{Cl}_X(C)$.
Let $d \in X$. By the hypothesis, there exist $\varphi_i\in C$ for $0\leq i\leq n-1$ such that $(\psi_i)_{|d}= (\varphi_i)_{|d}$. Additionally, we have $d_{\boldsymbol\psi}\in X$ since $X$ is substitutive. This implies that there exists $\varphi\in C$ such that $\psi_{|d_{\boldsymbol\psi}}= \varphi_{|d_{\boldsymbol\psi}}$.
Now, for every $s\in d$, we have:
\[
\begin{array}{lll}
q_n^{A}(\psi,\boldsymbol\psi)(s)  &  = &  \psi(s[\boldsymbol \psi(s)])\\
& =&  \varphi(s[\boldsymbol \psi(s)])\ \text{by $s[\boldsymbol \psi(s)]\in d_{\boldsymbol\psi}$}\\
& =  &   \varphi (s[\boldsymbol \varphi(s)])\ \text{by $s\in d$}\\
& =  &   q_n^{A}(\varphi,\boldsymbol\varphi)(s).  \\
\end{array} 
\]
Therefore, $q_n^{A}(\psi,\boldsymbol\psi)_{|d}= q_n^{A}(\varphi,\boldsymbol\varphi)_{|d}$. Moreover, we know that $q_n^{A}(\varphi,\boldsymbol\varphi)\in C$ since $C$ is an $\omega$-clone.
Since $d$ was chosen arbitrarily from $X$, we can conclude that $q_n^A(\psi,\boldsymbol{\psi}) \in \mathrm{Cl}_X(C)$.
\end{proof}

Both the local ideal and the global ideal are examples of infinitely substitutive ideals, while the trace ideal and the uniform ideal are substitutive. Therefore, we can derive the following corollary.

\begin{corollary} \label{prop:44}
(i) If $C$ is an $\omega$-clone, then $\mathrm{Cl}_L(C)$, $\mathrm{Cl}_G(C)$, $\mathrm{Cl}_U( C)$ and $\mathrm{Cl}_T( C)$   are  $\omega$-clones.

(ii)   If $C$ is an infinitary $\omega$-clone, then $ \mathrm{Cl}_L(C)$ and $\mathrm{Cl}_G(C)$  are infinitary $\omega$-clones.

\end{corollary}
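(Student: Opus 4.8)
The plan is to obtain the corollary as a direct application of Proposition \ref{prop:sub}, once the substitutivity properties of the four ideals $L$, $G$, $T$, $U$ have been established. Recalling that $\overline C = \mathrm{Cl}_L(C)$, $\widetilde C = \mathrm{Cl}_G(C)$, $\vec C = \mathrm{Cl}_U(C)$ and $\widehat C = \mathrm{Cl}_T(C)$, part (i) reduces to checking that each of these four ideals is substitutive, and part (ii) reduces to checking that $L$ and $G$ are moreover infinitely substitutive.

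First I would dispose of the local and global ideals. For $d \in L$ (resp. $d \in G$) and any $\boldsymbol\varphi \in (O_A^{(\omega)})^\omega$, the set $d_{\boldsymbol\varphi}$ is the image of $d$ under the map $s \mapsto (\varphi_i(s) : i \in \omega)$, so $|d_{\boldsymbol\varphi}| \leq |d|$. As $L$ and $G$ are defined purely by a cardinality bound ($<\omega$ and $\leq\omega$ respectively), both are closed under taking such images, hence both are infinitely substitutive. Via Proposition \ref{prop:sub}(2) this settles part (ii), and it simultaneously yields the $\overline C$ and $\widetilde C$ cases of part (i).

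The heart of the argument is the substitutivity of $T$ and $U$, and here the key observation is that for $\boldsymbol\varphi \in (O_A^{(\omega)})^n$ and any thread $s$, the substituted thread $s[\varphi_0(s),\dots,\varphi_{n-1}(s)]$ differs from $s$ only in the finitely many coordinates $0,\dots,n-1$; hence $s[\boldsymbol\varphi(s)] \equiv_\omega s$, that is, the map $s \mapsto s[\boldsymbol\varphi(s)]$ carries every $\equiv_\omega$-class into itself. For $T$: if $d \subseteq \mathsf c$ with $\mathsf c$ a compact trace, then since $\mathsf c$ is $\equiv_\omega$-saturated we get $s[\boldsymbol\varphi(s)] \in \mathsf c$ for every $s \in d$, so $d_{\boldsymbol\varphi} \subseteq \mathsf c$; together with $|d_{\boldsymbol\varphi}| \leq |d| \leq \omega$ this gives $d_{\boldsymbol\varphi} \in T$. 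For $U$: since $s[\boldsymbol\varphi(s)]$ lies in the same basic trace as $s$, for any basic trace $\mathsf c$ we have $\mathsf c \cap d_{\boldsymbol\varphi} \subseteq \{s[\boldsymbol\varphi(s)] : s \in \mathsf c \cap d\}$, whence $|\mathsf c \cap d_{\boldsymbol\varphi}| \leq |\mathsf c \cap d| < \omega$; combined with the cardinality bound this yields $d_{\boldsymbol\varphi} \in U$. Applying Proposition \ref{prop:sub}(1) to $T$ and $U$ then completes part (i).

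The main obstacle, modest as it is, lies precisely in the trace and uniform cases: one must recognise that finite coordinate substitution does not move a thread out of its $\equiv_\omega$-class, and then exploit the fact that traces are by definition closed under $\equiv_\omega$. It is also instructive to record why the argument fails for part (ii) with $T$ and $U$: under the infinite substitution $s \mapsto (\varphi_i(s) : i \in \omega)$ the resulting thread need no longer be $\equiv_\omega$-equivalent to $s$, so neither containment in a fixed compact trace nor finiteness of intersections with basic traces is preserved. This is exactly why part (ii) is restricted to the local and global closures.
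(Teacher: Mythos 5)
Your proof is correct and follows exactly the route the paper takes: the corollary is stated there as an immediate consequence of Proposition \ref{prop:sub} together with the (unproved) assertion that $L$ and $G$ are infinitely substitutive while $T$ and $U$ are substitutive. Your verification of those substitutivity claims --- the cardinality argument for $L$ and $G$, and the observation that finite coordinate substitution preserves $\equiv_\omega$-classes for $T$ and $U$ --- correctly supplies the details the paper leaves implicit.
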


\begin{example}\label{exa:tilde3} The $\omega$-clone $P=\{\e_n^A: n\geq 0\}$ of all projections on $A$ is locally, globally, trace, and uniform closed.
\end{example}

\begin{proposition}\label{prop:ftop} Let $O_A$ be the set of all finitary operations on $A$ and $O_A^{(\omega)}$ be the set of all $\omega$-operations on $A$. Then we have:
  \begin{itemize}
  \item[(i)] $\mathrm{Cl}_L(O_A^\top)=O_A^{(\omega)}$.
  \item[(ii)] 
    $\mathrm{Cl}_U(O_A^\top)\subsetneq O_A^{(\omega)}$.

    
    \item[(iii)]  For a countable set $A$, we have $\mathrm{Cl}_T(F^\top) = F^\top = \mathrm{Cl}_G(F^\top)$ for every clone $F$ on $A$. Consequently, $\mathrm{Cl}_T(O_A^\top) = O_A^\top = \mathrm{Cl}_G(O_A^\top)$.
     \end{itemize}
\end{proposition}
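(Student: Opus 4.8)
The plan is to handle the three parts separately, since they rest on quite different ideas: (i) is a density argument, (ii) an explicit counterexample, and (iii) the substantial claim.

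\emph{Part (i).} Let $\varphi\in O_A^{(\omega)}$ and let $d=\{s^1,\dots,s^k\}\subseteq_{\mathrm{fin}}A^\omega$ be arbitrary. First I would note that, $d$ being finite, there is an $n$ for which the map $s\mapsto s_{|n}$ is injective on $d$: for each of the finitely many pairs of distinct threads pick a coordinate where they differ, and take $n$ to exceed all of them. Then define $f\colon A^n\to A$ by $f(s^j_{|n})=\varphi(s^j)$ (well defined by injectivity) and arbitrarily elsewhere, so that $(f^\top)_{|d}=\varphi_{|d}$. As $d$ was an arbitrary finite set, $\varphi\in\overline{(O_A)^\top}$, giving $\overline{(O_A)^\top}=O_A^{(\omega)}$.

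\emph{Part (ii).} Here I would assume $|A|\geq 2$, say $0,1\in A$ (for $|A|=1$ both sides are singletons and the inclusion cannot be strict, so this is the only interesting case). Fix a partition of $\omega$ into infinitely many infinite blocks $B_0,B_1,\dots$, and let $t^{(n)}$ be the characteristic thread of $B_n$. Any two of these differ on an infinite set, hence lie in pairwise distinct basic traces, so $d=\{t^{(n)}:n\in\omega\}$ meets every basic trace in at most one point and thus belongs to the uniform ideal $U$. The crucial observation is that any $f^\top$ with $f$ finitary is eventually constant along $(t^{(n)})$: if $f$ is $N$-ary, then $t^{(n)}_{|N}=0^N$ for all but the finitely many $n$ with $B_n\cap\{0,\dots,N-1\}\neq\emptyset$, whence $f^\top(t^{(n)})=f(0^N)$ eventually. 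Defining $\varphi(t^{(n)})=n\bmod 2$ (and extending $\varphi$ arbitrarily), no finitary $f^\top$ can agree with $\varphi$ on all of $d$, so $\varphi\notin\overrightarrow{(O_A)^\top}$ and the inclusion is strict.

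\emph{Part (iii).} By Lemma \ref{lem:clos} we have $F^\top\subseteq\widetilde{F^\top}\subseteq\widehat{F^\top}$, so it suffices to prove $\widehat{F^\top}\subseteq F^\top$; this is where countability of $A$ enters, since it makes every basic trace countable and hence a compact element of the trace ideal $T$. Fix $a\in A$ and let $\varphi\in\widehat{F^\top}$. Applying trace-closedness to $[a^\omega]_\omega\in T$ yields $f\in F$ with $\varphi_{|[a^\omega]_\omega}=(f^\top)_{|[a^\omega]_\omega}$. For an arbitrary thread $r$, apply trace-closedness to the compact trace $[a^\omega]_\omega\cup[r]_\omega\in T$ to get $h\in F$ agreeing with $\varphi$ on both classes. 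The key point is that $[a^\omega]_\omega$ realises every initial pattern: for $K$ larger than the arities of $f$ and $h$ and every $\mathbf c\in A^K$ the thread $a^\omega[\mathbf c]$ lies in $[a^\omega]_\omega$, so $f^\top=h^\top$ on $[a^\omega]_\omega$ forces the $K$-ary inflations of $f$ and $h$ to coincide, whence $f\approx h$ and $f^\top=h^\top$ everywhere. Reading this off on $[r]_\omega$ gives $\varphi(r)=f^\top(r)$, and since $r$ was arbitrary, $\varphi=f^\top\in F^\top$ (using that clones are closed under $\approx$ and Lemma \ref{lem:clomegacl}(1)). Taking $F=O_A$ gives the stated consequence.

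The main obstacle is the inclusion $\widehat{F^\top}\subseteq F^\top$ in (iii): one must upgrade the trace-by-trace agreement of $\varphi$ with finitary operations into a single global finitary operation. The mechanism that achieves this is that an entire basic trace is available as a test set in $T$ (by countability of $A$) and already exhibits all finite initial patterns, which rigidifies the approximant; this is exactly what fails in (ii), where each admissible test set meets every basic trace only finitely.
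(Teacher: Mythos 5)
Your proposal is correct and follows essentially the same route as the paper in all three parts: the same finite-separation construction for (i), the same kind of uniform-ideal witness for (ii) (countably many threads in pairwise distinct basic traces along which every finitary top extension is eventually constant), and the same rigidification mechanism for (iii), namely that a countable compact trace lies in the trace ideal and already realises all finite initial patterns, forcing the finitary approximants on different compact traces to coincide up to $\approx$. The only differences are cosmetic: the paper's counterexample in (ii) places the $b$'s at prime-power positions instead of using a partition of $\omega$ into infinite blocks, and its proof of (iii) compares approximants of minimal arity on nested compact traces $\mathsf c\subseteq\mathsf d$ rather than anchoring everything at $[a^\omega]_\omega$.
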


\begin{proof}
\begin{itemize}
  \item[(i)]  
Let $\varphi: A^\omega\to A$ be an $\omega$-operation.   To prove that $\varphi \in \mathrm{Cl}_L(O_A^\top)$, we proceed as follows. Let $d=\{d_0,\dots,d_{k-1}\}\subseteq_{\mathrm{fin}}A^\omega$.
We define $n$ as the least natural number such that $(d_i^0,\dots,d_i^{n-1}) \neq (d_j^0,\dots,d_j^{n-1})$ for every $0 \leq i\neq j \leq k-1$. Next, we define a finitary operation $f: A^n \to A$ as follows: Fix $a \in A$.
$$f(a_0,\dots,a_{n-1})=\begin{cases}\varphi(d_i)&\text{if $\exists i\in\{0,\dots,k-1\}$ s.t. $d_i=d_i[a_0,\dots,a_{n-1}]$}\\a&\text{otherwise.} \end{cases}$$ 
Thus $f^\top[d] = \varphi[d]$. By considering the arbitrary choice of $d$, we conclude that $\varphi \in \mathrm{Cl}_L(O_A^\top)$.

\item[(ii)]  
Let $a\neq b\in A$, and let $\psi\in O_A^{(\omega)}$ be the function defined as follows, for every $r\in A^\omega$:
\begin{equation}\psi(r) = \begin{cases} a&\text{if $|\{i : r_i=b\}|=0$}\\
b&\text{otherwise.}\end{cases}\end{equation}
Consider the matrix $m\in \mathsf M_\omega^\omega$ defined as follows: $m_0=a^\omega$ and, for $i>0$,
\begin{equation} m^j_i= \begin{cases} b&\text{if $j$ is a power of the $i$-th prime number}\\
a& \text{otherwise.}\end{cases}\end{equation}
The set of rows of $m$ belongs to the uniform ideal $U$, because no pair of rows is included in the same basic trace. Therefore, any compact trace can contain only a finite number of rows of $m$.

\noindent We observe that $\psi[m]= ab^\omega$. However, there is no finitary operation $f\in O_A$ such that $f^\top[m]=ab^\omega$. Consequently, we conclude that $\psi\not\in \mathrm{Cl}_U(O_A^\top)$.

\item[(iii)] 
By the given hypothesis that every compact trace is countable, we will now prove that $\mathrm{Cl}_T(F^\top)= F^\top$.

Let $\varphi \in \mathrm{Cl}_T(F^\top)$, and let $\mathsf c\subseteq \mathsf d$ be nonempty compact  traces. Since $\mathsf c$ and  $\mathsf d$ are countable, they belong to the trace ideal $T$. Then there exist $f\in F$ of minimal arity $k$ and $g\in F$ of minimal arity $n$ such that $\varphi_{|\mathsf c}= (f^\top)_{|\mathsf c}$ and $\varphi_{|\mathsf d}= (g^\top)_{|\mathsf d}$. Since $\mathsf c\subseteq \mathsf d$, it follows that $k\leq n$.

We will now show that $f^\top=g^\top$. Let $s\in A^\omega$ and $u\in\mathsf c$. Then, $u[s_0,\ldots,s_{n-1}]\in \mathsf c$, and
$g^\top(s)=g(s_0,\ldots,s_{n-1})
=g^\top(u[s_0,\ldots,s_{n-1}])
=\varphi(u[s_0,\ldots,s_{n-1}])
=f^\top(u[s_0,\ldots,s_{n-1}])
=f(s_0,\ldots,s_{k-1})
=f^\top(s)$.
From the arbitrariness of $\mathsf d$, it follows that $\varphi = f^\top$. Therefore, $\varphi \in F^\top$, and we conclude that $\mathrm{Cl}_T(F^\top)= F^\top$.

Next, we will prove that $\mathrm{Cl}_G(F^\top)= F^\top$. The conclusion follows from Lemma \ref{lem:clos} (which states, in particular, that for all $C\subseteq O_A^{(\omega)}$, $\mathrm{Cl}_G(C)\subseteq\mathrm{Cl}_T(C)$) and from $\mathrm{Cl}_T(F^\top)= F^\top$.
  \qedhere
\end{itemize}
\end{proof}

\section{Polymorphisms and invariant relations}
In the classical approach to polymorphisms and invariant relations, topology, especially local topology, plays an important role in the set $O_A$ of finitary operations. However, it seems to have no role in the set $Rel_A$ of finitary relations. The reason is that while the exponent $A^k$ in $A^{A^k}$ is generally infinite, the exponent $n$ in $A^n$ is always finite. Therefore, the local topology on $A^{A^k}$ is generally non-trivial, while the local topology on $A^n$ is discrete. 

In the infinitary case, topology plays an important role not only in the set $O_A^{(\omega)}$ of $\omega$-operations but also in the set $A^\omega$.
The framework we will develop in this section enables us to define, given an ideal $X$ on $A^\omega$, an ideal $\mathcal X(m)$ on  $\alpha\leq\omega$, which depends on a matrix $m\in\mathsf M^\omega_\alpha$. As described in Section \ref{sec:topo}, the ideal $\mathcal X(m)$ defines a topology on $A^\alpha$. These parametric topologies will be used to define the concepts of polymorphism and invariant relation, both of which are parametric with respect to $X$.
 
 In the following we assume that every ideal $X$ on $A^\omega$ satisfies the following condition:
$$L \subseteq X\subseteq G,$$
where $L$ and $G$ are the local ideal and the global ideal respectively.

\subsection{Topologies on $A^\omega$}\label{sec:topaalpha}


In the following we assume without mention that every ideal $I$ on $\alpha$ ($\alpha\leq\omega$) satisfies the following condition:
\begin{equation}\label{eq:ide} I \supseteq \mathcal P_{\mathrm{fin}}(\alpha).\end{equation}
We denote by $\mathrm{Ide}(\alpha)$ the set of ideals $I$ on $\alpha$ satisfying (\ref{eq:ide}). As a matter of notation, we define $\mathrm{Ide}=\bigcup_{\alpha\leq\omega}\mathrm{Ide}(\alpha)$.

Let us introduce a general framework to describe topologies on $A^\alpha$ for $\alpha\leq\omega$.

\begin{definition} Let $\mathsf M^\omega$ denote the set of all matrices on $A$ with $\omega$ columns. We say that a
   function  $\mathcal I: \mathsf M^\omega\to \mathrm{Ide}$ is an {\em ideal map} if for each  $m\in \mathsf M_\alpha^{\omega}$, we have $\mathcal{I}(m)\in \mathrm{Ide}(\alpha)$.
\end{definition}

The set of all ideal maps is denoted by $\mathsf{I}_A$. We define a partial ordering on $\mathsf{I}_A$: $\mathcal{I} \leq \mathcal{I}'$ if $\mathcal{I}(m) \subseteq \mathcal{I}'(m)$ for all $m$. The set $\mathsf{I}_A$ with this ordering forms a complete lattice, with the bottom element being the ideal map $\bot$ defined by $\bot(m) = \mathcal{P}_{\mathrm{fin}}(\alpha)$ for every $m \in \mathsf{M}_\alpha^\omega$.

As described in Section \ref{sec:topo}, the ideal $\mathcal{I}(m)$ defines a topology on $A^\alpha$, which we henceforth refer to as the $\mathcal{I}_m$-topology. When $\alpha$ is finite, this topology is discrete because $\mathrm{Ide}(\alpha) = \{\mathcal{P}(\alpha)\}$. However, if $\alpha = \omega$, the $\mathcal{I}_m$-topology is not necessarily discrete. The closure of $R \subseteq A^\alpha$ under the $\mathcal{I}_m$-topology is denoted by $\mathrm{Cl}_{\mathcal{I}_m}(R)$.

\bigskip

Every ideal $X$ on $A^\omega$ determines an ideal map $\mathcal X$ defined as follows for every  $m\in \mathsf M_\alpha^{\omega}$: $$\mathcal X(m) = \{ c\subseteq\alpha: \{m_i: i\in c\} \in X\}.$$

The ideal maps of the form $\mathcal X$ for an   ideal $X$ on $A^\omega$ are  called {\em canonical}. The set of canonical ideal maps is denoted by $\mathsf C_A$.

\begin{remark}
 (1)  Let $G$ be the global ideal on $A^\omega$ and  $\mathcal G$ be its  canonical ideal map. For every $m\in \mathsf M_\alpha^\omega$,  $\mathcal G(m) = \mathcal P(\alpha)$, so that  the $\mathcal G_m$-topology is the discrete topology on $A^\alpha$ for every $\alpha \leq \omega$.

 (2)   Let $L$ be the local  ideal on $A^\omega$ and $m\in \mathsf M^\omega_\omega$. The ideal $\mathcal L(m)$  may contain infinite subsets. For example, if $c\subseteq \omega$, $|c|=\omega$ and $|\{m_i : i\in c\}|<\omega$, then $c\in \mathcal L(m)$. In this case, the $\mathcal L_m$-topology on $A^\omega$ is different from the local topology on $A^\omega$ defined in Section \ref{sec:loctop}.
 
\end{remark}

The following lemma highlights the lattice structure and closure properties of $\mathsf{C}_A$ and its relationship with $\mathsf I_A$.

\begin{lemma}\label{lem:canon}
  \begin{itemize}
  
  \item[(i)] The set $\mathsf{C}_A$ forms a complete lattice, and for any $\Gamma \subseteq \mathsf{C}_A$, the infimum (meet) of $\Gamma$ in $\mathsf{C}_A$ is the same as its infimum in $\mathsf{I}_A$. The ideal map $\mathcal{L}$ is the bottom element of $\mathsf{C}_A$.


  \item[(ii)]   The operator $(-)^+:\mathsf I_A \to \mathsf C_A$, defined by $$\mathcal I^+ =\bigwedge \{\mathcal X:\ \text{$X$ is an ideal on $A^\omega$ and $\mathcal I\leq \mathcal X$}\},$$
    is a closure operator.
  \end{itemize}
  \end{lemma}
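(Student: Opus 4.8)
The plan is to treat (i) as a statement about meet-subsemilattices. Since $\mathsf I_A$ is already known to be a complete lattice, it suffices to show that $\mathsf C_A$ is closed under arbitrary meets computed in $\mathsf I_A$ (the empty meet included): a subset of a complete lattice that is closed under all meets is itself a complete lattice whose meets agree with those of the ambient lattice, while its joins are recovered as meets of upper bounds. Thus the whole content of (i) reduces to a single computation, namely that the meet in $\mathsf I_A$ of a family of canonical ideal maps is again canonical.

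Concretely, given $\Gamma = \{\mathcal X_j : j \in J\} \subseteq \mathsf C_A$ with each $X_j$ an ideal on $A^\omega$, I would set $X = \bigcap_{j\in J} X_j$ (reading the empty intersection as $G$). First I check that $X$ is again an ideal on $A^\omega$ with $L \subseteq X \subseteq G$: it is downward closed and closed under finite unions because each $X_j$ is, and $L \subseteq X_j \subseteq G$ for every $j$ forces $L \subseteq X \subseteq G$. Then, straight from the definition $\mathcal X_m = \{c \subseteq \alpha : \{m_i : i \in c\} \in X\}$, one gets $\mathcal X_m = \bigcap_{j} (\mathcal X_j)_m$ for every $m \in \mathsf M_\alpha^\omega$, so $\mathcal X = \bigwedge \Gamma$ and $\mathcal X \in \mathsf C_A$. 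The empty meet is handled by $X = G$, whose canonical map $\mathcal G$ sends every $m$ to $\mathcal P(\alpha)$ and is the top of $\mathsf I_A$. This establishes that $\mathsf C_A$ is a complete lattice with meets inherited from $\mathsf I_A$. For the bottom element I note that the local ideal $L$ is the least ideal on $A^\omega$ (every admissible $X$ satisfies $L \subseteq X$), hence $\mathcal L \leq \mathcal X$ for every canonical $\mathcal X$; equivalently $\bigcap\{X : X \text{ an ideal on } A^\omega\} = L$ gives $\mathcal L = \bigwedge \mathsf C_A$, so $\mathcal L$ is the bottom of $\mathsf C_A$.

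For (ii), the first point is that $(-)^+$ genuinely lands in $\mathsf C_A$: the family $\{\mathcal X : \mathcal I \leq \mathcal X\}$ is a nonempty (it contains $\mathcal G$) subset of $\mathsf C_A$, so by (i) its meet lies in $\mathsf C_A$. Then I verify the three axioms of a closure operator purely formally. Extensivity $\mathcal I \leq \mathcal I^+$ holds because $\mathcal I$ is a lower bound of the family whose meet defines $\mathcal I^+$. Monotonicity follows because $\mathcal I \leq \mathcal J$ gives $\{\mathcal X : \mathcal J \leq \mathcal X\} \subseteq \{\mathcal X : \mathcal I \leq \mathcal X\}$, and the meet of the smaller family is the larger element, whence $\mathcal I^+ \leq \mathcal J^+$. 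For idempotency I use the key observation that every canonical $\mathcal Y \in \mathsf C_A$ is a fixed point: $\mathcal Y$ itself belongs to, and is the least element of, $\{\mathcal X : \mathcal Y \leq \mathcal X\}$, so $\mathcal Y^+ = \mathcal Y$. Applying this to $\mathcal Y = \mathcal I^+$, which lies in $\mathsf C_A$, yields $(\mathcal I^+)^+ = \mathcal I^+$.

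I expect no genuine obstacle; the argument is essentially formal order theory once the single computation identifying the $m$-component of the canonical map of $\bigcap_j X_j$ with $\bigcap_j (\mathcal X_j)_m$ is in place. The only points deserving care are bookkeeping ones: confirming that intersections of ideals stay inside the band $[L,G]$ and remain ideals (downward closed and closed under finite union), and treating the empty meet (top element $\mathcal G$) separately, so that closure under \emph{arbitrary} meets, and not merely nonempty ones, is genuinely established.
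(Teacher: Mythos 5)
Your argument is correct. The paper states Lemma \ref{lem:canon} without proof, so there is nothing to diverge from; your route is the evidently intended one. The single substantive step is the computation that for $X=\bigcap_{j}X_j$ one has $\mathcal X_m=\bigcap_j(\mathcal X_j)_m$ for every $m$, which you verify correctly (together with the routine checks that an intersection of ideals on $A^\omega$ is an ideal lying in the band between $L$ and $G$, and that the empty meet is $\mathcal G\in\mathsf C_A$); everything else, including the bottom element $\mathcal L$ and the three closure-operator axioms for $(-)^+$, is then formal order theory, exactly as you say.
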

Notice that $\bigvee_{\mathsf I_A} \Gamma\leq\bigvee_{\mathsf C_A} \Gamma $  and in general $\bigvee_{\mathsf C_A} \Gamma\neq\bigvee_{\mathsf I_A} \Gamma $.

\begin{remark}

  The minimal ideal map $\bot$ of $\mathsf I_A$ is not canonical. We have that
  $\bot^+=\mathcal L$, where $\mathcal L$ is the canonical ideal map determined by the local ideal $L$ on $A^\omega$.
 \end{remark}

\subsection{Interlude: the categorical approach}\label{sec:int}
This section  shows that the
notion of canonical ideal maps is natural in the sense that these maps  appear as
the object part of some functor in suitably defined categories. 
We define the categories
$\mathbb M$ of matrices over $A$ and $\mathbb{I}de$ of ideals over $\alpha\leq\omega$ in such a way that the canonical ideal maps turn out to be exactly the object parts of presheaves on $\mathbb M$, i.e. of functors
$\mathbb M^{\mathsf{op}}\to \mathbb{I}de\hookrightarrow {\mathbf{Sets}}$.

In the last part of the section, rather exploratory,  we address the question of  whether these functors are sheaves for a suitably defined Grothendieck topology on $\mathbb M$.


 If $m\in \mathsf M^\omega_\alpha$ is a matrix, then we write $dim(m)=\alpha$.

\begin{definition}
The category $\mathbb M$ has the set $\mathsf M^\omega$ of matrices on $A$ as set of objects, and for $m,p\in\mathsf M^\omega$, $R\in \mathbb M(m,p)$ if  $R\subseteq dim(m)\times dim(p)$ is a binary relation  such that $m_i=p_j$ for every $(i,j)\in R$.
\end{definition}
The composition and identities  in $\mathbb M$ are the relational ones.

Given a binary relation $R\subseteq A\times B$, let $f_R:\mathcal P(A)\to \mathcal P(B)$ be the function defined by
$f_R(Y)=\{b\in B:\exists a \in Y \ (a,b)\in R\}$, for every $Y\subseteq A$.

\begin{definition}
  The category $\mathbb{I}de\subseteq\mathrm{\bf Sets}$ has the ideals in
  $\bigcup_{\alpha\leq \omega}\mathrm{Ide}(\alpha)$ as objects, and
  for $ I\in \mathrm{Ide}(\alpha)$, $ J\in \mathrm{Ide}(\beta)$,
  $\mathbb{I}de( I,  J)=\{(f_R)_{| I}: R\subseteq \alpha\times\beta \text{ and } \forall c\in  I\ f_R(c)\in J\}$.
\end{definition}
The identities and the composition in $\mathbb{I}de$ are those of $\mathrm{\bf Sets}$.

\begin{proposition}\label{prop:can}
 An ideal map is canonical if and only if it is the object part of a functor from $\mathbb M^{op}$ into $\mathbb{I}de$.
\end{proposition}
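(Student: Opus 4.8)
The plan is to prove the two implications separately by exhibiting an explicit presheaf in one direction and reconstructing the ideal $X$ in the other. For the direction ``canonical $\Rightarrow$ presheaf'', let $\mathcal X$ be the canonical ideal map determined by an ideal $X$ on $A^\omega$. I would define a functor $F:\mathbb M^{op}\to\mathbb{I}de$ by $F(m)=\mathcal X_m$ on objects and, for a morphism $R\in\mathbb M(m,p)$, by $F(R)=(f_{R^{-1}})_{|\mathcal X_p}$, where $R^{-1}\subseteq dim(p)\times dim(m)$ is the converse relation. The first thing to check is that $F(R)$ is actually a morphism of $\mathbb{I}de$, i.e.\ that $f_{R^{-1}}$ carries $\mathcal X_p$ into $\mathcal X_m$: if $d\in\mathcal X_p$ then $\{p_j:j\in d\}\in X$, and for each $i\in f_{R^{-1}}(d)$ there is $j\in d$ with $(i,j)\in R$, hence $m_i=p_j$, so $\{m_i:i\in f_{R^{-1}}(d)\}\subseteq\{p_j:j\in d\}$; since $X$ is downward closed this gives $f_{R^{-1}}(d)\in\mathcal X_m$. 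Functoriality is then routine: $(\mathrm{id}_m)^{-1}$ is the diagonal so $F(\mathrm{id}_m)=\mathrm{id}$, while $(R'\circ R)^{-1}=R^{-1}\circ R'^{-1}$ together with $f_{S\circ T}=f_S\circ f_T$ yields the contravariant composition law. Thus $\mathcal X$ is the object part of the presheaf $F$.

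For the converse, suppose $\mathcal I$ is the object part of some functor $F:\mathbb M^{op}\to\mathbb{I}de$, and write $F(R)=(f_{S_R})_{|\mathcal I_p}$ for the relation $S_R$ underlying the transition map of $R\in\mathbb M(m,p)$. I would set $X=\{\{m_i:i\in c\}:m\in\mathsf M,\ c\in\mathcal I_m\}\subseteq\mathcal P(A^\omega)$ and verify that $X$ is an ideal with $L\subseteq X\subseteq G$: every member is a countable set of threads since $c\subseteq dim(m)\leq\omega$, giving $X\subseteq G$; finite sets of threads lie in $X$ because $\mathcal P_{\mathrm{fin}}(dim(m))\subseteq\mathcal I_m$, giving $L\subseteq X$; downward closure of $X$ follows from downward closure of each $\mathcal I_m$ (shrink $c$ to the indices hitting the smaller set of threads); and closure under finite union follows from union-closure of each $\mathcal I_m$ after realising two given members as row-sets of index sets inside a single matrix. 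It then remains to prove $\mathcal I=\mathcal X$, and the inclusion $\mathcal I_m\subseteq\mathcal X_m$ is immediate from the definition of $X$.

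The reverse inclusion $\mathcal X_m\subseteq\mathcal I_m$ is the heart of the matter, and I expect it to be the main obstacle. Suppose $\{m_i:i\in c\}\in X$, say $\{m_i:i\in c\}=\{p_j:j\in d\}$ with $d\in\mathcal I_p$; choosing for each $i\in c$ some $\pi(i)\in d$ with $m_i=p_{\pi(i)}$, the graph $R=\{(i,\pi(i)):i\in c\}$ is a morphism $m\to p$ in $\mathbb M$ with $f_{R^{-1}}(d)=c$. If one knew that the transition map is the pullback, or merely that $R^{-1}\subseteq S_R$, then $c=f_{R^{-1}}(d)\subseteq F(R)(d)\in\mathcal I_m$, whence $c\in\mathcal I_m$ by downward closure and we are done. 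Thus the crux is a \emph{rigidity} statement: any functor's transition map must dominate the pullback $f_{R^{-1}}$. I would establish this from three facts. First, a morphism of $\mathbb{I}de$ determines its underlying relation, because every ideal contains all singletons and the value on $\{j\}$ recovers the column $S_R[j]=\{i:(j,i)\in S_R\}$. Second, $F$ preserves identities and the ``zero'' morphisms that factor through the empty matrix ($dim=0$), whose image is forced to be the constant-$\emptyset$ map. Third, and most importantly, $F$ is equivariant for the automorphism groups of constant matrices such as $s^\omega$, whose automorphism group in $\mathbb M$ is all of $\mathrm{Sym}(\omega)$. I expect this last point to carry the real weight: combining $\mathrm{Sym}(\omega)$-equivariance with the retraction/section relations linking a constant matrix to its one-row quotient should force the columns $R^{-1}[j]$ to appear in $S_R$, ruling out the degenerate assignments that would otherwise let a non-canonical ideal map (for instance $\bot$, whose associated pullback maps fail to preserve finiteness) masquerade as the object part of a functor.
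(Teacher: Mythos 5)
Your forward direction coincides with the paper's proof. For the converse you construct the same candidate ideal $X=\{\{m_i : i\in c\} : m\in\mathsf M,\ c\in\mathcal I_m\}$ that the paper calls $J$, and the inclusion $\mathcal I_m\subseteq\mathcal X_m$ is indeed free; but the two remaining steps --- closure of $X$ under binary unions and the inclusion $\mathcal X_m\subseteq\mathcal I_m$ --- both hinge on transporting an index set $c\in\mathcal I_p$ along an arrow $R\in\mathbb M(m,p)$ to $f_{R^{-1}}(c)\in\mathcal I_m$, and this is available only if the functor's transition map at $R$ is (or at least dominates) $(f_{R^{-1}})_{|\mathcal I_p}$. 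You have correctly isolated this as the crux, but your ``rigidity'' argument is a plan, not a proof: the first two observations (morphisms of $\mathbb{I}de$ determine their underlying relations via singletons; identities are rigid) are sound, while the decisive step --- extracting $R^{-1}\subseteq S_R$ from $\mathrm{Sym}(\omega)$-equivariance on constant matrices and retraction/section identities --- is explicitly left as an expectation. Note also that the union-closure of $X$, which you handle in a single clause, already needs the same unproven transport, so even the claim that $X$ is an ideal is not yet established. As written, the converse direction is incomplete.

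For comparison, the paper's own proof builds exactly the arrows you would need ($R_i\colon w\to m_i$ selecting $c_i$, and $S\colon m\to p$ matching $c$ with $d$) and then simply writes $\mathcal F_{R_i}(c_i)=f_{R_i^{-1}}(c_i)$ and $\mathcal F_S(d)=c$; that is, it tacitly identifies the transition maps of an arbitrary functor with the canonical pullbacks $(f_{R^{-1}})_{|\mathcal F_p}$. Under that reading of the statement (object part of a functor whose morphism part is $R\mapsto(f_{R^{-1}})_{|\mathcal F_p}$) no rigidity lemma is needed --- the mere fact that each $(f_{R^{-1}})_{|\mathcal F_p}$ is a morphism of $\mathbb{I}de$ is already the required transport --- and the parts of your argument that are complete finish the proof. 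Under the literal reading (``some functor''), you have put your finger on a step that the paper does not justify either; to close the argument you must either prove the rigidity lemma in full or restrict the quantification over functors as above.
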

\begin{proof}

  ($\Rightarrow$)
  Let $\mathcal X$ be the canonical ideal map associated to the ideal $X$ on $A^\omega$. We extend this ideal map to a functor  $\mathcal X : \mathbb M^{op}\to \mathbb{I}de$ as follows: for $R\in \mathbb M (m,p)$, $\mathcal X(R)\in \mathbb{I}de(\mathcal X(p), \mathcal X(m))$ is defined by $\mathcal X(R)= (f_{R^{-1}})_{|\mathcal X(p)}$. Indeed, if $d\in \mathcal X(p)$, then
  $\{m_i: i\in f_{R^{-1}}(d)\}\subseteq \{p_j:j\in d\}\in X$, since $\forall (i,j)\in R\ m_i=p_j$,  hence, $f_{R^{-1}}(d)\in \mathcal X(m)$.

 ($\Leftarrow$)  Let $\mathcal F$ be a functor from $\mathbb M^{op}$ into $\mathbb{I}de$.
We define $$J=\{t\subseteq A^\omega\ |\ \exists m\in\mathsf M^\omega\ \exists c\in \mathcal F(m):  t=\{ m_i: i\in c\}\}$$
and we demonstrate that $J$ is an ideal on $A^\omega$ such that $\mathcal J(m)= \mathcal F(m)$ for every $m\in\mathsf M^\omega$.
 Since the ideal $\mathcal F(m)$ is down-closed for every matrix $m$, then $J$ is down-closed.

 Let $t_i\in J$ for $i=1,2$. We will prove that $t_1\cup t_2\in J$. There exist a matrix $mi\in\mathsf M_{\alpha_i}^\omega$ and a set $c_i\in \mathcal F(mi)$ such that $t_i=\{ (mi)_j: j\in c_i\}$. Let $w\in\mathsf M_\omega^\omega$ be any matrix such that $\mathrm{row}(w)=t_1\cup t_2$.
 Let $R_i: w\to mi$ ($i=1,2$) be an arrow in $\mathbb M$ such that $R_i=\{(k,j): w_k= (mi)_j\ \text{and}\ j\in c_i\}$. Therefore, $\{w_k:\exists j \ (k,j)\in R_i\}=t_i$.
Hence, $\mathcal F(R_i): \mathcal F(mi)\to\mathcal F(w)$ is an arrow in the category $\mathbb{I}de$ such that $\mathcal F(R_i)(c_i)= f_{R_i^{-1}}(c_i)=\{n\in\omega: w_n\in t_i\}\in \mathcal F(w)$.
Since $\mathcal F(w)$ is an ideal,  we have that $f_{R_1^{-1}}(c_1)\cup f_{R_2^{-1}}(c_2)\in \mathcal F(w)$ and
$t_1\cup t_2 =\{w_k: k \in f_{R_1^{-1}}(c_1)\cup f_{R_2^{-1}}(c_2)\}\in J$.
We have concluded the proof that $J$ is an ideal on $A^\omega$.

We now show that the ideal map $\mathcal J$ determined by the ideal $J$ on $A^\omega$ satisfies the following condition: $\mathcal J(m)=\mathcal F(m)$ for every matrix $m\in \mathsf M^\omega$. First, $\mathcal J(m)\supseteq \mathcal F(m)$ because  given $c\in\mathcal F(m)$, the set $t=\{ m_i:i\in c\}\in J$, and hence
$c\subseteq \{j: m_j\in t \}\in \mathcal J(m)$.
Concerning the other inclusion, 
let $c\in \mathcal J(m)$. Therefore, $t=\{m_j: j\in c\}\in J$ and, by definition of $J$,
$\exists p\in\mathsf M^\omega$
$\exists d\in \mathcal F(p)$ such that $\{ p_i: i\in d\}= \{m_j: j\in c\}\in J$.
Let $S: m\to p$ be an arrow such that $S=\{(i,j): m_i=p_j, i\in c, j\in d\}$. Then, we have $\mathcal F(S): \mathcal F(p)\to\mathcal F(m)$ with $\mathcal F(S)(d)=c\in \mathcal F(m)$.
  \end{proof}

 The functor $\mathcal X:\mathbb M^{\mathrm {op}}\to \mathrm{\bf Sets}$
defined in the previous proposition is a presheaf for every  ideal $X$ on $A^\omega$.

In the remaining part of this section we investigate the possibility that these presheaves are actually sheaves.
 
Consider the functor $\mathbf y:\mathbb M\to \mathrm{\bf Sets}^{\mathbb M^{\mathrm{op}}}$ from the category $\mathbb M$ to the category of contravariant functors on $\mathbb M$. It is defined as follows:
$$\mathbf y(p)= \mathbb M(- ,p): \mathbb M^{\mathrm{op}} \to \mathrm{\bf Sets},$$
where $\mathbb M(m ,p)$ is the set of all arrows from $m$ to $p$.
 The functor $\mathbf y$ is well known and called the Yoneda embedding (see p.25 in \cite[Chapter I(1)]{MM92}). 
 
We recall that a \emph{sieve} $S$ on $p\in\mathsf M^\omega$ is a subfunctor of the Hom-functor $\mathbf y(p): \mathbb M^{\mathrm{op}} \to \mathrm{\bf Sets}$ (see p.38 in \cite[Chapter II(4)]{MM92}). 

If $X$ is an ideal on $A^\omega$, then we define a functor $$\mathsf X:\mathbb M\to \mathrm{\bf Sets}^{\mathbb M^{\mathrm{op}}}.$$ 
For every matrix $p$, the functor $\mathsf X_p:\mathbb M^{\mathrm{op}} \to \mathrm{\bf Sets}$ is a sieve on $p$ defined as follows:
 \begin{itemize}
   \item Let  $m$ be a matrix. Then
  $$\mathsf X_p(m)= \{ S\in \mathbb M(m,p)\ |\ (\exists c\in\mathcal X(m), d\in\mathcal X(p))\ S=\{(i,j) \in c\times d : m_i=p_j\}\}. $$
\item Let $R\in \mathbb M(m,q)$. Then for every relation $S\in \mathsf X_p(q)$:
  $$\mathsf (X_p(R))(S)=S\circ R \in \mathsf X_p(m).$$
 \end{itemize}

\begin{lemma}
 The definition of $\mathsf X_p$ is well-done.
\end{lemma}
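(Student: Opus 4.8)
The plan is to verify the two things that ``well-done'' requires here: first, that for each object $m$ the set $\mathsf X_p(m)$ really is a subset of the hom-set $\mathbb M(m,p)$, so that $\mathsf X_p$ is a genuine candidate subfunctor of $\mathbf y(p)$; and second — the only part carrying real content — that the prescribed action on arrows lands where the definition claims, namely that $S\circ R\in\mathsf X_p(m)$ whenever $R\in\mathbb M(m,q)$ and $S\in\mathsf X_p(q)$. The first point is immediate: every element of $\mathsf X_p(m)$ is by construction a relation $S=\{(i,j)\in c\times d: m_i=p_j\}\subseteq dim(m)\times dim(p)$ that satisfies $m_i=p_j$ on all of its pairs, hence is an arrow $m\to p$ in $\mathbb M$.

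For the second point I would proceed as follows. Fix $R\in\mathbb M(m,q)$ and $S\in\mathsf X_p(q)$, and write $S=\{(j,k)\in c'\times d: q_j=p_k\}$ with $c'\in\mathcal X_q$ and $d\in\mathcal X_p$. Set $c=f_{R^{-1}}(c')=\{i\in dim(m):\exists j\in c',\ (i,j)\in R\}$. Since $R$ is an arrow of $\mathbb M$, we have $m_i=q_j$ for every $(i,j)\in R$, so $\{m_i:i\in c\}\subseteq\{q_j:j\in c'\}$; the right-hand set lies in $X$ because $c'\in\mathcal X_q$, and as $X$ is down-closed this gives $\{m_i:i\in c\}\in X$, that is, $c\in\mathcal X_m$. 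This is exactly the computation already used in the proof that canonical ideal maps are precisely the presheaves on $\mathbb M$.

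It then remains to identify the composite, and I would establish the equality $S\circ R=\{(i,k)\in c\times d: m_i=p_k\}$ by double inclusion. For $\subseteq$: if $(i,k)\in S\circ R$ there is some $j$ with $(i,j)\in R$ and $(j,k)\in S$, whence $j\in c'$, $k\in d$, $q_j=p_k$ and $m_i=q_j$, so $i\in c$, $k\in d$ and $m_i=p_k$. For $\supseteq$: given $(i,k)\in c\times d$ with $m_i=p_k$, choose $j\in c'$ with $(i,j)\in R$ (possible since $i\in c=f_{R^{-1}}(c')$); then $q_j=m_i=p_k$, so $(j,k)\in S$ and hence $(i,k)\in S\circ R$. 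Together with $c\in\mathcal X_m$ and $d\in\mathcal X_p$ this exhibits $S\circ R$ in exactly the required form, so $S\circ R\in\mathsf X_p(m)$.

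Finally, I would note that functoriality of $\mathsf X_p$ (preservation of identities and of composition, contravariantly) is inherited for free from the Yoneda functor $\mathbf y(p)$, because $\mathsf X_p$ acts on arrows by the very same rule, precomposition by relational composition, which is associative and unital; the previous paragraph shows only that this action restricts correctly to the sieve. The delicate point, and the one I expect to be the crux, is the membership $c\in\mathcal X_m$ for the witnessing set $c=f_{R^{-1}}(c')$ (with $d$ unchanged): everything else is set-theoretic bookkeeping, whereas this step is precisely where the down-closedness of $X$ and the defining arrow condition $m_i=q_j$ in $\mathbb M$ are genuinely needed.
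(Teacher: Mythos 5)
Your proof is correct and follows essentially the same route as the paper's: both exhibit $S\circ R$ in the form $\{(i,j)\in c\times d : m_i=p_j\}$ with witnessing sets derived from those of $S$, using the arrow condition $m_i=q_j$ together with the down-closedness of the ideal $X$ to place the first witness in $\mathcal X_m$. The only difference is cosmetic --- the paper takes for its witnesses the exact coordinate projections of $S\circ R$, whereas you take $f_{R^{-1}}(c')$ and the original $d$; your explicit double-inclusion verification of the resulting set equality is in fact slightly more complete than the paper's, which asserts the final equality without spelling out the reverse inclusion.
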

\begin{proof}
  The definition is well-done if there exist $c\in \mathcal X(m)$ and $d\in \mathcal X(p)$ such that $S\circ R=\{(i,j) \in c\times d : m_i=p_j\}$. Since $S\in \mathsf X_p(q)$, there exist $ e\in\mathcal X(q), d\in\mathcal X(p)$ such that $S=\{(i,j) \in e\times d : q_i=p_j\}$. Let $(i,j)\in S\circ R$. Then there exists $u$ such that $(i,u)\in R$ and $(u,j)\in S$. Therefore, $u\in e$, $j\in d$ and $m_i=q_u=p_j$. Since $j\in d$, then $d'=\{j: \exists i\ (i,j)\in S\circ R\}\subseteq d \in \mathcal X(p)$ and $\{p_j : j\in d'\}\in X$.
  It follows that $\{m_i: \exists j\in d'\ (i,j)\in S\circ R\}= \{p_j : j\in d'\}\in X$. We conclude that $c'=\{i: \exists j\ (i,j)\in S\circ R\}\in X$.
  Then $S\circ R= \{(i,j) \in c'\times d' : m_i=p_j\}$.
\end{proof}
 
For every $p\in\mathsf M^\omega$ we define $J(p)= \{ \mathsf X_p :\ \text{$X$ is an ideal on $A^\omega$}\}$ to be a family of sieves on $p$.
This family generates a least Grothendieck topology $J^\star$ on the category $\mathbb M$ (see \cite[Chapter III(2)]{MM92}  and p.157 in \cite[Chapter III, Exercise 6]{MM92}). The open question is whether the functors $\mathcal X$ are sheaves for this
Grothendieck topology $J^\star$.

\subsection{$\mathcal I$-polymorphisms}

We now introduce the concept of an $\mathcal I$-polymorphism for an ideal map $\mathcal I$. We recall from Section \ref{sec:topaalpha} that, for every $m\in \mathsf M_\alpha^{\omega}$, the ideal $\mathcal I(m)$ defines the $\mathcal I_m$-topology on $A^\alpha$.

\begin{definition}\label{def:pol} Let  $R\subseteq A^\alpha$ and   $\mathcal I$ be an ideal map. We say that an $\omega$-operation $\varphi$ is an \emph{$\mathcal I$-polymorphism of $R$} if 
   for every matrix $m\in \mathsf M_\alpha^{\omega,R}$, $\varphi[m]\in \mathrm{Cl}_{\mathcal I_m}(R)$.
\end{definition}

  We denote by $Pol^\omega_{\mathcal I}(R)$ the set of all  $\mathcal I$-polymorphisms of $R$.
If $\mathcal R$ is a set of relations, we define  $Pol^\omega_{\mathcal I}(\mathcal R) = \bigcap_{R\in\mathcal R} Pol^\omega_{\mathcal I}(R)$.


Notice that the map $\mathcal I\mapsto Pol^\omega_\mathcal I(R)$ is order reversing.

\begin{remark}
  Among the possible choices of $\mathcal I$ in the definition of polymorphism, we have:

\begin{itemize}
\item The minimal one: $\bot(m)=\mathcal P_{\mathrm{fin}}(\alpha)$ for all $m\in \mathsf M_\alpha^{\omega}$. The corresponding notion is:
   $\varphi$ is a $\bot$-polymorphism of $R$ if 
   for every matrix $m\in \mathsf M_\alpha^{\omega,R}$, $\varphi[m]\in\overline R$.
 \item The maximal one is $\mathcal G(m)=\mathcal P(\alpha)$ for all $m\in \mathsf M_\alpha^{\omega}$.
The corresponding notion is:
   $\varphi$ is a $\mathcal G$-polymorphism of $R$ if 
   for every matrix $m\in \mathsf M_\alpha^{\omega,R}$, $\varphi[m]\in R$.
  \end{itemize}
\end{remark}

We now introduce a map that will be useful in the following.
Any matrix $m\in \mathsf M_\alpha^\omega$ induces a mapping $\bar m$ from $O_A^{(\omega)}$ into $A^\alpha$ in the following manner:
$$\bar m(\varphi)= \varphi[m],\ \text{for every $\varphi\in O_A^{(\omega)}$}.$$

\begin{lemma}\label{lem:cont} Let $X$ be an ideal on $A^\omega$. 
The map $\bar m$ is continuous from the $X$-space $O_A^{(\omega)}$ into the $\mathcal X_m$-space $A^\alpha$.
\end{lemma}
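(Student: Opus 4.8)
The plan is to verify continuity through its characterisation in terms of closed sets. By the definition of $\tau_X^A$, a set $F\subseteq O_A^{(\omega)}$ is closed in the $X$-space precisely when its complement is open, i.e.\ precisely when $F$ is $X$-closed; likewise the closed subsets of the $\mathcal X_m$-space $A^\alpha$ are exactly the $\mathcal X_m$-closed sets. Hence it suffices to show that for every $\mathcal X_m$-closed $R\subseteq A^\alpha$ the preimage $\bar m^{-1}(R)=\{\varphi\in O_A^{(\omega)}: \varphi[m]\in R\}$ is $X$-closed. I would argue with preimages rather than with closure preservation, since this sidesteps any question of whether the operators $\mathrm{Cl}_X$ and $\mathrm{Cl}_{\mathcal X_m}$ are idempotent.

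So I would fix an $\mathcal X_m$-closed $R$ and a $\varphi\in \mathrm{Cl}_X(\bar m^{-1}(R))$, aiming to prove $\varphi[m]\in R$. The bridge between the two topologies is the defining identity $c\in\mathcal X_m \Leftrightarrow \{m_i:i\in c\}\in X$. Concretely, given $c\in \mathcal X_m$, I would set $d=\{m_i:i\in c\}\in X$ and feed it into the hypothesis $\varphi\in \mathrm{Cl}_X(\bar m^{-1}(R))$: this yields some $\psi$ with $\psi[m]\in R$ and $\varphi_{|d}=\psi_{|d}$. Since $m_i\in d$ for every $i\in c$, the agreement on $d$ forces $\varphi(m_i)=\psi(m_i)$ for all $i\in c$, that is $(\varphi[m])_{|c}=(\psi[m])_{|c}$, with witness $u=\psi[m]\in R$. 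As $c\in\mathcal X_m$ was arbitrary, this shows $\varphi[m]\in \mathrm{Cl}_{\mathcal X_m}(R)=R$, whence $\varphi\in\bar m^{-1}(R)$. Thus $\bar m^{-1}(R)$ is $X$-closed, and $\bar m$ is continuous.

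I expect essentially no genuine obstacle here: the argument is a direct translation whose only delicate point is matching the two restriction operations correctly — the restriction $\varphi_{|d}$ of an $\omega$-operation to a set $d$ of threads on the domain side, against the restriction $(\varphi[m])_{|c}$ of a thread to a coordinate set $c\subseteq\alpha$ on the codomain side — which is exactly what the passage $c\mapsto\{m_i:i\in c\}$ mediates. One should also keep in mind, though it is used only implicitly, that $\mathcal X_m$ genuinely lies in $\mathrm{Ide}(\alpha)$: a finite $c$ gives a finite set $\{m_i:i\in c\}\in L\subseteq X$, so $\mathcal P_{\mathrm{fin}}(\alpha)\subseteq\mathcal X_m$, and hence the $\mathcal X_m$-topology on $A^\alpha$ is well defined.
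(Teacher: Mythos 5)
Your argument is correct and is essentially the paper's own proof: both fix an $\mathcal X_m$-closed $R$, take $\varphi$ in the $X$-closure of $\bar m^{-1}(R)$, and use the correspondence $c\mapsto\{m_i:i\in c\}$ to transfer agreement on threads to agreement on coordinates, concluding $\varphi[m]\in\mathrm{Cl}_{\mathcal X_m}(R)=R$. Your added remark that $\mathcal P_{\mathrm{fin}}(\alpha)\subseteq\mathcal X_m$ (so the target topology is well defined) is a small bonus the paper leaves implicit.
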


\begin{proof}
Let $R\subseteq A^\alpha$ be $\mathcal X_m$-closed.  We aim to demonstrate that $H=\bar{m}^{-1}(R)$ is $X$-closed.
Let $\varphi$ be an element of the $X$-closure of $H$. To establish the desired result, it suffices to show that $\varphi[m]\in R$. By the definition of $\mathcal X(m)$, for every $c\in \mathcal X(m)$, we have $\bar c=\{m_i : i\in c\}\in X$.
As $\varphi\in\mathrm{Cl}_X(H)$, there exists $\psi\in H$ such that $\varphi_{|\bar c}=\psi_{|\bar c}$. This implies that $\varphi[m]_{|c}= \psi[m]_{|c}$. Since $\psi[m]\in R$ and $R$ is $\mathcal X_m$-closed, we can conclude $\varphi[m]\in R$ by considering the arbitrary choice of $c$. 
\end{proof}

Let $\mathcal I$ be an ideal map. By definition of $\mathcal I$-polymorphism we have:
\begin{equation}\label{polpol} Pol^\omega_{\mathcal I}(R)= \bigcap \{ \bar m^{-1}(\mathrm{Cl}_{\mathcal I_m}(R)):m\in\mathsf M_\alpha^{\omega,R}\}.
\end{equation}

\begin{proposition}\label{prop:ii}
Let $X$ be an  ideal on $A^\omega$ and $R\subseteq A^\alpha$. 
\begin{itemize}
\item[(i)]  The set $Pol^\omega_{\mathcal I}(R)$ is $X$-closed for every ideal map ${\mathcal I}\leq{\mathcal X}$.
\item[(ii)] The set $Pol^\omega_{\mathcal I}(R)$ is an infinitary $\omega$-clone for every ideal map ${\mathcal I}\leq{\mathcal X}$.
 \item[(iii)] If $R$ is locally closed, then $R$ is $I$-closed for every $I\in\mathrm{Ide}(\alpha)$, and $Pol^\omega_{\mathcal I}(R)=Pol_\mathcal G^\omega(R)$ for every ideal map $\mathcal I$.
\end{itemize}
\end{proposition}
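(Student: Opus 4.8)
The plan is to prove the three parts by exploiting the characterisation of $Pol^\omega_{\mathcal I}(R)$ as an intersection given in equation (\ref{polpol}), together with the continuity lemma (Lemma \ref{lem:cont}) and the substitutivity machinery of Proposition \ref{prop:sub}. For part (i), I would argue as follows. By (\ref{polpol}) we have $Pol^\omega_{\mathcal I}(R)=\bigcap\{\bar m^{-1}(\mathrm{Cl}_{\mathcal I_m}(R)):m\in\mathsf M_\alpha^{\omega,R}\}$. Since an arbitrary intersection of $X$-closed sets is $X$-closed (this follows from Lemma \ref{lem:X}(1), noting that $\tau_X^A$ is a topology and hence closed sets are closed under arbitrary intersection), it suffices to show that each $\bar m^{-1}(\mathrm{Cl}_{\mathcal I_m}(R))$ is $X$-closed. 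The hypothesis $\mathcal I\leq\mathcal X$ gives $\mathcal I_m\subseteq\mathcal X_m$, so by Lemma \ref{lem:X}(2) the set $\mathrm{Cl}_{\mathcal I_m}(R)$ is $\mathcal X_m$-closed. By Lemma \ref{lem:cont} the map $\bar m$ is continuous from the $X$-space $O_A^{(\omega)}$ into the $\mathcal X_m$-space $A^\alpha$, so the preimage of a $\mathcal X_m$-closed set is $X$-closed. This completes (i).

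For part (ii), the claim is that $Pol^\omega_{\mathcal I}(R)$ is an infinitary $\omega$-clone. Here I would first verify directly that $Pol^\omega_{\mathcal I}(R)$ contains the projections $\e_n^A$: for any $m\in\mathsf M_\alpha^{\omega,R}$, we have $\e_n^A[m]=m^n\in R$ since $R$ is the common column-space of all such matrices, and $R\subseteq\mathrm{Cl}_{\mathcal I_m}(R)$ trivially. The substantive part is closure under $q^A$. Rather than argue the clone axioms by hand, I would combine part (i) with the closure results already established: by (i) the set $C:=Pol^\omega_{\mathcal I}(R)$ is $X$-closed, and by Corollary \ref{prop:44}(ii) the $X$-closure of an infinitary $\omega$-clone is again an infinitary $\omega$-clone provided $X$ is infinitely substitutive. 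The remaining gap is to exhibit an infinitary $\omega$-clone whose $X$-closure is exactly $C$; the natural candidate is to show that $\mathcal G$-polymorphisms form an infinitary $\omega$-clone by a direct computation (this is the $R$-valued, exact version, with no topology) and then that $C=\mathrm{Cl}_X(Pol^\omega_{\mathcal G}(R))$.

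For part (iii), suppose $R=\overline R$, i.e. $R$ is locally closed. I would first show $R$ is $I$-closed for every $I\in\mathrm{Ide}(\alpha)$: since every such $I$ satisfies $I\supseteq\mathcal P_{\mathrm{fin}}(\alpha)$ by (\ref{eq:ide}), Lemma \ref{lem:X}(2) gives $\mathrm{Cl}_I(R)\subseteq\mathrm{Cl}_{\mathcal P_{\mathrm{fin}}(\alpha)}(R)=\overline R=R$, and the reverse inclusion is automatic, so $\mathrm{Cl}_I(R)=R$. Applying this with $I=\mathcal I_m$ for each $m$ yields $\mathrm{Cl}_{\mathcal I_m}(R)=R$ for every $m\in\mathsf M_\alpha^{\omega,R}$ and every ideal map $\mathcal I$; substituting into (\ref{polpol}) makes the right-hand side independent of $\mathcal I$, and in particular equal to the value obtained for $\mathcal G$, whence $Pol^\omega_{\mathcal I}(R)=Pol_\mathcal G^\omega(R)$.

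The main obstacle I anticipate is in part (ii): establishing that $Pol^\omega_{\mathcal I}(R)$ is genuinely closed under the infinitary operation $q^A$ (not merely under the finitary $q_n^A$). The difficulty is that the condition $\varphi[m]\in\mathrm{Cl}_{\mathcal I_m}(R)$ mixes the matrix-valued evaluation with a topological closure that itself depends on $m$, so the bookkeeping when one substitutes $q^A(\varphi,\boldsymbol\psi)$ into a matrix — and must produce, for each $c\in\mathcal I_m$, a witnessing element of $R$ agreeing on the relevant coordinates — is delicate. I would handle this by reducing to the exact ($\mathcal G$) case via parts (i) and the substitutivity of Proposition \ref{prop:sub}, rather than attempting the closure computation directly in the presence of the parametric topology.
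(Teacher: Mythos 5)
Your treatment of parts (i) and (iii) is correct and coincides with the paper's: (i) is exactly the intersection-of-preimages argument via (\ref{polpol}) and Lemma \ref{lem:cont}, and (iii) is the observation that $\mathcal P_{\mathrm{fin}}(\alpha)\subseteq I$ forces $\mathrm{Cl}_I(R)\subseteq \overline R=R$, which makes the closure in the definition of $\mathcal I$-polymorphism collapse to $R$ itself.

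Part (ii), however, has a genuine gap, and it is precisely at the point you flag as ``the remaining gap.'' Your proposed reduction requires the identity $Pol^\omega_{\mathcal I}(R)=\mathrm{Cl}_X\bigl(Pol^\omega_{\mathcal G}(R)\bigr)$, and this is false in general: taking $\mathcal I=\bot$ and $X=L$, Proposition \ref{prop:fondamentale}(1) gives $Pol^\omega_\bot(R)=Pol^\omega(\overline R)$, while Proposition \ref{prop:example}(1) exhibits an $R$ with $\overline{Pol^\omega_{\mathcal G}\,R}\subsetneq Pol^\omega(\overline R)$, so the $X$-closed set $Pol^\omega_{\mathcal I}(R)$ is strictly larger than the $X$-closure of the exact polymorphism clone. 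Moreover, even where such an identity held, Corollary \ref{prop:44}(ii) only applies when $X$ is \emph{infinitely} substitutive, which the paper establishes only for $L$ and $G$; the statement of (ii) must hold for arbitrary $X$ with $L\subseteq X\subseteq G$ (e.g.\ the trace and uniform ideals, which are merely substitutive). So the route through (i) plus Corollary \ref{prop:44} cannot be repaired.

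The paper instead does the direct computation you were trying to avoid, and it is shorter than one might fear. Given $\psi,\varphi_i\in Pol^\omega_{\mathcal I}(R)$ and $m\in\mathsf M_\alpha^{\omega,R}$, form the matrix $p$ with columns $p^n=\varphi_n[m]$; then $q^A(\psi,\boldsymbol\varphi)[m]=\psi[p]$. Each column $p^n$ lies in $\mathrm{Cl}_{\mathcal I_m}(R)$, so for a fixed $c\in\mathcal I_m$ one can choose a matrix $r\in\mathsf M_\alpha^{\omega,R}$ with $(r^n)_{|c}=(p^n)_{|c}$ for all $n$; applying the polymorphism property of $\psi$ to $r$ and using $\psi[r]_{|c}=\psi[p]_{|c}$, the arbitrariness of $c$ gives $\psi[p]\in\mathrm{Cl}_{\mathcal I_m}(\mathrm{Cl}_{\mathcal I_m}(R))=\mathrm{Cl}_{\mathcal I_m}(R)$. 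Together with your (correct) check that the projections $\e^A_n$ belong to $Pol^\omega_{\mathcal I}(R)$, this finishes (ii). Note that the approximation step is done pointwise in $c$, which is exactly how the ``delicate bookkeeping'' you worried about is avoided.
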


\begin{proof}

  (i): By Lemma \ref{lem:X}(2) we have that every $\mathcal I_m$-closed set is also $\mathcal X_m$-closed, because $\mathcal I(m)\subseteq \mathcal X(m)$.
According to (\ref{polpol}) and Lemma \ref{lem:cont}, $Pol^\omega_{\mathcal I}(R)$ is the intersection of a family of $X$-closed sets, which implies that it is itself $X$-closed.

  (ii):   Let $\psi,\varphi_i \in Pol^\omega_{\mathcal I}(R)$. We want to show that $q^A (\psi,\boldsymbol\varphi)[m]\in \mathrm{Cl}_{\mathcal I_m}(R)$ for every $m\in\mathsf M_\alpha^{\omega,R}$.
For every $i\in\omega$, we have
$$q^A(\psi,\boldsymbol\varphi)(m_i)   =   \psi(\varphi_0(m_i),\dots,\varphi_n(m_i),\dots).$$
We define a new matrix $p\in \mathsf M_\alpha^{\omega}$ column-wise as follows: 
$$p^n=\varphi_n[m]\in \mathrm{Cl}_{\mathcal I_m}(R).$$
Thus we have $q^A (\psi,\boldsymbol\varphi)[m] = \psi[p]$.
Since $p^n\in \mathrm{Cl}_{\mathcal I_m}(R)$ for every $n\in\omega$, then, for every $c\in \mathcal I(m)$, we can define a matrix $r\in\mathsf M_\alpha^{\omega,R}$ (depending on $c$) such that  $(r^n)_{|c} = (p^n)_{|c}$.


As $\psi$ is an $\mathcal I$-polymorphism of $R$ and $r\in\mathsf M_\alpha^{\omega,R}$, we have $\psi[r]\in \mathrm{Cl}_{\mathcal I_m}(R)$. From $\psi[r]_{|c}=\psi[p]_{|c} = q^A (\psi,\boldsymbol\varphi)[m]_{|c}$, and the fact that $c\in \mathcal I(m)$ was arbitrary, we conclude that $q^A (\psi,\boldsymbol\varphi)[m]\in \mathrm{Cl}_{\mathcal I_m}(\mathrm{Cl}_{\mathcal I_m}(R))=\mathrm{Cl}_{\mathcal I_m}(R)$.

(iii) By the inclusion $L\subseteq I$, we have that  
  $R\subseteq \mathrm{Cl}_{I}(R)\subseteq \overline R$.
  \end{proof}

\begin{notation}\label{not:0} If $\mathcal R$ is a set of locally closed (in particular finitary) relations then by Proposition \ref{prop:ii} $Pol^\omega_{\mathcal I}(\mathcal R)$ is independent of $\mathcal I$. In this case we will write $Pol^\omega(\mathcal R)$ for 
$Pol^\omega_{\mathcal I}(\mathcal R)$.
\end{notation}

We now introduce the concept of an invariant relation in various forms, depending on the arity and the closedness of the relations themselves.

\begin{definition}
  Let $C\subseteq O_A^{(\omega)}$ and $\mathcal I$ be an ideal map. We define: 
\begin{itemize}
\item $Inv^{\leq\omega}_\mathcal I(C)=\{R\subseteq A^\alpha\ |\ \alpha\leq\omega\ \text{and}\ C\subseteq Pol^\omega_\mathcal I(R)\}$; 
\item $Inv^{\omega}_\mathcal I(C)=\{R\subseteq A^\omega\ |\ C\subseteq Pol^\omega_\mathcal I(R)\}$; 
\item $Inv^{\omega}_{c,\mathcal I}(C)=\{R\subseteq A^\omega\ |\ C\subseteq Pol^\omega_\mathcal I(R),\ \text{$R$ is $\mathcal I_m$-closed for all $m\in  \mathsf M_\omega^{\omega,R}$}\}$;
\item $Inv^{<\omega}(C)=\{R\subseteq A^\alpha\ |\ \alpha <\omega\ \text{and}\  C\subseteq Pol^\omega(R)\}$;
  \item $Inv_c^{\omega}(C)=\{R\subseteq A^\omega\ |\ \text{$R=\overline R$  and   $C\subseteq Pol^\omega(R)$}\}$.
\end{itemize}
\end{definition}

\begin{example}
The set $Inv^{\omega}_{c,\bot}(C) = \{R\subseteq A^\omega\ |\ \text{$R=\overline R$  and   $C\subseteq Pol^\omega_\bot(R)$} \}=Inv_c^\omega(C)$ is the set of locally closed $\omega$-relations that are $C$-invariant. Observe that by Notation \ref{not:0} we have $Pol^\omega_\bot(\overline R)=Pol^\omega(\overline R)$ for $\omega$-relation $R$.
\end{example}


%
%
%
%
%

The following proposition shows that a finitary relation can be freely substituted by its top extension.

\begin{proposition}\label{lem:2.5}
  Let $C\subseteq O_A^{(\omega)}$ and $\mathcal I$ be an ideal map. Then
  $Pol^\omega_{\mathcal I}(Inv^{\leq\omega}_{\mathcal I}\,C)=Pol^\omega_{\mathcal I}(Inv^{\omega}_{\mathcal I}\,C)$.
\end{proposition}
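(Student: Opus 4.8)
The plan is to prove the two inclusions separately: the forward one is immediate from monotonicity, while the reverse one rests on a substitution argument that replaces finitary relations by their top extensions.

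First I would record the trivial inclusion. Since every $\omega$-relation $R\subseteq A^\omega$ with $C\subseteq Pol^\omega_{\mathcal I}(R)$ is in particular a relation of arity $\leq\omega$, we have $Inv^{\omega}_{\mathcal I}(C)\subseteq Inv^{\leq\omega}_{\mathcal I}(C)$. As $Pol^\omega_{\mathcal I}(\mathcal R)=\bigcap_{R\in\mathcal R}Pol^\omega_{\mathcal I}(R)$ is order-reversing in $\mathcal R$, this yields $Pol^\omega_{\mathcal I}(Inv^{\leq\omega}_{\mathcal I}\,C)\subseteq Pol^\omega_{\mathcal I}(Inv^{\omega}_{\mathcal I}\,C)$.

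The heart of the argument is the reverse inclusion, for which I would first establish the key substitution lemma: for every finitary relation $R\subseteq A^n$ and every ideal map $\mathcal I$,
$$Pol^\omega_{\mathcal I}(R)=Pol^\omega_{\mathcal I}(R^\top).$$
Both sides are independent of $\mathcal I$: the left side because $R$ is a finitary (hence locally closed) relation, so by Proposition \ref{prop:ii}(iii) and Notation \ref{not:0} one has $Pol^\omega_{\mathcal I}(R)=Pol^\omega(R)$; the right side because $R^\top$ is locally closed by Lemma \ref{lem:clofin}, so again $Pol^\omega_{\mathcal I}(R^\top)=Pol^\omega(R^\top)$. It therefore suffices to prove the equality for the single choice $\mathcal I=\mathcal G$. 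Since the $\mathcal G_m$-topology is discrete the closure operator is the identity, so $\varphi\in Pol^\omega_{\mathcal G}(R)$ unfolds exactly to condition (1) of Lemma \ref{lem:4.9boh}, while $\varphi\in Pol^\omega_{\mathcal G}(R^\top)$ unfolds to condition (2); Lemma \ref{lem:4.9boh} gives their equivalence, and the substitution lemma follows. With this in hand I would close the reverse inclusion: let $\varphi\in Pol^\omega_{\mathcal I}(Inv^{\omega}_{\mathcal I}\,C)$ and let $R\in Inv^{\leq\omega}_{\mathcal I}(C)$, say $R\subseteq A^\alpha$ with $\alpha\leq\omega$. If $\alpha=\omega$ then $R\in Inv^{\omega}_{\mathcal I}(C)$ and there is nothing to prove. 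If $\alpha=n<\omega$ then $R$ is finitary and $C\subseteq Pol^\omega_{\mathcal I}(R)=Pol^\omega_{\mathcal I}(R^\top)$ by the substitution lemma, so $R^\top$ is an $\omega$-relation with $C\subseteq Pol^\omega_{\mathcal I}(R^\top)$, i.e. $R^\top\in Inv^{\omega}_{\mathcal I}(C)$; hence $\varphi\in Pol^\omega_{\mathcal I}(R^\top)=Pol^\omega_{\mathcal I}(R)$, again by the substitution lemma.

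I expect the only real work to be the substitution lemma, and within it the one point needing care is the reduction to $\mathcal I=\mathcal G$: one must invoke Proposition \ref{prop:ii}(iii) for both $R$ (finitary, hence locally closed in the discrete space $A^n$) and $R^\top$ (locally closed in $A^\omega$ by Lemma \ref{lem:clofin}) in order to strip the dependence on $\mathcal I$ \emph{before} appealing to Lemma \ref{lem:4.9boh}. Everything else is bookkeeping with the order-reversing property and the definitions of the two invariant-relation operators.
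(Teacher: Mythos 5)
Your proof is correct and follows essentially the same route as the paper: the trivial inclusion via contravariance, and for the converse the key fact that a finitary relation $S$ and its top extension $S^\top$ have the same $\mathcal I$-polymorphisms, obtained by combining Lemma \ref{lem:clofin} with Proposition \ref{prop:ii}(iii) to remove the dependence on $\mathcal I$ and then invoking Lemma \ref{lem:4.9boh}. Your explicit remark that the $\mathcal I$-dependence must be stripped on \emph{both} sides before applying Lemma \ref{lem:4.9boh} is a slightly more careful articulation of exactly the step the paper performs.
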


\begin{proof} 
From $Inv^{\omega}_{\mathcal I}\,C\subseteq Inv^{\leq\omega}_{\mathcal I}\,C$ and the contravariance of $Pol^\omega_{\mathcal I}$ it follows that $Pol^\omega_{\mathcal I}(Inv^{\leq\omega}_{\mathcal I}\,C)\subseteq Pol^\omega_{\mathcal I}(Inv^{\omega}_{\mathcal I}\,C)$.

We will prove the converse inequality. According to Lemma \ref{lem:clofin}, the top extension $S^\top$ of a finitary relation $S$ is locally closed. Therefore, Proposition \ref{prop:ii}(iii) implies that $\varphi\in Pol^\omega_{\mathcal I}(S^\top)$ if and only if for every matrix $m\in\mathsf M_\omega^{\omega,S^\top}$, $\varphi[m]\in S^\top$.

Using Lemma \ref{lem:4.9boh}, we can conclude that $\varphi\in Pol^\omega_\mathcal I(S^\top)$ if and only if $\varphi\in Pol^\omega_\mathcal I(S)$. In other words, $S$ and $S^\top$ have the same $\mathcal I$-polymorphisms. Therefore, we deduce that a finitary relation $S \in Inv^{\leq\omega}_{\mathcal I}\,C$ if and only if $S^\top \in Inv^{\omega}_{\mathcal I}\,C$. This easily leads to the desired conclusion.
\end{proof}

%
%
  


\section{A characterisation of the $X$-closed infinitary $\omega$-clones}

In this section, we consider an   ideal $X$ on $A^\omega$ satisfying $L\subseteq X\subseteq G$. 


We term {\em row-injective} a matrix whose rows are pairwise distinct.

\begin{definition}\label{def:rmc}
 Let  $X$ be an ideal on $A^\omega$. Given    a set $C$ of $\omega$-operations and $m \in \mathsf M_\alpha^\omega$ ($\alpha\leq\omega$), 
we define the $\alpha$-relation $R_{m,C}=\{ \varphi[m] \ | \ \varphi\in  C\}$ and the following sets of relations:
\begin{itemize}
\item $\mathcal R_C=\{R_{m,C} \ |\ m \in \mathsf M_\alpha^\omega \text{ is row-injective } \text{and}\ \alpha\leq\omega\}$.
\item $\mathcal W_{C}=\{R_{m,C} \ |\ m \in \mathsf M_\alpha^\omega\text{ is row-injective} ,\ \alpha\leq\omega\ \text{and $\mathrm{row}(m)\in X$}\}$.\end{itemize}
\end{definition}
 Notice that $R_{m,C}$ is  $\mathcal X_m$-closed if   $\mathrm{row}(m)\in X$, because in this case the $\mathcal X_m$-topology is discrete. The  condition  $\mathrm{row}(m)\in X$ is always satisfied if $|\mathrm{row}(m)|<\omega$.


 \begin{definition}
   Let  $X$ be an ideal on $A^\omega$. An ideal map $\mathcal I$ is \emph{$X$-adequate} if $\mathcal I\leq \mathcal X$ and  $\mathcal I(m)=\mathcal P(\alpha)$ for every row-injective $m \in \mathsf M_\alpha^\omega$ such that $\mathrm{row}(m)\in X$.
 \end{definition}

 \begin{theorem}\label{thm:duedue2} Let  $C$ be an infinitary $\omega$-clone on $A$,
   $X$ be an ideal on $A^\omega$ satisfying $L\subseteq X\subseteq G$, and  $\mathcal I$ be an $X$-adequate ideal map.
The following conditions are equivalent, for all $\omega$-operations $\varphi:A^\omega \to A$:
\begin{enumerate}
\item[$(1)$] $\varphi\in \mathrm{Cl}_X(C)$;
\item[$(2)$] $\varphi\in Pol^\omega_{\mathcal I}(\mathcal R_C)$;
\item[$(3)$]  $\varphi[m]\in   \mathrm{Cl}_{\mathcal I_m}(R_{m,C})$, for every row-injective matrix $m$;
\item[$(4)$]  $\varphi[m]\in   R_{m,C}$, for every row-injective matrix $m$ such that $\mathrm{row}(m)\in X$;
\item[$(5)$] $Inv^{\leq\omega}_{\mathcal I}\,\mathrm{Cl}_X(C)\subseteq   Inv^{\leq\omega}_{\mathcal I}\, \varphi$;
\item[$(6)$] $\varphi\in Pol^\omega_{\mathcal I}(\mathcal W_{C})$.
\end{enumerate}

\end{theorem}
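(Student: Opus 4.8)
The plan is to route all six conditions through a single chain
$$(1)\Rightarrow(2)\Rightarrow(3)\Rightarrow(4)\Rightarrow(1),\qquad (1)\Rightarrow(5)\Rightarrow(2),\qquad (2)\Rightarrow(6)\Rightarrow(4),$$
so that (5) and (6) are absorbed by two short detours. Two elementary observations about the relations $R_{m,C}$ drive everything, and the only external input is Proposition \ref{prop:ii}.

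The first observation is that $C\subseteq Pol^\omega_{\mathcal I}(R_{m,C})$ for every row-injective $m\in\mathsf M_\alpha^\omega$. To check this I would take $\psi\in C$ and an arbitrary test matrix $p\in\mathsf M_\alpha^{\omega,R_{m,C}}$; writing each column as $p^n=\varphi_n[m]$ with $\varphi_n\in C$, a direct expansion gives $\psi[p]=q^A(\psi,\boldsymbol\varphi)[m]$, and since $C$ is an infinitary $\omega$-clone it is closed under $q^A$, so $\psi[p]\in R_{m,C}\subseteq\mathrm{Cl}_{\mathcal I_p}(R_{m,C})$. As $\mathcal I\leq\mathcal X$, Proposition \ref{prop:ii}(i) says $Pol^\omega_{\mathcal I}(R_{m,C})$ is $X$-closed, so from $C\subseteq Pol^\omega_{\mathcal I}(R_{m,C})$ and monotonicity of $\mathrm{Cl}_X$ I immediately get $\mathrm{Cl}_X(C)\subseteq Pol^\omega_{\mathcal I}(R_{m,C})$. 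Intersecting over all row-injective $m$ yields $(1)\Rightarrow(2)$, and it simultaneously shows that each $R_{m,C}$ lies in $Inv^{\leq\omega}_{\mathcal I}\,\mathrm{Cl}_X(C)$, which is exactly what makes $(5)\Rightarrow(2)$ work: if $R_{m,C}\in Inv^{\leq\omega}_{\mathcal I}\,\mathrm{Cl}_X(C)\subseteq Inv^{\leq\omega}_{\mathcal I}\,\varphi$, then $\varphi\in Pol^\omega_{\mathcal I}(R_{m,C})$ for all $m$. The reverse detour $(1)\Rightarrow(5)$ is one-line transitivity: if $\varphi\in\mathrm{Cl}_X(C)$ and $R\in Inv^{\leq\omega}_{\mathcal I}\,\mathrm{Cl}_X(C)$, i.e. $\mathrm{Cl}_X(C)\subseteq Pol^\omega_{\mathcal I}(R)$, then $\varphi\in Pol^\omega_{\mathcal I}(R)$, so $R\in Inv^{\leq\omega}_{\mathcal I}\,\varphi$.

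The second observation is that the columns of $m$ themselves belong to $R_{m,C}$, because $m^n=\e_n[m]$ and $\e_n\in C$; hence $m$ is a legitimate element of $\mathsf M_\alpha^{\omega,R_{m,C}}$. Specialising the polymorphism condition in (2) for $R_{m,C}$ to the test matrix $p=m$ gives $\varphi[m]\in\mathrm{Cl}_{\mathcal I_m}(R_{m,C})$, that is, $(2)\Rightarrow(3)$. For $(3)\Rightarrow(4)$ I would invoke $X$-adequacy: when $\mathrm{row}(m)\in X$ we have $\mathcal I_m=\mathcal P(\alpha)$, and the closure for the full-powerset ideal is the identity (take $c=\alpha$ in the definition of $\mathrm{Cl}$), so $\mathrm{Cl}_{\mathcal I_m}(R_{m,C})=R_{m,C}$ and (3) collapses to (4). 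The same two facts give $(6)\Rightarrow(4)$: for row-injective $m$ with $\mathrm{row}(m)\in X$ we have $R_{m,C}\in\mathcal W_C$, whence $\varphi\in Pol^\omega_{\mathcal I}(R_{m,C})$, and evaluating at $p=m$ together with the collapse of the closure yields $\varphi[m]\in R_{m,C}$. Then $(4)\Rightarrow(1)$ is the identification of (1) with (4): every $d\in X$ has $|d|\leq\omega$ since $X\subseteq G$, so enumerating its elements without repetition as the rows of a row-injective matrix $m$ with $\mathrm{row}(m)=d$, the statement $\varphi[m]\in R_{m,C}$ says precisely that $\varphi$ agrees on $d$ with some $\psi\in C$. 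Finally $\mathcal W_C\subseteq\mathcal R_C$ and contravariance of $Pol^\omega_{\mathcal I}$ give $(2)\Rightarrow(6)$, closing the scheme.

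I expect the only delicate points to be bookkeeping rather than conceptual: keeping the dimensions straight so that the chosen matrices really lie in $\mathsf M_\alpha^{\omega,R_{m,C}}$, and recognising that the two hypotheses on $\mathcal I$ play distinct and complementary roles. $X$-adequacy is exactly what trivialises $\mathrm{Cl}_{\mathcal I_m}$ on the ``observable'' matrices (those with $\mathrm{row}(m)\in X$), producing the collapse needed in $(3)\Rightarrow(4)$ and $(6)\Rightarrow(4)$, while the weaker bound $\mathcal I\leq\mathcal X$ is what guarantees, via Proposition \ref{prop:ii}(i), that the polymorphism sets are $X$-closed, which is the lever for $(1)\Rightarrow(2)$. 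The genuine content of the theorem is precisely this interplay between the two uses of $\mathcal I$.
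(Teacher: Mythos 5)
Your proof is correct and follows the same implication cycle as the paper's, with essentially the same arguments for $(2)\Rightarrow(3)$, $(3)\Rightarrow(4)$, $(4)\Rightarrow(1)$, $(1)\Rightarrow(5)$, $(5)\Rightarrow(2)$, $(2)\Rightarrow(6)$ and $(6)\Rightarrow(4)$. The only variation is in $(1)\Rightarrow(2)$: the paper argues directly, approximating $\varphi$ by some $\varphi'\in C$ on $\{r_i : i\in c\}$ for each $c\in\mathcal I_r$ and comparing $\varphi[r]$ with $q^A(\varphi',\boldsymbol\psi)[m]$ coordinatewise on $c$, whereas you first establish the exact inclusion $C\subseteq Pol^\omega_{\mathcal I}(R_{m,C})$ via the identity $\psi[p]=q^A(\psi,\boldsymbol\varphi)[m]$ and then pass to $\mathrm{Cl}_X(C)$ using the $X$-closedness of $Pol^\omega_{\mathcal I}(R_{m,C})$ from Proposition \ref{prop:ii}(i) -- both are valid, and yours cleanly separates the algebraic step from the topological one.
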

\begin{proof}

  \noindent $(1)\Rightarrow(2)$:  
 Let  $\varphi\in \mathrm{Cl}_X(C)$ and  consider the $\alpha$-relation $R_{m, C}$ for some row-injective matrix $m\in \mathsf M_\alpha^\omega$. We aim to show that for every matrix $r\in \mathsf M_\alpha^{\omega,R_{m, C}}$, we have $\varphi[r]\in \mathrm{Cl}_{\mathcal I_r}(R_{m, C})$.  Let $c\in \mathcal I(r)$. Since $\mathcal I(r)\subseteq \mathcal X(r)$, we have $\bar c=\{r_i: i\in c\}\in X$.
 By the given hypothesis there exists $\varphi' \in C$ such that $\varphi_{|\bar c}=\varphi'_{|\bar c}$. Additionally, there exist
 $\psi_n\in  C$ ($n\in\omega$) such that $r^n=  \psi_n[m]$. 
Let $\Psi=q^A(\varphi',\boldsymbol\psi)\in C$. By definition of $R_{m, C}$ we have $\Psi[m]\in R_{m, C}$. We claim that $\varphi(r_i)=\Psi(m_i)$ for every $i\in c$. We can see that:
$\Psi(m_i)=\varphi'(\psi_0(m_i),\ldots,\psi_n(m_i),\ldots  ) =
\varphi'(r_i^0\ldots,r^n_i,\ldots  ) =\varphi'(r_i)=\varphi(r_i)$.
Thus, we have shown that $\varphi(r_i)=\Psi(m_i)$ for every $i\in c$.
By the arbitrariness of $c$ we get the conclusion $\varphi[r]\in \mathrm{Cl}_{\mathcal I_r}(R_{m, C})$.

  \noindent $(2)\Rightarrow(3)$: Let $m$ be a row-injective matrix. Since $C$ contains all projections, then $m^j\in R_{m, C}$ for every $j$.  It follows that $m\in \mathsf M_\alpha^{\omega,R_{m, C}}$. Therefore, by the given hypothesis $\varphi[m]\in  \mathrm{Cl}_{\mathcal I_m}(R_{m,C})$.

  \noindent $(3)\Rightarrow(4)$: By (3) we have $\varphi[m]\in   \mathrm{Cl}_{\mathcal I_m}(R_{m,C})$.
   Since $\mathcal I(m)=\mathcal P(\alpha)$ whenever $\mathrm{row }(m)\in X$, we get the conclusion $\varphi[m]\in R_{m,C}$.

  \noindent $(4)\Rightarrow(1)$: 
  We show that, for every $c\in X$,  $\varphi_{|c}=\psi_{|c}$ for some $\psi\in C$.   By the hypothesis on $X$ we have that $|c|\leq \omega$. Choose   any row-injective matrix $m$ such that $\mathrm{row}(m)=c\in X$. 
  Since $\varphi[m]\in   R_{m,C}$,  by the definition of $R_{m,C}$, we have
  $\varphi[m]=\psi[m]$ for some $\psi\in C$. This implies that $\varphi_{|c}=\psi_{|c}$. Since $c$ was chosen arbitrarily from $X$,  it follows that  $\varphi\in  \mathrm{Cl}_X(C)$.

      \noindent $(1)\Rightarrow(5)$: by $\{ \varphi\} \subseteq \mathrm{Cl}_X(C)$ we obtain the conclusion.

 \noindent $(5)\Rightarrow(2)$:  By the equivalence of (1) and (2) we have that $\mathrm{Cl}_X(C)= Pol^\omega_{\mathcal I}(\mathcal R_C)$.  Therefore, $\mathcal R_C\subseteq Inv^{\leq\omega}_{\mathcal I}\,\mathrm{Cl}_X(C)\subseteq_{Hp}   Inv^{\leq\omega}_{\mathcal I}\,\varphi$. It follows that  $\varphi\in Pol^\omega_{\mathcal I}(\mathcal R_C)$.

   \noindent $(2)\Rightarrow(6)$: Trivial. 

   \noindent $(6)\Rightarrow(4)$: 
   Let $m\in\mathsf M_\alpha^{\omega}$ be a row-injective matrix with $\mathrm{row}(m)\in X$, so that  $R_{m, C}\in\mathcal W_{C}$. Since $C$ contains all projections, then $m^j\in R_{m, C}$ for every $j$, hence  $m\in \mathsf M_\alpha^{\omega,R_{m, C}}$.
   Moreover,  $R_{m,C}$ is
$\mathcal I_m$-closed since $\mathcal I(m)=\mathcal P(\alpha)$. Therefore 
from $\varphi\in Pol^\omega_{\mathcal I}(\mathcal W_{C})$ it follows that $\varphi[m]\in  R_{m,C}$. 
   \end{proof}


%

\begin{corollary}\label{cor:imp2} Let $C$ be an infinitary $\omega$-clone on $A$, $X$ be an ideal on $A^\omega$ and $\mathcal I$ be an $X$-adequate ideal map.
The following conditions are equivalent:
 \begin{enumerate}
\item $ C$ is  $X$-closed.
\item $ C = Pol^\omega_\mathcal I(Inv^{\leq\omega}_\mathcal I\,  C)$.
\item $ C = Pol^\omega_\mathcal I(Inv^{\omega}_\mathcal I\,  C)$.

\end{enumerate}
\end{corollary}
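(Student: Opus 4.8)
The plan is to derive Corollary~\ref{cor:imp2} directly from Theorem~\ref{thm:duedue2}, exploiting the equivalence of conditions $(1)$ and $(2)$ therein together with the characterisation of $X$-closedness. The key observation is that $C$ is $X$-closed precisely when $C=\mathrm{Cl}_X(C)$, so I would use Theorem~\ref{thm:duedue2}$(1)\Leftrightarrow(2)$, which states $\mathrm{Cl}_X(C)=Pol^\omega_{\mathcal I}(\mathcal R_C)$ for any $X$-adequate $\mathcal I$, to translate closedness into a $Pol$-$Inv$ statement.

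First I would prove $(1)\Rightarrow(2)$. Assuming $C$ is $X$-closed, I have $C=\mathrm{Cl}_X(C)=Pol^\omega_{\mathcal I}(\mathcal R_C)$ by the theorem. Since $\mathcal R_C\subseteq Inv^{\leq\omega}_{\mathcal I}\,C$ (each $R_{m,C}$ is invariant under every element of $C$ by its very definition as $\{\varphi[m]:\varphi\in C\}$, using that $C$ is a clone and closed under the relevant operations), the contravariance of $Pol^\omega_{\mathcal I}$ gives $Pol^\omega_{\mathcal I}(Inv^{\leq\omega}_{\mathcal I}\,C)\subseteq Pol^\omega_{\mathcal I}(\mathcal R_C)=C$. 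The reverse inclusion $C\subseteq Pol^\omega_{\mathcal I}(Inv^{\leq\omega}_{\mathcal I}\,C)$ is immediate from the definition of $Inv^{\leq\omega}_{\mathcal I}$, since every $R$ in that set satisfies $C\subseteq Pol^\omega_{\mathcal I}(R)$. This yields equality, establishing $(2)$.

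Next, $(2)\Rightarrow(3)$ follows from Proposition~\ref{lem:2.5}, which asserts $Pol^\omega_{\mathcal I}(Inv^{\leq\omega}_{\mathcal I}\,C)=Pol^\omega_{\mathcal I}(Inv^{\omega}_{\mathcal I}\,C)$; substituting this identity into $(2)$ gives $(3)$ directly. Finally, for $(3)\Rightarrow(1)$ I would note that $Pol^\omega_{\mathcal I}(Inv^{\omega}_{\mathcal I}\,C)$ is always an $X$-closed set: by Proposition~\ref{prop:ii}$(i)$ each $Pol^\omega_{\mathcal I}(R)$ is $X$-closed whenever $\mathcal I\leq\mathcal X$, and an intersection of $X$-closed sets is $X$-closed. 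Hence if $C$ equals such an intersection, then $C$ is itself $X$-closed, giving $(1)$.

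I expect the main subtlety to lie not in the logical chaining but in verifying the inclusion $\mathcal R_C\subseteq Inv^{\leq\omega}_{\mathcal I}\,C$ used in $(1)\Rightarrow(2)$: one must check that for every row-injective $m$ and every $\psi\in C$, the thread $\psi[r]$ lands in $\mathrm{Cl}_{\mathcal I_m}(R_{m,C})$ for all $r\in\mathsf M_\alpha^{\omega,R_{m,C}}$. This is essentially the content already encapsulated in the $(1)\Rightarrow(2)$ direction of Theorem~\ref{thm:duedue2} applied to the clone $C$ with $C=\mathrm{Cl}_X(C)$, so the argument reduces to a careful bookkeeping of which theorem supplies each inclusion rather than any genuinely new estimate. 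The hypothesis that $\mathcal I$ is $X$-adequate is what makes Theorem~\ref{thm:duedue2} applicable throughout, and it must be invoked consistently.
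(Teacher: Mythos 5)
Your proposal is correct and follows essentially the same route as the paper's proof: $(1)\Rightarrow(2)$ via Theorem \ref{thm:duedue2} and the sandwich $C\subseteq Pol^\omega_{\mathcal I}(Inv^{\leq\omega}_{\mathcal I}\,C)\subseteq Pol^\omega_{\mathcal I}(\mathcal R_C)=C$, the equivalence $(2)\Leftrightarrow(3)$ via Proposition \ref{lem:2.5}, and closure of the cycle via the $X$-closedness of $Pol^\omega_{\mathcal I}$-sets from Proposition \ref{prop:ii}(i). The only cosmetic difference is that you chain $(3)\Rightarrow(1)$ where the paper proves $(2)\Rightarrow(1)$, which is immaterial given $(2)\Leftrightarrow(3)$.
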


\begin{proof} (1) $\Rightarrow$ (2) Since by   Theorem \ref{thm:duedue2} we have $C= Pol^\omega_\mathcal I(\mathcal R_C)$, then $\mathcal R_C \subseteq Inv^{\leq\omega}_\mathcal I\,  C$. From the contravariance of $Pol^\omega_\mathcal I$, it follows that
  $ C\subseteq Pol^\omega_\mathcal I(Inv^{\leq\omega}_\mathcal I\,  C) \subseteq  Pol^\omega_\mathcal I(\mathcal R_C) =C$. 

 (2) $\Rightarrow$ (1) By Lemma \ref{prop:ii}(i). 

  (2) $\Leftrightarrow$ (3) By Proposition \ref{lem:2.5}.
\end{proof}


The following theorem provides additional informations on locally closed infinitary $\omega$-clones. Recall from Notation \ref{not:0} the definition of $Pol^\omega$.

\begin{theorem}\label{thm:local}  Let $C$ be an infinitary $\omega$-clone on $A$, $L$ be the local ideal on $A^\omega$ and $\bot$ be the minimal ideal map.
The following conditions are equivalent, for $\mathcal I\in\{\bot,\mathcal L\}$:
   \begin{enumerate}
   \item[(1)] $C$ is  locally closed;
   \item[(2)] $C= Pol_{\mathcal I}^\omega(Inv_{\mathcal I}^{\leq\omega}\,  C)$;
   \item[(3)] $C= Pol_{\mathcal I}^\omega(Inv_{\mathcal I}^{\omega}\,  C)$;
   \item[(4)] $C= Pol^\omega(Inv^{<\omega}\,  C)$;
     \item[(5)] $ C = Pol_{\mathcal I}^\omega(Inv_{c,\mathcal I}^\omega\,  C)  $;
     \item[(6)]  $ C= Pol^\omega(Inv^\omega_c \,C)$.
  \end{enumerate}
The above equivalent conditions imply that $C= Pol_{\mathcal X}^\omega(Inv_{\mathcal X}^\omega\,  C)$ for every ideal $X$ on $A^\omega$ such that $L\subseteq X\subseteq G$.
\end{theorem}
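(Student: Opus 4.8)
The plan is to reduce everything to the finitary invariant relations and to invoke Corollary \ref{cor:imp2}, instantiated at the local ideal $X=L$. First I would check that both ideal maps $\bot$ and $\mathcal L$ are $L$-adequate. Since $\bot\leq\mathcal L=\mathcal X$ and $\mathcal L\leq\mathcal L$, the comparison condition holds in both cases; and a row-injective $m$ with $\mathrm{row}(m)\in L$ has finitely many pairwise distinct rows, hence finite $dim(m)=\alpha$, so that $\mathrm{Ide}(\alpha)=\{\mathcal P(\alpha)\}$ forces $\mathcal I_m=\mathcal P(\alpha)$ automatically. With $L$-adequacy in hand, the equivalences $(1)\Leftrightarrow(2)\Leftrightarrow(3)$ for $\mathcal I\in\{\bot,\mathcal L\}$ are precisely Corollary \ref{cor:imp2} taken at $X=L$.

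The central observation I would isolate is that for $X=L$ the relations in $\mathcal W_C$ of Definition \ref{def:rmc} are \emph{finitary}: a row-injective $m$ with $\mathrm{row}(m)\in L$ has finite arity, so $R_{m,C}\subseteq A^\alpha$ with $\alpha<\omega$. Moreover, for any finitary relation $S$ the four sets $Pol^\omega_{\mathcal I}(S)$, $Pol^\omega(S)$, $Pol^\omega(S^\top)$ and $Pol^\omega_{\mathcal I}(S^\top)$ all coincide: the first two by Notation \ref{not:0}, the passage $S\leftrightarrow S^\top$ by the argument of Proposition \ref{lem:2.5} (via Lemma \ref{lem:4.9boh}), and the outer equalities because $S^\top$ is locally closed (Lemma \ref{lem:clofin}), so that Proposition \ref{prop:ii}(iii) applies. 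That same proposition shows $S^\top=\overline{S^\top}$ is $I$-closed for \emph{every} $I\in\mathrm{Ide}(\alpha)$, in particular $\mathcal L_m$-closed for every $m$. These facts let me collapse all the parametric $Inv$-variants down to the single finitary condition $(4)$.

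Concretely, for $(1)\Rightarrow(4)$ I would use Theorem \ref{thm:duedue2}, $(1)\Leftrightarrow(6)$: local closedness gives $C=Pol^\omega_{\mathcal I}(\mathcal W_C)$, and since each $R_{m,C}\in\mathcal W_C$ is finitary with $C\subseteq Pol^\omega(R_{m,C})$, we get $\mathcal W_C\subseteq Inv^{<\omega}(C)$; contravariance then yields $Pol^\omega(Inv^{<\omega}\,C)\subseteq Pol^\omega(\mathcal W_C)=C$, the reverse inclusion being trivial. For $(4)\Rightarrow(1)$, the set $Pol^\omega(Inv^{<\omega}\,C)$ is an intersection of sets $Pol^\omega(S)=Pol^\omega_\bot(S)$ with $\bot\leq\mathcal L$, each $L$-closed by Proposition \ref{prop:ii}(i), hence $C$ is locally closed. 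For $(5)$ and $(6)$ I would verify $\{S^\top : S\in Inv^{<\omega}(C)\}\subseteq Inv^\omega_{c,\mathcal I}(C)$ and $\subseteq Inv^\omega_c(C)$: each $S^\top$ is locally closed, is $\mathcal I_m$-closed for all $m$, and satisfies $C\subseteq Pol^\omega(S^\top)=Pol^\omega(S)$. Contravariance together with $Pol^\omega_{\mathcal I}(S^\top)=Pol^\omega(S)$ gives $Pol^\omega_{\mathcal I}(Inv^\omega_{c,\mathcal I}\,C)\subseteq\bigcap_S Pol^\omega(S)=Pol^\omega(Inv^{<\omega}\,C)=C$ by $(4)$, and similarly for $(6)$; the Example identifying $Inv^\omega_{c,\bot}(C)=Inv^\omega_c(C)$ shows the $\bot$-versions of $(5)$ and $(6)$ already agree. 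The reverse implications $(5)\Rightarrow(1)$ and $(6)\Rightarrow(1)$ again follow from Proposition \ref{prop:ii}(i), since $Pol^\omega_{\mathcal I}(R)$ (for $\mathcal I\leq\mathcal L$) and $Pol^\omega(\overline R)$ are $L$-closed, and intersections of $L$-closed sets are $L$-closed.

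Finally, for the concluding assertion, I would assume $(1)$ and fix any $X$ with $L\subseteq X\subseteq G$. The same top-extension trick applies: $\{S^\top : S\in Inv^{<\omega}(C)\}\subseteq Inv^\omega_{\mathcal X}(C)$ because $C\subseteq Pol^\omega(S)=Pol^\omega_{\mathcal X}(S^\top)$, whence $Pol^\omega_{\mathcal X}(Inv^\omega_{\mathcal X}\,C)\subseteq\bigcap_S Pol^\omega_{\mathcal X}(S^\top)=\bigcap_S Pol^\omega(S)=C$, with the reverse inclusion trivial. I expect the only real friction to be the careful bookkeeping relating the many $Inv$-variants, and in particular verifying the $\mathcal I=\mathcal L$ instance of $(5)$ — that $S^\top$ is genuinely $\mathcal L_m$-closed for every matrix $m$ with columns in $S^\top$ — which is exactly where Proposition \ref{prop:ii}(iii) (local closedness implies $I$-closedness for all $I$) does the essential work.
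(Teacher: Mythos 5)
Your proof is correct and follows essentially the same route as the paper's: $L$-adequacy of $\bot$ and $\mathcal L$ plus Corollary \ref{cor:imp2} for $(1)$--$(3)$, the observation that row-injective matrices with $\mathrm{row}(m)\in L$ force $\mathcal W_C$ to be finitary for $(4)$, the inclusion $(Inv^{<\omega}C)^\top\subseteq Inv^\omega_c\,C$ via Lemmas \ref{lem:clofin} and \ref{lem:4.9boh} for $(5)$--$(6)$, and Proposition \ref{prop:ii} for the closedness directions. The only (harmless) divergences are organizational: you derive $(5)$ and $(6)$ in parallel from $(4)$ rather than chaining $(4)\Rightarrow(6)\Rightarrow(5)$, and you prove the final assertion by a direct top-extension computation where the paper simply notes that local closedness implies $X$-closedness (Lemma \ref{lem:X}(2)) and reuses Corollary \ref{cor:imp2}.
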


\begin{proof}
  (1), (2) and (3) are equivalent by Corollary \ref{cor:imp2}, since $\bot$ is $L$-adequate.
 (4) $\Rightarrow$ (1) is trivial. 

  \noindent (1) $\Rightarrow$ (4) Observe that  $\mathcal W_{C}=\{R_{m,C} \ |\ m \in \mathsf M_\alpha^\omega\ \text{is row-injective and}\ \alpha<\omega\}$.
Indeed, when $m$ is row-injective and $row(m)\in L$, the number of rows of $m$ must be finite. Since  $\mathcal W_{C}$ is a set of finitary relations, then $\mathcal W_{C} \subseteq Inv^{<\omega}\,  C$ and  $ C\subseteq Pol^\omega(Inv^{<\omega}\,  C) \subseteq  Pol^\omega(\mathcal W_{C}) =_{\text{Thm }\ref{thm:duedue2}}C$.

\noindent (4) $\Rightarrow$ (6) 
Since the top extensions of finitary relations are locally closed, then we have   $(Inv^{<\omega}\,C)^\top \subseteq Inv^\omega_c \,C$. We get the following chain of inequalities:
$ C\subseteq  Pol^\omega( Inv^\omega_c \,C  ) \subseteq Pol^\omega((Inv^{<\omega}\,C)^\top)
=_{\text{Lem.} \ \ref{lem:4.9boh}} Pol^\omega(Inv^{<\omega}\,C) =C$. 

\noindent (6) $\Rightarrow$ (5)    By Proposition \ref{prop:ii}(iii) we have that $Inv^\omega_c \,C\subseteq Inv_{c,\mathcal I}^\omega\,  C$.   

\noindent (5) $\Rightarrow$ (1) By Proposition \ref{prop:ii} $Pol_{\mathcal I}^\omega(Inv_{c,\mathcal I}^\omega\,  C)$ is $L$-closed, because $\bot\leq\mathcal L$.

\noindent We have concluded the proof that all six conditions of the theorem are equivalent. To establish the last property, we observe that if $C$ is locally closed, then by Lemma \ref{lem:X}(2), $C$ is also $X$-closed.
\end{proof}



\subsection{The finitary case}
A classical result in the theory of clones is that, for every set $F$ of finitary operations, $Pol(Inv(F))=F$ if and only if $F$ is a locally closed clone. By Proposition \ref{prop:4.9tris} the above two conditions are shown equivalent to $Pol^\top(Inv^\top\,F^\top)=F^\top$.

If $R$ is an $\omega$-relation, we define
  $Pol(R)=\{g\in O_A^{(n)} : n\in\omega,\ \forall m\in \mathsf M_\omega^{n,R}\ g[m]\in R  \}$.

\begin{proposition}\label{cor:33} Let $F\subseteq O_A$ be a clone on $A$.
  Then we have:
  \begin{enumerate}
   \item
$F= Pol(Inv^\omega_\mathcal G\, F^\top)$ iff  $F^\top = Pol^\omega_\mathcal G(Inv^\omega_\mathcal G\,   F^\top)$ iff $F^\top$ is globally closed;
  \item If $A$ is countable, then  $F= Pol(Inv_\mathcal G^\omega\, F^\top)$.
\end{enumerate}
\end{proposition}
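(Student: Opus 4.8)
The plan is to prove that the three assertions of (1) — abbreviate them (a) $F=Pol(\mathcal R)$, (b) $F^\top=D$, and (c) $F^\top$ globally closed, where $\mathcal R=Inv^\omega_{\mathcal G}\,F^\top$, $D=Pol^\omega_{\mathcal G}(\mathcal R)$ and $Pol(\mathcal R)=\bigcap_{R\in\mathcal R}Pol(R)$ — are equivalent, and then to read off (2). By Lemma \ref{lem:clomegacl}(1) the set $F^\top$ is an infinitary $\omega$-clone, and the canonical ideal map $\mathcal G$ is $G$-adequate since $\mathcal G_m=\mathcal P(\alpha)$ for every matrix $m$. Hence Corollary \ref{cor:imp2}, applied with $C=F^\top$, $X=G$ and $\mathcal I=\mathcal G$, gives (b) $\Leftrightarrow$ (c) directly (recall that $G$-closed means globally closed). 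The remaining work is to tie (a) to these.

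The bridge I would use is the identity $Pol(\mathcal R)=D_{\mathrm{fin}}$, which rests on the pointwise fact that, for every $\omega$-relation $R$ and every finitary $g$ of arity $n$, one has $g\in Pol(R)$ iff $g^\top\in Pol^\omega_{\mathcal G}(R)$. I would prove this exactly in the style of Lemma \ref{lem:4.9boh}, but extending and restricting \emph{columns} instead of rows: since $\mathcal G_p$ is the discrete ideal we have $\mathrm{Cl}_{\mathcal G_p}(R)=R$, and for $m\in\mathsf M_\omega^{n,R}$ any $p\in\mathsf M_\omega^{\omega,R}$ agreeing with $m$ on the first $n$ columns (completed by a fixed element of $R$, the case $R=\emptyset$ being vacuous) satisfies $g^\top[p]=g[m]$, while conversely every such $p$ restricts to its first $n$ columns. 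Intersecting over $\mathcal R$ and using $D_{\mathrm{fin}}=\{g:g^\top\in D\}$ yields $Pol(\mathcal R)=D_{\mathrm{fin}}$. Then (b) $\Rightarrow$ (a) is immediate: apply $(-)_{\mathrm{fin}}$ to $F^\top=D$ and use $(F^\top)_{\mathrm{fin}}=F$ from Lemma \ref{lem:clomegacl}(1).

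For (a) $\Rightarrow$ (c) the decisive ingredient is the structural fact $\widetilde{F^\top}\subseteq(O_A)^\top$: every $\omega$-operation in the global closure of $F^\top$ depends on only finitely many coordinates, hence is a top extension. I would prove this by contradiction: if $\varphi\in\widetilde{F^\top}$ depended on infinitely many coordinates, I could pick, for each $n$, threads $s^{(n)},t^{(n)}$ agreeing on the first $n$ coordinates with $\varphi(s^{(n)})\neq\varphi(t^{(n)})$; the countable set $d$ of all these threads would be matched by a single $f\in F$ of some arity $m$ with $\varphi_{|d}=(f^\top)_{|d}$, and evaluating at $s^{(m)},t^{(m)}$ (which agree on the first $m$ coordinates) would force $\varphi(s^{(m)})=\varphi(t^{(m)})$, a contradiction. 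Granting this, assume (a). As $D$ is an intersection of $G$-closed sets (Proposition \ref{prop:ii}(i)) containing $F^\top$, it contains $\widetilde{F^\top}$. For $\varphi\in\widetilde{F^\top}$ the structural fact gives $\varphi=g^\top$ with $g$ finitary, and $\varphi\in D$ gives $g\in D_{\mathrm{fin}}=Pol(\mathcal R)=F$ by (a); hence $\varphi\in F^\top$. Thus $\widetilde{F^\top}=F^\top$, i.e. (c), closing the cycle (a) $\Rightarrow$ (c) $\Leftrightarrow$ (b) $\Rightarrow$ (a).

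Statement (2) then follows immediately: for countable $A$, Proposition \ref{prop:ftop}(iii) yields $\widetilde{F^\top}=F^\top$, which is condition (c), so (1) gives $F=Pol(Inv^\omega_{\mathcal G}\,F^\top)$. I expect the main obstacle to be the structural fact $\widetilde{F^\top}\subseteq(O_A)^\top$ feeding (a) $\Rightarrow$ (c): it is the one step sensitive to $|A|$, and it explains the gap between (1) and (2) — without it, assumption (a) only yields $F^\top=D\cap(O_A)^\top$ rather than the desired $F^\top=D$.
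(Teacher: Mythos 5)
Your proof is correct, and its core coincides with the paper's: the equivalence between $F^\top=Pol^\omega_\mathcal G(Inv^\omega_\mathcal G F^\top)$ and global closedness of $F^\top$ is exactly Corollary \ref{cor:imp2} applied to the infinitary $\omega$-clone $F^\top$ with $X=G$ and the $G$-adequate ideal map $\mathcal G$, which is all the paper's one-line proof says. Where you go beyond the paper is in making explicit how the \emph{finitary} operator $Pol$ enters: the paper leaves this to an unstated appeal to Theorem \ref{thm:duedue2}, whereas you supply two concrete lemmas. The column-extension bridge $Pol(\mathcal R)=\bigl(Pol^\omega_\mathcal G(\mathcal R)\bigr)_{\mathrm{fin}}$ is the right analogue of Lemma \ref{lem:4.9boh} for $\omega$-relations and discrete target topology, and it cleanly yields (b) $\Rightarrow$ (a) via $(F^\top)_{\mathrm{fin}}=F$. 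The structural fact $\widetilde{F^\top}\subseteq(O_A)^\top$ is a genuinely useful observation: it holds for arbitrary $A$ (your diagonal argument with the countable witness set $d$ is sound), it is precisely what is needed for (a) $\Rightarrow$ (c), and it isolates, as you say, the only cardinality-sensitive point — Proposition \ref{prop:ftop}(iii) then upgrades it to $\widetilde{F^\top}=F^\top$ for countable $A$, giving (2) exactly as in the paper. The only blemish is a degenerate edge case your bridge glosses over: for $R=\emptyset$ and nullary $g$, the paper's $Pol(\emptyset)$ excludes constants while $Pol^\omega_\mathcal G(\emptyset)=O_A^{(\omega)}$, so the pointwise equivalence can fail at arity $0$; this is harmless under the usual convention (shared by \cite{SZ86}) that clones contain only operations of positive arity, and the paper itself does not address it, but it is worth a remark if nullary operations are admitted.
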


\begin{proof} (1) Recall that if $F$ is a clone on $A$, then $F^\top$ is an infinitary $\omega$-clone. Therefore, we can apply Theorem \ref{thm:duedue2} and Corollary \ref{cor:imp2}.

  \noindent (2) By (1) and Proposition \ref{prop:ftop}(iii).
\end{proof}

%

\subsection{The topological aspects of global and local polymorphisms}\label{sec:pol-inv-inf2}
Let $Pol^\omega_\mathcal G\mathcal R$ be the set  of all $\mathcal G$-polymorphisms of a given set of $\omega$-relations $\mathcal R$.
In this section, we study $Pol^\omega_\mathcal G\mathcal R$ from a topological perspective.



\begin{proposition}\label{prop:fondamentale} Let $R\subseteq A^\omega$ and $C\subseteq O_A^{(\omega)}$. Then we have:
\begin{enumerate}
\item $ Pol^\omega\, \overline R=  Pol^\omega_\bot \,R$.
\item If $R\in Inv^\omega_\bot\, C$, then $\overline R \in Inv_c^\omega\,C$.
\item $Pol^\omega_\mathcal I \, R\subseteq Pol^\omega\, \overline R$, for every ideal map $\mathcal I$.
\end{enumerate}
\end{proposition}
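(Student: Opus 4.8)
The plan is to prove statement (1) first and then obtain (2) and (3) as short consequences. For (1) I would begin by unfolding both sides into their defining conditions. Since the local closure $\overline R$ is locally closed, Notation \ref{not:0} makes $Pol^\omega\,\overline R$ unambiguous, and computing it with the discrete ideal map $\mathcal G$ (whose $\mathcal G_m$-topology is discrete, so $\mathrm{Cl}_{\mathcal G_m}(\overline R)=\overline R$) gives $Pol^\omega\,\overline R=\{\varphi : \forall m\in\mathsf M_\omega^{\omega,\overline R},\ \varphi[m]\in\overline R\}$. On the other side, because $\mathrm{Cl}_{\bot_m}(R)=\overline R$, the remark on $\bot$-polymorphisms yields $Pol^\omega_\bot\,R=\{\varphi : \forall m\in\mathsf M_\omega^{\omega,R},\ \varphi[m]\in\overline R\}$. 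The inclusion $Pol^\omega\,\overline R\subseteq Pol^\omega_\bot\,R$ is then immediate from $R\subseteq\overline R$, which gives $\mathsf M_\omega^{\omega,R}\subseteq\mathsf M_\omega^{\omega,\overline R}$: a larger family of test matrices produces a smaller polymorphism set.

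The substance of (1) is the reverse inclusion, which I would establish by a column-wise approximation. Take $\varphi\in Pol^\omega_\bot\,R$ and $m'\in\mathsf M_\omega^{\omega,\overline R}$; I must show $\varphi[m']\in\overline R$, i.e.\ that for each finite $d\subseteq\omega$ some element of $R$ agrees with $\varphi[m']$ on $d$. Fixing such a $d$, for each column $(m')^j\in\overline R$ the definition of local closure supplies $u^j\in R$ with $u^j_{|d}=(m')^j_{|d}$. Assembling the matrix $m$ with columns $m^j=u^j$ puts $m\in\mathsf M_\omega^{\omega,R}$ and forces $m_i=m'_i$ for every $i\in d$, hence $\varphi[m]_{|d}=\varphi[m']_{|d}$. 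The $\bot$-polymorphism hypothesis now gives $\varphi[m]\in\overline R$, so local closure produces $u\in R$ with $u_{|d}=\varphi[m]_{|d}=\varphi[m']_{|d}$. As $d$ was arbitrary, $\varphi[m']\in\overline R$, establishing $\varphi\in Pol^\omega\,\overline R$.

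Statement (2) then follows directly: $R\in Inv^\omega_\bot\,C$ means $C\subseteq Pol^\omega_\bot\,R$, which by (1) equals $Pol^\omega\,\overline R$; since $\overline R$ is locally closed (the local closure being idempotent), this is exactly the assertion $\overline R\in Inv_c^\omega\,C$. For (3) I would invoke the anti-monotonicity of the $X$-closure in the ideal recorded in Lemma \ref{lem:X}: from $\bot_m=\mathcal P_{\mathrm{fin}}(\omega)\subseteq\mathcal I_m$ one gets $\mathrm{Cl}_{\mathcal I_m}(R)\subseteq\mathrm{Cl}_{\bot_m}(R)=\overline R$ for every $m$. Hence any $\varphi\in Pol^\omega_{\mathcal I}\,R$ satisfies $\varphi[m]\in\overline R$ for all $m\in\mathsf M_\omega^{\omega,R}$, i.e.\ $\varphi\in Pol^\omega_\bot\,R=Pol^\omega\,\overline R$ by (1).

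The only real obstacle is the reverse inclusion in (1): the nontrivial point is recognising that a matrix over $\overline R$ can be approximated, on any prescribed finite set of rows, by a matrix over $R$, so that testing $\varphi$ merely against matrices with columns in $R$ already controls its behaviour on all of $\overline R$. Once this is in hand, parts (2) and (3) are pure bookkeeping with the closure operators and the definitions of $Inv_c^\omega$ and $Pol^\omega_{\mathcal I}$.
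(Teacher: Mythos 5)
Your proof is correct and follows essentially the same route as the paper: the reverse inclusion in (1) is obtained by the same column-wise approximation of a matrix over $\overline R$ by a matrix over $R$ agreeing on a prescribed finite set of rows, and (2) and (3) are derived from (1) exactly as in the paper (via idempotence of local closure and the anti-monotonicity $\mathrm{Cl}_{\mathcal I_m}(R)\subseteq\mathrm{Cl}_{\bot_m}(R)=\overline R$). The only cosmetic difference is that you descend all the way to an element of $R$ in the final step, where the paper concludes $\varphi[m']\in\overline{\overline R}=\overline R$ directly.
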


\begin{proof} (1) The inclusion  $ Pol^\omega\, \overline R \subseteq  Pol^\omega_\bot \,R$ is trivial, because
  $\mathsf M^{ \omega, R}_\omega\subseteq \mathsf M^{\omega,\overline R}_\omega$.
  The other inclusion follows from the fact that for all $m\in  \mathsf M_\omega^{\omega,\overline R}$ and for all $d\subseteq_{\mathrm{fin}} \omega$ there exists
  $m'\in  \mathsf M_\omega^{\omega,R}$ such that $m_i=m'_i$ for every $i\in d$. If
  $\varphi\in Pol^\omega_\bot \,R$ then $\varphi[m']\in \overline R$ and
  $\varphi[m]_{|d}=\varphi[m']_{|d}$.
  By the arbitrariness of $d$ $\varphi[m]\in\overline{\overline  R}=\overline R$. Therefore,
  $\varphi\in  Pol^\omega\, \overline R$.

  (2) 
  The conclusion follows from (1).

  (3) It follows from $Pol^\omega_\mathcal I \, R\subseteq Pol^\omega_\bot  \, R =_{(1)} Pol^\omega\, \overline R$.
\end{proof}

\begin{proposition}\label{prop:closedpol} Let $R\subseteq A^\omega$ be an $\omega$-relation. Then we have:
  \begin{itemize}
     \item[(i)]      $Pol^\omega\, \overline R$ is locally (hence globally, uniform, and trace) closed.
   \item[(ii)] $\mathrm{Cl}_L(Pol^\omega_\mathcal G\, R)\subseteq Pol^\omega\,\overline R$.
    \end{itemize}

\end{proposition}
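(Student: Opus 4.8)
The plan is to obtain both items as essentially immediate consequences of the facts already established for the operators $Pol^\omega_{\mathcal I}$, rather than by any fresh manipulation of matrices; the whole game is to pick the right ideal and invoke the right closure result.

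For (i), the starting point is that $\overline R$ is locally closed by construction, so Notation \ref{not:0} applies and $Pol^\omega\,\overline R = Pol^\omega_{\mathcal L}(\overline R)$, the value being independent of the chosen ideal map. I would then specialise Proposition \ref{prop:ii}(i) to the local ideal $X = L$, whose canonical ideal map is $\mathcal L$. Taking $\mathcal I = \mathcal L$ we trivially have $\mathcal I \leq \mathcal X = \mathcal L$, so Proposition \ref{prop:ii}(i) yields at once that $Pol^\omega_{\mathcal L}(\overline R)$ is $L$-closed, i.e.\ locally closed. The parenthetical assertion that it is then global, uniform and trace closed follows for free from Lemma \ref{lem:X}(2): since $L \subseteq T,\,U,\,G$, every locally closed subset of $O_A^{(\omega)}$ is also trace, uniform and globally closed.

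For (ii), I would begin from Proposition \ref{prop:fondamentale}(3), which gives $Pol^\omega_{\mathcal I}\,R \subseteq Pol^\omega\,\overline R$ for every ideal map, and in particular for $\mathcal I = \mathcal G$. Applying the local closure operator to both sides and using its monotonicity produces $\overline{Pol^\omega_{\mathcal G}\,R} \subseteq \overline{Pol^\omega\,\overline R}$. By part (i) the set $Pol^\omega\,\overline R$ is already locally closed, hence equal to its own local closure, and the desired inclusion $\overline{Pol^\omega_{\mathcal G}\,R} \subseteq Pol^\omega\,\overline R$ drops out.

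I do not anticipate a genuine obstacle: the argument is bookkeeping across the earlier lemmas. The only points demanding care are checking that the two invocations are licensed — namely that the identification $Pol^\omega\,\overline R = Pol^\omega_{\mathcal L}(\overline R)$ is legitimate (it is, since $\overline R$ is locally closed, so Notation \ref{not:0} governs it) and that the hypothesis $\mathcal I \leq \mathcal X$ of Proposition \ref{prop:ii}(i) is met for the choice $X = L$. Once these are in place, each of (i) and (ii) is a one-line deduction.
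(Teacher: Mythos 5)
Your proposal is correct and follows essentially the same route as the paper: both parts reduce to Proposition \ref{prop:ii}(i) (for the local ideal) together with Proposition \ref{prop:fondamentale}(3), the only cosmetic difference being that you realise $Pol^\omega\,\overline R$ as $Pol^\omega_{\mathcal L}(\overline R)$ via Notation \ref{not:0}, whereas the paper writes it as $Pol^\omega_{\bot}(\overline R)$ and uses $\bot\leq\mathcal L$. Your explicit closure-operator argument for (ii) is exactly the step the paper leaves implicit.
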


\begin{proof}

  (i) By Proposition \ref{prop:fondamentale}(1)   we have $Pol^\omega \, \overline R= Pol^\omega_\bot \, R$. The conclusion follows from Proposition \ref{prop:ii}(i) because
  $\bot\leq \mathcal L$, where $\mathcal L$ is the ideal map associated to the local ideal $L$.

(ii) It follows from (i) and Proposition \ref{prop:fondamentale}(3).
\end{proof}

 

 In the following we assume $0< 1$ in set $2$ and we  extend  this partial order pointwise to $2^\omega$. We have that $0^\omega$ is the bottom element, and $1^\omega$ is the top element of $2^\omega$. Additionally, if $s\in 2^\omega$, then we define $\# s=\{ i\in\omega : s_i=1\}$.

Let us introduce an $\omega$-relation $R\subseteq 2^\omega$ that will be used in the proof of Proposition \ref{prop:example} below: 
  \begin{equation}\label{eqq}
    R=\{0^n1^\omega: n\geq 0\}.
  \end{equation}
 The elements of $R$ form a decreasing chain with top element $1^\omega$.
 The following two lemmas, with their proofs left to the reader as they are straightforward,
 characterise the sets $\mathsf M^{\omega,R}_\omega$ and 
  $Pol^\omega_\mathcal G\,  R$.
 

\begin{lemma} Let $m\in \mathsf M^\omega_\omega$ be a matrix on $2$. Then
  $m\in\mathsf M^{\omega,R}_\omega$ if and only if  $m_i \leq  m_{i+1}$ for all $i$ and $\bigvee_{k\geq 0} m_k=1^\omega$.
\end{lemma}


\begin{lemma}\label{cl:2} Let $\varphi: 2^\omega \to 2$. Then
  $\varphi\in Pol^\omega_\mathcal G\,  R$ if and only if it satisfies the following conditions:
\begin{itemize}
\item[(i)] $\varphi$ is monotone.
\item[(ii)] For every directed family $X$ of elements of $2^\omega$ such that $\bigvee X=1^\omega$, there exists $s\in X$ such that $\varphi(s)=1$.

\end{itemize}
\end{lemma}

The monotonicity of $\varphi$ is equivalent to the following condition: for every $s\in 2^\omega$:  $\varphi(s)=1\Rightarrow \forall u\geq s.\ \varphi(u)=1$.
Moreover, if $\varphi$ satisfies item (ii) of Lemma \ref{cl:2}, then $\varphi(1^\omega)=1$.

\begin{proposition}\label{prop:example}
 \begin{enumerate}
   \item There exists  an $\omega$-relation $R$ such that $\mathrm{Cl}_L(Pol^\omega_\mathcal G\, R)\subsetneq Pol^\omega\,  \overline R$.  
   \item  There exists an $\omega$-relation $R$ such that $Pol^\omega_\mathcal G\, R$ is trace closed but not locally closed.
     \item  There exists an $\omega$-relation $R$ such that $Pol^\omega _\mathcal G\, R$ is not trace closed.
\end{enumerate}
\end{proposition}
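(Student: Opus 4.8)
The plan is to handle the three parts with two concrete relations over a two-element set $A=\{a,b\}$, exploiting the gap between landing in $R$ (as required by $Pol^\omega_{\mathcal G}$) and landing in $\overline R$ (as required by $Pol^\omega\,\overline R=Pol^\omega_\bot R$, cf. Proposition \ref{prop:fondamentale}(1)). For (1) and (3) I would use $R_1=[a^\omega]_\omega$, the basic trace of $a^\omega$, i.e. the threads with finitely many $b$'s. Since every thread agrees on each finite set of coordinates with a member of $R_1$, we have $\overline{R_1}=A^\omega$, whence $Pol^\omega\,\overline{R_1}=O_A^{(\omega)}$. Feeding the constant matrix with all columns equal to $a^\omega\in R_1$ shows that every $\psi\in Pol^\omega_{\mathcal G}R_1$ satisfies $\psi(a^\omega)=a$; as this is preserved by the local closure (test the one-point set $\{a^\omega\}$), the constant $\omega$-operation with value $b$ lies in $Pol^\omega\,\overline{R_1}\setminus\overline{Pol^\omega_{\mathcal G}R_1}$, giving the strict inclusion of (1).

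For (3) I would reuse $R_1$ and let $D=\{t_k:k\in\omega\}$ be the rows $m_1,m_2,\dots$ of the matrix of Proposition \ref{prop:ftop}(ii), so that $t_k$ takes the value $b$ exactly at the powers of the $k$-th prime. These threads lie in pairwise distinct basic traces, and each coordinate carries $b$ in at most one $t_k$. Let $\varphi$ be the $\omega$-operation equal to $b$ on $D$ and to $a$ elsewhere. The matrix with rows $t_0,t_1,\dots$ has all columns in $R_1$ but $\varphi$-image $b^\omega\notin R_1$, so $\varphi\notin Pol^\omega_{\mathcal G}R_1$. Conversely, any compact trace meets $D$ in a finite set $F$, and the operation equal to $b$ on a \emph{finite} set $F$ of threads (each carrying a $b$) and to $a$ elsewhere is a $\mathcal G$-polymorphism of $R_1$: having finitely many $b$'s per column bounds the number of rows that can fall in $F$. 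Thus $\varphi$ agrees on every compact trace with some element of $Pol^\omega_{\mathcal G}R_1$, i.e. $\varphi\in\widehat{Pol^\omega_{\mathcal G}R_1}$, so this clone is not trace closed.

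For (2) I would take $R_2=[a^\omega]_\omega\cup[b^\omega]_\omega$, the compact trace of eventually constant threads, and reduce the whole statement to the identity $Pol^\omega_{\mathcal G}R_2=(O_A)^\top$. The inclusion $\supseteq$ is immediate: a top extension $f^\top$ reads only an initial segment of coordinates, and finitely many eventually constant columns yield an eventually constant output. Granting the identity, Proposition \ref{prop:ftop}(i) gives $\overline{(O_A)^\top}=O_A^{(\omega)}\supsetneq(O_A)^\top$, so $Pol^\omega_{\mathcal G}R_2$ is not locally closed, while Proposition \ref{prop:ftop}(iii), applied to the full clone $F=O_A$ on the countable set $A$, gives $\widehat{(O_A)^\top}=(O_A)^\top$, so it is trace closed.

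The main obstacle is the reverse inclusion $Pol^\omega_{\mathcal G}R_2\subseteq(O_A)^\top$, that is, showing a $\mathcal G$-polymorphism $\psi$ of $R_2$ depends on only finitely many coordinates. The plan is to test $\psi$ on \emph{flip matrices} $m_{2k}=s$, $m_{2k+1}=s^{E_k}$ (where $s^{E}$ switches the values of $s$ on $E$), whose columns are eventually constant exactly when each coordinate belongs to finitely or cofinitely many $E_k$. First, interleaving injective single flips forces $\psi(s^{\{j\}})=\psi(s)$ for all but finitely many $j$, and hence, by iteration, $\psi$ is unchanged by any finite flip supported far enough out. If $\psi$ were not a top extension, then by compactness of $A^\omega$ in the product topology some thread $w$ would have no cylinder neighbourhood on which $\psi$ is constant; together with the previous step this produces, for every $n$, an \emph{infinite} $E\subseteq[n,\infty)$ with $\psi(w^{E})\neq\psi(w)$. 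Taking such $E_k\subseteq[k,\infty)$, the flip matrix $m_{2k}=w$, $m_{2k+1}=w^{E_k}$ has eventually constant columns (each coordinate $j$ lies in $E_k$ only for $k\le j$) yet the alternating image $\psi(w),\psi(w^{E_0}),\psi(w),\psi(w^{E_1}),\dots\notin R_2$, contradicting $\psi\in Pol^\omega_{\mathcal G}R_2$. The delicate points are the compactness/continuity step and the column-stabilisation bookkeeping for this final matrix.
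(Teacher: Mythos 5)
Your constructions are correct and, for parts (2) and (3), genuinely different from the paper's. The paper works throughout with $R=\{0^n1^\omega:n\geq 0\}$ for (1) and (2) --- proving (2) by characterising $Pol^\omega_\mathcal G\,R$ as the monotone maps $\varphi^\star$ on $\mathcal P(\omega)$ hitting $1$ on every directed cover of $\omega$, and then checking trace-closedness and the failure of local closedness directly against that characterisation --- and switches to $A=\omega$ with $R=\{s:|\mathrm{set}(s)|<\omega\}$ and a semiconstant operation for (3). Your part (1) is the same mechanism as the paper's (a constant matrix pins down the value at a single thread, which survives local closure), just with $[a^\omega]_\omega$ in place of $\{0^n1^\omega\}$. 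Your part (3) is arguably cleaner than the paper's, since it stays over a two-element set: the key verification --- that the map sending a \emph{finite} $F\subseteq D$ to $b$ and everything else to $a$ is a $\mathcal G$-polymorphism of $[a^\omega]_\omega$, because each $t\in F$ carries a $b$ in some column and that column admits only finitely many $b$'s --- is exactly right, as is the observation that a compact trace meets $D$ in a finite set. Your part (2) buys the conclusion from Proposition \ref{prop:ftop} at the price of the identity $Pol^\omega_\mathcal G\,R_2=(O_A)^\top$, whose hard direction your flip-matrix scheme does establish: the compactness step correctly produces a single thread $w$ with no cylinder of constancy, and the interleaved matrix $m_{2k}=w$, $m_{2k+1}=w^{E_k}$ with $E_k\subseteq[k,\infty)$ has eventually constant columns while its $\psi$-image alternates.

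One sub-step of your part (2) is wrong as stated, though harmlessly so. The claim that single-coordinate flips ``by iteration'' give invariance under finite flips supported far out does not follow: the threshold $N(s)$ beyond which $\psi(s^{\{j\}})=\psi(s)$ depends on the base point $s$, so flipping $j_1$ and then $j_2$ requires $j_2\geq N(s^{\{j_1\}})$, which you do not control. The correct statement is obtained by running your own flip-matrix argument directly with finite sets $E_k$ ($\min E_k\to\infty$) rather than singletons. Better still, the step is unnecessary: the final contradiction in your Step C does not use that the $E_k$ are infinite --- the compactness argument already hands you, for each $n$, some $E_n\subseteq[n,\infty)$ (finite or not) with $\psi(w^{E_n})\neq\psi(w)$, and the alternating matrix built from these works verbatim. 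A second cosmetic point: ``powers of the $k$-th prime'' must mean positive powers, or else coordinate $1$ carries a $b$ in every $t_k$ and the columns of your matrix in part (3) leave $[a^\omega]_\omega$; the paper's own matrix in Proposition \ref{prop:ftop}(ii) carries the same implicit convention.
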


\begin{proof} (1)  Let $R$ be the $\omega$-relation defined in (\ref{eqq}).
  $R$ is not locally closed, because $0^\omega\in \overline R\setminus R$.
  The constant function $c_0: 2^\omega\to 2$ of value $0$ belongs to $Pol^\omega\, \overline R$.
  We now show that  $c_0\not\in \mathrm{Cl}_L(Pol^\omega _\mathcal G\, R)$, by showing that $\varphi(1^\omega)=1$ for every $\varphi\in Pol^\omega _\mathcal G\, R$. Indeed, let $m\in {\mathsf M}^{\omega,R}_\omega$ be the constantly  $1$ matrix and $\varphi\in Pol^\omega _\mathcal G\, R$.   
  Then   $\varphi[m]=(\varphi(1^\omega))^\omega \in R$. Since $0^\omega\notin R$, then $\varphi[m] =1^\omega$; hence $\varphi(1^\omega)=1$. 

(2) Let $R$ again be the $\omega$-relation defined in (\ref{eqq}). 
We first prove that  $Pol^\omega_\mathcal G\,  R$  is not locally closed.
Let $\psi: 2^\omega \to 2$ be the monotone function defined as follows, for every $r\in 2^\omega$:
\begin{equation}\label{eqq2}\psi(r) = \begin{cases} 0&\text{if $|\#r|< \omega$}\\
1&\text{otherwise.}\end{cases}\end{equation}
Let $X$ be the directed family of all $s\in 2^\omega$ such that $|\# s|<\omega$.  We have that $\bigvee X=1^\omega$, but $\psi(s)=0$ for every $s\in X$. From Lemma \ref{cl:2} it follows that $\psi\notin Pol^\omega_\mathcal G\, R$. 

 Let us show that $\psi \in \mathrm{Cl}_L(Pol^\omega_\mathcal G\, R)$: let $k\in\omega$ and $d\in \mathsf M^\omega_k$ be an arbitrary matrix. We need to prove the existence of $\varphi\in Pol_\mathcal G^\omega R$ such that $\psi[d]=\varphi[d]$. We define the $\omega$-operation $\varphi$ as follows, for every $c\in  2^\omega$:
$\varphi(c)=0$ iff $\exists 0\leq i\leq k-1$ such that $|\#d_i|< \omega$ and $c\leq d_i$. It is obvious that $\varphi$ is monotone and $\psi[d]=\varphi[d]$. It remains to show that $\varphi\in Pol_\mathcal G^\omega R$.
 Let $X$ be an arbitrary directed family of elements of $2^\omega$ such that $\bigvee X=1^\omega$. Since the set $\bigcup \{\#d_i :0\leq i\leq k-1,  |\#d_i|< \omega\} =\bigcup \{\#s: \varphi(s)=0\}$ is finite,  there exists $x\in X$ such that $\varphi(x)=1$. Then by Lemma \ref{cl:2},
 $\varphi\in Pol_\mathcal G^\omega R$ and we get the conclusion.

We prove  that $Pol^\omega_\mathcal G\,  R$ is trace closed. Let $\varphi: 2^\omega\to 2$ be an element of the trace closure of $Pol^\omega_\mathcal G\,  R$.
 We will show that $\varphi$ satisfies the conditions of Lemma \ref{cl:2}, which implies that $\varphi\in Pol^\omega_\mathcal G\,  R$.
 Regarding monotonicity, suppose $s\leq u$ and $\varphi(s)=1$. Since  $\varphi$ is simulated by a monotone function on the compact trace generated by $s$ and $u$, it follows that $\varphi(u)=1$. 
  Concerning  condition (ii) of Lemma \ref{cl:2}, consider the basic trace
 $X=[0^\omega]_\equiv$. This  trace is directed such that $\bigvee X = 1^\omega$.
 Since $\varphi$  is simulated on $X$  by a function in  $Pol^\omega_\mathcal G\,  R$, we have that $\varphi(s)=1$ for some $s$ such that $|\# s|<\omega$. Thus, condition (ii) of Lemma \ref{cl:2} is satisfied.
 Therefore, $\varphi$ satisfies the conditions of Lemma \ref{cl:2}, and hence $\varphi \in Pol^\omega_\mathcal{G}\, R$. This concludes the proof that $Pol^\omega_\mathcal{G}\, R$ is trace closed.

 (3) 
 Consider the set $A=\omega$ and the relation $Z=\{s\in A^\omega : | \text{set}(s)| < \omega\}$. We observe that all $\omega$-operations whose range is finite belong to $Pol^\omega_\mathcal{G}(Z)$.
Now, let us define an $\omega$-operation $\varphi: A^\omega \to A$ as follows, for every $s\in A^\omega$:
 $$\varphi(s)=\begin{cases} i&\text{if  $s\in [i^\omega]_\equiv$ for some $i$}\\ 0 &\mbox{ otherwise.} 
 \end{cases}$$
 The $\omega$-operation $\varphi$ is semiconstant and has an infinite range. 
 We will prove that $\varphi$ belongs to the trace closure of $Pol^\omega_\mathcal{G}(Z)$ but not to $Pol^\omega_\mathcal{G}(Z)$ itself.

 To show that $\varphi$ belongs to the trace closure of $Pol^\omega_\mathcal{G}(Z)$, consider a compact trace $\mathsf{d} = [s_0]_\equiv \cup \ldots \cup [s_k]_\equiv$. We define a semiconstant function $\psi$ as follows: for $s \in \mathsf{d}$, $\psi(s) = \varphi(s)$, and for $s$ not in $\mathsf{d}$, $\psi(s) = 0$. Note that $\psi$ has a finite range, and on the compact trace $\mathsf{d}$, $\psi$ coincides with $\varphi$. Therefore, $\varphi$ belongs to the trace closure of $Pol^\omega_\mathcal{G}(Z)$.
To demonstrate that $\varphi \notin Pol^\omega_\mathcal{G}(Z)$, we consider the matrix $m \in \mathsf{M}^{\omega,Z}_\omega$ defined as $m_i = 012 \ldots (i-1)i^\omega$. For this matrix, $\varphi[m] \notin Z$, since $\varphi(m_i)=i$, and its range is infinite.
Hence, we have shown that $\varphi$ belongs to the trace closure of $Pol^\omega_\mathcal{G}(Z)$ but not to $Pol^\omega_\mathcal{G}(Z)$ itself.
\end{proof}

\subsection{$X$-closed $\omega$-clones}\label{sec:matrical}

Infinitary $\omega$-clones have been mainly studied with respect to both local topology and global topology. However, to extend the previous results to $\omega$-clones that are not necessarily infinitary, we require a new concept of polymorphism.

In this section, we aim to introduce and explore this new notion of polymorphism that will allow us to generalise Theorem \ref{thm:duedue2} to a wider class of $\omega$-clones, and to 
characterise trace (resp. uniform) closed sets  of $\omega$-operations.

\begin{definition}\label{def:wpol} Let  $R\subseteq A^\alpha$, $m\in \mathsf M_\alpha^{\omega,R}$ be a matrix and   $\mathcal I$ be an ideal map. We say that an $\omega$-operation $\varphi$ is a \emph{matrical $\mathcal I$-polymorphism of $(R,m)$} if 
   for every $n$ and  $\mathbf r= r^0,\dots,r^{n-1}\in R$, $\varphi[m[\mathbf r]]\in \mathrm{Cl}_{\mathcal I_{m[\mathbf r]}}(R)$.
\end{definition}

  We denote by $M\!Pol^\omega_{\mathcal I}(R,m)$ the set of all  matrical $\mathcal I$-polymorphisms of $(R,m)$.
If $\mathcal R$ is a set of pairs $(R,m)$ of this kind, we define  $M\!Pol^\omega_{\mathcal I}(\mathcal R) = \bigcap_{(R,m)\in\mathcal R} M\!Pol^\omega_{\mathcal I}(R,m)$.

Let  $C$ be a set of $\omega$-operations. We define $M\!Inv^{\leq\omega}_{\mathcal I}\,C$ (resp. $M\!Inv^{\omega}_{\mathcal I}\,C$) to be the set of all pairs $(R,m)$ such that $R\subseteq A^\alpha$ ($\alpha\leq\omega$), $m\in 
\mathsf M_\alpha^{\omega,R}$ (resp. $R\subseteq A^\omega$, $m\in \mathsf M_\omega^{\omega,R}$) and
each $\varphi\in C$ is a matrical $\mathcal I$-polymorphism of $(R,m)$.

Let $\mathcal M_C=\{(R_{m,C},m) : m\in\mathsf M\}$, where $R_{m,C}$ is defined in Definition \ref{def:rmc}.

 \begin{theorem}\label{thm:duedue5} Let  $C$ be an $\omega$-clone on $A$,
   $X$ be an ideal on $A^\omega$ satisfying $L\subseteq X\subseteq G$, and
    $\mathcal I$ be a $X$-adequate ideal map.
Then the following conditions are equivalent, for all $\omega$-operations $\varphi:A^\omega \to A$:
\begin{enumerate}
\item[$(1)$] $\varphi\in \mathrm{Cl}_X(C)$;
\item[$(2)$] $\varphi\in M\!Pol^\omega_{\mathcal I}(\mathcal M_C)$;
\item[$(3)$]  $\varphi[m]\in   \mathrm{Cl}_{\mathcal I_m}(R_{m,C})$, for every row-injective matrix $m$;
\item[$(4)$]  $\varphi[m]\in   R_{m,C}$, for every row-injective matrix $m$ such that $\mathrm{row}(m)\in X$;
\item[$(5)$] $M\!Inv^{\leq\omega}_{\mathcal I}\,\mathrm{Cl}_X(C)\subseteq   M\!Inv^{\leq\omega}_{\mathcal I}\, \varphi$.
\end{enumerate}

\end{theorem}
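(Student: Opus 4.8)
The plan is to prove the cycle $(1)\Rightarrow(2)\Rightarrow(3)\Rightarrow(4)\Rightarrow(1)$ and then close the loop through $(5)$ via $(1)\Rightarrow(5)\Rightarrow(2)$. Since conditions $(1)$, $(3)$, $(4)$ are literally the same as in Theorem~\ref{thm:duedue2} and do not mention matrical polymorphisms, I would reuse the arguments there verbatim for $(3)\Rightarrow(4)$ and $(4)\Rightarrow(1)$: the former reads off $\varphi[m]\in R_{m,C}$ from $X$-adequacy, which forces $\mathcal I_m=\mathcal P(\alpha)$ (hence $\mathrm{Cl}_{\mathcal I_m}(R_{m,C})=R_{m,C}$) when $\mathrm{row}(m)\in X$; the latter, given $c\in X$, chooses a row-injective $m$ with $\mathrm{row}(m)=c$ and extracts $\varphi_{|c}=\psi_{|c}$ from the equality $\varphi[m]=\psi[m]\in R_{m,C}$. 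The genuinely new content is therefore concentrated in $(1)\Rightarrow(2)$.

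For $(1)\Rightarrow(2)$ I would fix a pair $(R_{m,C},m)\in\mathcal M_C$ with $m\in\mathsf M_\alpha^\omega$, an $n\in\omega$, and columns $\mathbf r=r^0,\dots,r^{n-1}\in R_{m,C}$, and set $p=m[\mathbf r]$; the goal is $\varphi[p]\in\mathrm{Cl}_{\mathcal I_p}(R_{m,C})$. Picking $\psi_0,\dots,\psi_{n-1}\in C$ with $r^j=\psi_j[m]$, a short row computation shows $p_i=m_i[\psi_0(m_i),\dots,\psi_{n-1}(m_i)]$ for every $i$. Given $c\in\mathcal I_p$, the hypothesis $\mathcal I\leq\mathcal X$ yields $\{p_i:i\in c\}\in X$, so $\varphi\in\mathrm{Cl}_X(C)$ furnishes some $\varphi'\in C$ agreeing with $\varphi$ on that set. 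The decisive move is to form $\Psi=q_n^A(\varphi',\psi_0,\dots,\psi_{n-1})$, which lies in $C$ because $C$, being an $\omega$-clone, is closed under the \emph{finitary} operation $q_n^A$; by the very definition of $q_n^A$ one gets $\Psi(m_i)=\varphi'(p_i)=\varphi(p_i)$ for all $i\in c$, so $\varphi[p]_{|c}=\Psi[m]_{|c}$ with $\Psi[m]\in R_{m,C}$. Since $c\in\mathcal I_p$ was arbitrary, $\varphi[p]\in\mathrm{Cl}_{\mathcal I_p}(R_{m,C})$, as required.

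The remaining implications are formal. For $(2)\Rightarrow(3)$, given a row-injective $m$, the pair $(R_{m,C},m)$ lies in $\mathcal M_C$ (and $m\in\mathsf M_\alpha^{\omega,R_{m,C}}$ since $C$ contains the projections, so $m^j=\e_j^A[m]\in R_{m,C}$); choosing in Definition~\ref{def:wpol} the trivial substitution reproducing $m$ (e.g.\ $n=1$ and $r^0=m^0\in R_{m,C}$) returns exactly $\varphi[m]\in\mathrm{Cl}_{\mathcal I_m}(R_{m,C})$. For $(1)\Rightarrow(5)$ I would simply note $\{\varphi\}\subseteq\mathrm{Cl}_X(C)$ and unfold the definition of $M\!Inv^{\leq\omega}_{\mathcal I}$. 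Finally, $(5)\Rightarrow(2)$ follows once $(1)\Leftrightarrow(2)$ is in hand, for this identifies $\mathrm{Cl}_X(C)$ with $M\!Pol^\omega_{\mathcal I}(\mathcal M_C)$, whence $\mathcal M_C\subseteq M\!Inv^{\leq\omega}_{\mathcal I}\,\mathrm{Cl}_X(C)\subseteq M\!Inv^{\leq\omega}_{\mathcal I}\,\varphi$, which says precisely that $\varphi\in M\!Pol^\omega_{\mathcal I}(\mathcal M_C)$.

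I expect the main obstacle to be exactly the step $(1)\Rightarrow(2)$, and it is here that the notion of matrical polymorphism earns its keep. In Theorem~\ref{thm:duedue2} the analogous step composed $\varphi'$ with \emph{all} the columns of an arbitrary $r\in\mathsf M_\alpha^{\omega,R_{m,C}}$ through the infinitary operation $q^A$, which is simply not available in a general (non-infinitary) $\omega$-clone. Confining the test matrices to those of the form $m[\mathbf r]$, which differ from $m$ in only finitely many columns, is precisely what lets the finitary $q_n^A$ replace $q^A$; the definition of matrical $\mathcal I$-polymorphism is engineered so that this single finitary composition suffices.
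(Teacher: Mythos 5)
Your proposal is correct and follows essentially the same route as the paper's proof: the key step $(1)\Rightarrow(2)$ is carried out exactly as in the paper, by writing $m[\mathbf r]_i = m_i[\psi_0(m_i),\dots,\psi_{n-1}(m_i)]$ and composing the approximant $\varphi'$ with the $\psi_j$ via the finitary $q_n^A$ to produce $\Psi\in C$ with $\Psi[m]$ agreeing with $\varphi[m[\mathbf r]]$ on the prescribed index set, while $(2)\Rightarrow(3)$, $(3)\Rightarrow(4)$, $(4)\Rightarrow(1)$, $(1)\Rightarrow(5)$ and $(5)\Rightarrow(2)$ all coincide with the paper's arguments (the paper takes $n=0$ rather than your $n=1$ with $r^0=m^0$ in $(2)\Rightarrow(3)$, an immaterial difference). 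Your closing remark about why restricting to matrices of the form $m[\mathbf r]$ lets the finitary $q_n^A$ replace the infinitary $q^A$ is precisely the point of the definition.
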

\begin{proof}

  \noindent $(1)\Rightarrow(2)$:  
 Let  $\varphi\in \mathrm{Cl}_X(C)$ and  consider the pair $(R_{m, C},m)$ for some row-injective matrix $m\in \mathsf M_\alpha^\omega$. We aim to show that, for every  $\mathbf r=r^0,\dots,r^{n-1}\in R_{m, C}$,  $\varphi[m[\mathbf r]]\in \mathrm{Cl}_{\mathcal I_{m[\mathbf r]}}(R_{m, C})$.  
 
 Let $c\in \mathcal I(m[\mathbf r]) \subseteq \mathcal X(m[\mathbf r])$ and $\bar c=\{m[\mathbf r]_i: i\in c\}\in X$.
 By the given hypothesis there exists $\varphi' \in C$ such that $\varphi_{|\bar c}=\varphi'_{|\bar c}$. Additionally, there exist
 $\psi_i\in  C$ ($0\leq i\leq n-1$) such that $r^i=  \psi_i[m]$. 
Let $\Psi_n=q_n^A(\varphi',\psi_0,\dots,\psi_{n-1})\in C$. By definition of $R_{m, C}$ we have $\Psi_n[m]\in R_{m, C}$. We claim that $\varphi(m[\mathbf r]_i)=\Psi_n(m_i)$ for every $i\in c$. We have that:
$\Psi_n(m_i)=\varphi'(\psi_0(m_i),\ldots,\psi_{n-1}(m_i),m_i^{n},\ldots  ) =
\varphi'(r_i^0\ldots,r_i^{n-1},m_i^{n}\ldots  ) =\varphi'(m[\mathbf r]_i)=\varphi(m[\mathbf r]_i)$.
Thus, we have shown that $\varphi(m[\mathbf r]_i)=\Psi_n(m_i)$ for every $i\in c$.
By the arbitrariness of $c$ we get the conclusion $\varphi[m[\mathbf r]]\in \mathrm{Cl}_{\mathcal I_{m[\mathbf r]}}(R_{m, C})$.

  \noindent $(2)\Rightarrow(3)$: Since $m\in \mathsf M_\alpha^{\omega,R_{m, C}}$ and $\varphi$ is a matrical  $\mathcal I$-polymorphism of $(R_{m,C},m)$, by choosing $n=0$, we get $\varphi[m]\in  \mathrm{Cl}_{\mathcal I_m}(R_{m,C})$.

  \noindent $(3)\Rightarrow(4)$: 
   Since $\mathcal I(m)=\mathcal P(\alpha)$ whenever $\mathrm{row }(m)\in X$, by $\varphi[m]\in   \mathrm{Cl}_{\mathcal I_m}(R_{m,C})$ we get the conclusion $\varphi[m]\in R_{m,C}$.

  \noindent $(4)\Rightarrow(1)$ and   $(1)\Rightarrow(5)$: 
 As in Theorem \ref{thm:duedue2}.

 \noindent $(5)\Rightarrow(2)$:  By the equivalence of (1) and (2) we have that $\mathrm{Cl}_X(C)= M\!Pol^\omega_{\mathcal I}(\mathcal M_C)$.  Then $\mathcal M_C\subseteq M\!Inv^{\leq\omega}_{\mathcal I}\,\mathrm{Cl}_X(C)\subseteq_{Hp}   M\!Inv^{\leq\omega}_{\mathcal I}\,\varphi$. It follows that  $\varphi\in M\!Pol^\omega_{\mathcal I}(\mathcal M_C)$.
\end{proof}

\begin{corollary}\label{cor:mah!} Let $C$ be an $\omega$-clone, $X$ be an ideal on $A^\omega$ and $\mathcal I$ be an $X$-adequate ideal map.
Then the following conditions are equivalent: 
  \begin{enumerate}
\item $ C$ is  $X$-closed.
\item $ C = M\!Pol^\omega_\mathcal I(M\!Inv^{\leq\omega}_\mathcal I\,  C)$.
\item $ C = M\!Pol^\omega_\mathcal I(M\!Inv^{\omega}_\mathcal I\,  C)$.
\end{enumerate}
\end{corollary}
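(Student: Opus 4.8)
The plan is to replicate the argument of Corollary~\ref{cor:imp2}, with Theorem~\ref{thm:duedue5} playing the role that Theorem~\ref{thm:duedue2} played there, and with a matrical counterpart of Proposition~\ref{lem:2.5} supplying the equivalence of (2) and (3). The engine is the equivalence of conditions (1) and (2) of Theorem~\ref{thm:duedue5}, which I read as the identity $\mathrm{Cl}_X(C)=M\!Pol^\omega_{\mathcal I}(\mathcal M_C)$.

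For $(1)\Rightarrow(2)$, assume $C=\mathrm{Cl}_X(C)$, so that $C=M\!Pol^\omega_{\mathcal I}(\mathcal M_C)$. Each pair $(R_{m,C},m)\in\mathcal M_C$ lies in $M\!Inv^{\leq\omega}_{\mathcal I}\,C$: indeed $m\in\mathsf M_\alpha^{\omega,R_{m,C}}$ because the columns $m^j=\e_j[m]$ belong to $R_{m,C}$, and every $\psi\in C=M\!Pol^\omega_{\mathcal I}(\mathcal M_C)$ is by construction a matrical $\mathcal I$-polymorphism of that pair. Hence $\mathcal M_C\subseteq M\!Inv^{\leq\omega}_{\mathcal I}\,C$, and the contravariance of $M\!Pol^\omega_{\mathcal I}$ together with the Galois inclusion $C\subseteq M\!Pol^\omega_{\mathcal I}(M\!Inv^{\leq\omega}_{\mathcal I}\,C)$ gives $C\subseteq M\!Pol^\omega_{\mathcal I}(M\!Inv^{\leq\omega}_{\mathcal I}\,C)\subseteq M\!Pol^\omega_{\mathcal I}(\mathcal M_C)=C$. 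For $(2)\Rightarrow(1)$, I would first record that $M\!Pol^\omega_{\mathcal I}(\mathcal R)$ is always $X$-closed, by exactly the reasoning of Proposition~\ref{prop:ii}(i): writing $M\!Pol^\omega_{\mathcal I}(R,m)=\bigcap_{n,\mathbf r}(\overline{m[\mathbf r]})^{-1}\big(\mathrm{Cl}_{\mathcal I_{m[\mathbf r]}}(R)\big)$, each closure $\mathrm{Cl}_{\mathcal I_{m[\mathbf r]}}(R)$ is $\mathcal X_{m[\mathbf r]}$-closed (as $\mathcal I\leq\mathcal X$, by Lemma~\ref{lem:X}(2)), so its preimage under the continuous map $\overline{m[\mathbf r]}$ of Lemma~\ref{lem:cont} is $X$-closed, and an arbitrary intersection of $X$-closed sets is $X$-closed. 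Thus $C=M\!Pol^\omega_{\mathcal I}(M\!Inv^{\leq\omega}_{\mathcal I}\,C)$ forces $C$ to be $X$-closed.

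The equivalence $(2)\Leftrightarrow(3)$ is where the genuine work lies, and I expect it to be the main obstacle, being the matrical analogue of Proposition~\ref{lem:2.5}. One inclusion is free: $M\!Inv^{\omega}_{\mathcal I}\,C\subseteq M\!Inv^{\leq\omega}_{\mathcal I}\,C$, so contravariance yields $M\!Pol^\omega_{\mathcal I}(M\!Inv^{\leq\omega}_{\mathcal I}\,C)\subseteq M\!Pol^\omega_{\mathcal I}(M\!Inv^{\omega}_{\mathcal I}\,C)$. For the reverse, I would trade each pair with a finite-arity first coordinate for a pair over $A^\omega$ via a top extension. Given $(R,m)\in M\!Inv^{\leq\omega}_{\mathcal I}\,C$ with $R\subseteq A^\alpha$, $\alpha<\omega$, and $m\in\mathsf M_\alpha^{\omega,R}$, I form $\hat m\in\mathsf M_\omega^{\omega,R^\top}$ by keeping the rows of $m$ in the positions $i<\alpha$ and filling the positions $i\geq\alpha$ with a fixed constant; then every column $\hat m^j$ restricts to $m^j\in R$, so $\hat m^j\in R^\top$. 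Since $R$ is finitary its $\mathcal I_p$-topology is discrete, and since $R^\top$ is locally closed (Lemma~\ref{lem:clofin}) all of its $\mathcal I_p$-closures collapse by Proposition~\ref{prop:ii}(iii); hence for an arbitrary $\omega$-operation $\psi$, being a matrical $\mathcal I$-polymorphism of $(R,m)$ means $\psi[m[\mathbf r]]\in R$ for all finite tuples $\mathbf r$ from $R$, while being one of $(R^\top,\hat m)$ means $\psi[\hat m[\mathbf u]]\in R^\top$ for all finite tuples $\mathbf u$ from $R^\top$.

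These two conditions are identified by the row-restriction correspondence underlying Lemma~\ref{lem:4.9boh}: for $i<\alpha$ the $i$-th row of $\hat m[\mathbf u]$ coincides with the $i$-th row of $m[\mathbf r]$, where $\mathbf r$ is the componentwise restriction of $\mathbf u$ to the first $\alpha$ coordinates, so $(\psi[\hat m[\mathbf u]])_{|\alpha}=\psi[m[\mathbf r]]$; and every tuple from $R$ arises as such a restriction. Consequently $(R,m)\in M\!Inv^{\leq\omega}_{\mathcal I}\,C$ if and only if $(R^\top,\hat m)\in M\!Inv^{\omega}_{\mathcal I}\,C$, and any $\varphi\in M\!Pol^\omega_{\mathcal I}(M\!Inv^{\omega}_{\mathcal I}\,C)$, being a matrical polymorphism of $(R^\top,\hat m)$, is then a matrical polymorphism of $(R,m)$. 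This supplies the missing inclusion and hence $(2)\Leftrightarrow(3)$. The delicate point to get right is precisely this last bookkeeping: the choice of $\hat m$ must keep its columns inside $R^\top$ while ensuring that the closures on both sides genuinely trivialise, so that finite-column substitutions of $m$ and of $\hat m$ match up exactly under restriction.
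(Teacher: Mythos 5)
Your proof is correct and follows exactly the route the paper intends: the paper omits the argument with the remark that it is ``similar to that of Corollary \ref{cor:imp2}'', and you faithfully execute that plan, supplying the matrical analogues of Proposition \ref{prop:ii}(i) (via Lemma \ref{lem:cont}) and of Proposition \ref{lem:2.5} (via the top-extension and Lemma \ref{lem:4.9boh}-style row restriction) that the paper leaves implicit.
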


\begin{proof} The proof is similar to that of Corollary \ref{cor:imp2} and it is omitted.
%
%
\end{proof}

\begin{remark} Let $T$ be the trace ideal, $U$ be the uniform ideal on $A^\omega$, and $X\in\{T,U\}$. In the hypothesis of Definition \ref{def:wpol}, we have that $\mathcal X(m[\mathbf r])= \mathcal X(m)$ for every $\mathbf r$. Then 
an $\omega$-operation $\varphi$ is a matrical $\mathcal X$-polymorphism of $(R,m)$ if 
   for every $n$ and  $\mathbf r= r^0,\dots,r^{n-1}\in R$, $\varphi[m[\mathbf r]]\in \mathrm{Cl}_{\mathcal X_m}(R)$.
\end{remark}

\begin{corollary} Let $C$ be an $\omega$-clone. Then $C$ is trace (resp. uniform) closed iff $C= M\!Pol^\omega_\mathcal T(M\!Inv^{\omega}_\mathcal T\,  C)$ (resp. $C= M\!Pol^\omega_\mathcal U(M\!Inv^{\omega}_\mathcal U\,  C)$). 
\end{corollary}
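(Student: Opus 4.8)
The plan is to obtain this corollary as an immediate instance of Corollary \ref{cor:mah!}, specialised to $X=T$ with $\mathcal I=\mathcal T$ for the trace statement and to $X=U$ with $\mathcal I=\mathcal U$ for the uniform statement. By definition ``trace closed'' means ``$T$-closed'' and ``uniform closed'' means ``$U$-closed'', while the operators $M\!Pol^\omega_\mathcal T(M\!Inv^\omega_\mathcal T\,-)$ and $M\!Pol^\omega_\mathcal U(M\!Inv^\omega_\mathcal U\,-)$ are precisely those appearing in condition $(3)$ of that corollary. Hence the two asserted equivalences are exactly the instances of the equivalence $(1)\Leftrightarrow(3)$ of Corollary \ref{cor:mah!}, once its hypotheses are checked for these two choices.

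First I would confirm that $T$ and $U$ are ideals on $A^\omega$ meeting the requirement $L\subseteq X\subseteq G$ imposed in Theorem \ref{thm:duedue5}, on which Corollary \ref{cor:mah!} rests. For $T$, every finite $d$ sits inside the compact trace $\bigcup_{s\in d}[s]_\omega$, so $L\subseteq T$, while each member of $T$ is countable, so $T\subseteq G$. For $U$, a finite set meets every basic trace in at most $|d|$ points, giving $L\subseteq U\subseteq G$. Both families are closed downward and under finite unions, hence are genuine ideals.

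The only substantive point is the $X$-adequacy of $\mathcal T$ and $\mathcal U$, which I would derive from the general fact that for \emph{every} ideal $X$ the canonical ideal map $\mathcal X$ is $X$-adequate. Trivially $\mathcal X\leq\mathcal X$; and if $m\in\mathsf M_\alpha^\omega$ is row-injective with $\mathrm{row}(m)\in X$, then for each $c\subseteq\alpha$ the set $\{m_i:i\in c\}$ is a subset of $\mathrm{row}(m)\in X$, hence lies in $X$ by down-closedness, so $c\in\mathcal X_m$; thus $\mathcal X_m=\mathcal P(\alpha)$. This is exactly the discreteness already recorded after Definition \ref{def:rmc}.

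With these verifications in place, Corollary \ref{cor:mah!} applies to $(T,\mathcal T)$ and to $(U,\mathcal U)$ and yields the stated equivalences. There is no real obstacle here: the argument is a bookkeeping specialisation, and the lone nontrivial ingredient, the $X$-adequacy of $\mathcal X$, reduces to the down-closedness of an ideal. I note that the simplification recorded in the preceding Remark, namely $\mathcal X_{m[\mathbf r]}=\mathcal X_m$ for $X\in\{T,U\}$, is not required for the proof but explains the simple shape that the matrical polymorphism condition takes in these two cases.
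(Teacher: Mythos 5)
Your proposal is correct and matches the paper's intent exactly: the corollary is stated without proof precisely because it is the instance of Corollary \ref{cor:mah!} for $(X,\mathcal I)=(T,\mathcal T)$ and $(U,\mathcal U)$, and your verifications — that $T$ and $U$ are ideals with $L\subseteq X\subseteq G$ and that any canonical ideal map $\mathcal X$ is $X$-adequate by down-closedness of the ideal — are exactly the bookkeeping the paper leaves implicit (the adequacy point is the discreteness observation recorded after Definition \ref{def:rmc}). Nothing further is needed.
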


%
%
%

\section{Inv$^\omega$-Pol$^\omega$ and the $\omega$-relation clones}\label{sec:eight}

In this section we study the opposite side $Inv^\omega$-$Pol^\omega$ of the Galois connection. We define the notion of $\omega$-relation clone and show that the operators acting in such $\omega$-relation
clones preserve the local closedness of the $\omega$-relations. To compare relation clones and $\omega$-relation clones we introduce the notion of cut-closedness that will be used in Section \ref{sec:nine}.
In particular, we show that the following relation clones are cut-closed:
(a) $\mathcal R_{\mathrm{fin}}=\{S\in Rel_A:S^\top\in \mathcal R\}$, for every $\omega$-relation clone $\mathcal R$ on set $A$; (b) 
Relation clones on a finite set.

\subsection{Locally closed $\omega$-relations}
Let $R\subseteq A^\omega$ be an $\omega$-relation and $s\in A^\omega$.
For every function $\sigma:\omega\to\omega$ and $\Gamma\subseteq \omega$, we define:
\begin{itemize}
\item $\sigma(s)= (s_{\sigma 0},s_{\sigma 1},\dots)$;
\item $R(\sigma)=\{r\in A^\omega\ |\ \sigma(r)\in R\}$;
\item $\exists_\Gamma R=\{u\in A^\omega\ |\ \exists r\in R:\ u_i= r_i\ \text{for every $i\in \omega\setminus\Gamma$}\}$.
\end{itemize}

\begin{lemma}\label{lem:10.4}
  Let $R\subseteq A^\omega$ be an $\omega$-relation and  $S\subseteq A^n$ be a finitary relation. Then, we have:
  \begin{itemize}
\item[(i)]  The $\omega$-relation $\exists_\Gamma R$ is locally closed, for every cofinite subset $\Gamma$ of $\omega$;
\item[(ii)] $\exists_b( S^\top)$ is a locally closed $\omega$-relation, for every finite subset $b$ of $\omega$.
\item[(iii)] If $R$ is locally closed, then $R(\sigma)$ is a locally closed $\omega$-relation for every permutation $\sigma$ of $\omega$.
\end{itemize}
\end{lemma}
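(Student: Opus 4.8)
The plan is to handle parts (i) and (ii) through a single elementary observation, and part (iii) by a change-of-variables argument on finite windows. The key observation, which I would isolate first, is: \emph{if there is a finite set $D\subseteq\omega$ and a set $P\subseteq A^D$ such that $u\in R \Leftrightarrow u_{|D}\in P$ for all $u\in A^\omega$, then $R$ is locally closed.} Indeed, if $s\in\overline R$, applying the definition of local closure to the finite set $D$ yields some $r\in R$ with $s_{|D}=r_{|D}$; since $r\in R$ forces $r_{|D}\in P$, we get $s_{|D}\in P$ and hence $s\in R$. Part (i) is immediate from this: as $\Gamma$ is cofinite, $D:=\omega\setminus\Gamma$ is finite, and unwinding the definition shows that $u\in\exists_\Gamma R$ exactly when $u_i=r_i$ for all $i\in D$ for some $r\in R$, i.e.\ when $u_{|D}$ lies in the projection $P:=\{r_{|D}:r\in R\}$. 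Membership in $\exists_\Gamma R$ thus depends only on the coordinates in the finite set $D$, and the principle applies.

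For part (ii) the set $b$ is finite but $\omega\setminus b$ is cofinite, so at first glance membership in $\exists_b(S^\top)$ fixes infinitely many coordinates; this is the one point that requires care and is the main (if mild) obstacle. The resolution is that, by definition, membership in $S^\top$ depends only on the coordinates $0,\dots,n-1$. Writing $b_1=b\cap\{0,\dots,n-1\}$ and $D:=\{0,\dots,n-1\}\setminus b_1$ (a finite set), I would verify that $u\in\exists_b(S^\top)$ if and only if there is a tuple $t\in S$ with $t_{|D}=u_{|D}$: the coordinates of $b$ lying above $n-1$ are irrelevant to membership in $S^\top$, while those in $b_1$ are precisely the coordinates over which the existential quantifier ranges. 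Hence $u\in\exists_b(S^\top)\Leftrightarrow u_{|D}\in P$, where $P$ is the projection of $S$ onto $D$, and the general principle again yields local closedness. (One could alternatively phrase this as saying $\exists_b(S^\top)$ is itself the top extension of a finitary relation, and invoke Lemma \ref{lem:clofin}.)

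For part (iii) I would argue directly. Assume $R=\overline R$, let $s\in\overline{R(\sigma)}$, and observe that it suffices to prove $\sigma(s)\in R$; since $R$ is locally closed, it is enough to show $\sigma(s)\in\overline R$. Fix a finite $d\subseteq\omega$ and set $e:=\sigma(d)$, which is again finite. Because $s\in\overline{R(\sigma)}$, there is $t\in R(\sigma)$ with $s_{|e}=t_{|e}$; then $r':=\sigma(t)\in R$ by definition of $R(\sigma)$, and for every $i\in d$ we have $r'_i=t_{\sigma i}=s_{\sigma i}=\sigma(s)_i$, so $\sigma(s)_{|d}=r'_{|d}$. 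By the arbitrariness of $d$ we get $\sigma(s)\in\overline R=R$, whence $s\in R(\sigma)$ and $R(\sigma)$ is locally closed. I note that this argument only uses that $\sigma$ sends finite sets to finite sets, so bijectivity of $\sigma$ is not actually needed; the permutation hypothesis may simply be kept as stated.
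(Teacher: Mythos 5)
Your proof is correct and follows essentially the same route as the paper's: parts (i) and (ii) both come down to the observation that membership in the relation in question is determined by a finite set of coordinates (the paper phrases (ii) as $\exists_b(S^\top)=\exists_{b\cup(\omega\setminus n)}(S^\top)$ and invokes (i), which is your computation in disguise), and (iii) is the same transfer of agreement on finite windows along $\sigma$. Your closing remark on (iii) is also sound: choosing $e=\sigma(d)$ up front only needs that $\sigma$ maps finite sets to finite sets, so the argument works for arbitrary $\sigma:\omega\to\omega$, whereas the paper's phrasing (passing from an arbitrary $c$ to $\sigma^{-1}(c)$ and using that these exhaust the finite sets) genuinely leans on $\sigma$ being a permutation.
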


\begin{proof} 
(i) Let $s\in \overline{\exists_\Gamma R}$. Since $\Gamma$ is a cofinite set and $s\in \overline{\exists_\Gamma R}$, there exists $r\in \exists_\Gamma R$ such that $s_{|\omega\setminus \Gamma}=r_{|\omega\setminus \Gamma}$.
Since $r\in \exists_\Gamma R$, there exists $u\in R$ such that $r_{|\omega\setminus \Gamma}=u_{|\omega\setminus \Gamma}$. Therefore, we have $s_{|\omega\setminus \Gamma}=u_{|\omega\setminus \Gamma}$.
This implies that $s\in \exists_\Gamma R$. Thus, we have shown that $\overline{\exists_\Gamma R}= \exists_\Gamma R$.

(ii) Since $\exists_b (S^\top) =\exists_{b\cup (\omega\setminus n)} (S^\top)$, the conclusion follows from (i).

(iii) If $s\in \overline{R(\sigma)}$, then for every $c\subseteq_{\mathrm{fin}}\omega$, there exists $r\in R(\sigma)$ (i.e., $\sigma(r)\in R$) such that $s_i=r_i$ for every $i\in c$. This implies that $\sigma(s)_{j}= \sigma(r)_j$ for every $j\in \sigma^{-1} (c)$. Since $\sigma$ is a permutation, the arbitrariness of $c$ implies the arbitrariness of $\sigma^{-1}(c)$. As $R$ is locally closed, we can conclude that $\sigma(s)\in R$, and therefore $s\in R(\sigma)$. Thus, $R(\sigma)$ is locally closed.
\end{proof}

\begin{lemma}
   Let $A$ be a finite set and  $R\subseteq A^\omega$ be a locally closed $\omega$-relation. Then $\exists_b R$  is locally closed for every $b\subseteq_{\mathrm{fin}}\omega$.
\end{lemma}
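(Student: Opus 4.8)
The plan is to establish the nontrivial inclusion $\overline{\exists_b R}\subseteq \exists_b R$, since the reverse one holds for any set. So I fix $s\in\overline{\exists_b R}$ and aim to produce a single $r\in R$ with $r_i=s_i$ for all $i\in\omega\setminus b$, which is exactly what membership of $s$ in $\exists_b R$ demands. Note the contrast with Lemma \ref{lem:10.4}(i): there $\Gamma$ is cofinite, so membership in $\exists_\Gamma R$ only constrains finitely many coordinates and no hypothesis on $A$ is needed; here $b$ is finite, so $\omega\setminus b$ is cofinite and a genuine compactness argument, powered by $|A|<\omega$, is required.

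First I would unwind the hypothesis locally. For each finite $d\subseteq\omega$ with $b\subseteq d$, membership $s\in\overline{\exists_b R}$ yields some $u\in\exists_b R$ with $u_{|d}=s_{|d}$, and the definition of $\exists_b R$ yields $r^{(d)}\in R$ agreeing with $u$ off $b$. Restricting to $d\setminus b$, which is disjoint from $b$, gives $r^{(d)}_{|d\setminus b}=s_{|d\setminus b}$. Thus every finite $d\supseteq b$ admits a witness $r^{(d)}\in R$ that copies $s$ on $d\setminus b$, while the value of $r^{(d)}$ on the finite set $b$ is so far uncontrolled.

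The crux, and the only place where finiteness of $A$ enters, is to freeze a single pattern on $b$. For each map $\alpha\colon b\to A$ set $D_\alpha=\{d\supseteq b \text{ finite}: \exists r\in R,\ r_{|d\setminus b}=s_{|d\setminus b},\ r_{|b}=\alpha\}$. Every finite $d\supseteq b$ lies in some $D_\alpha$ by the previous step, and there are only finitely many $\alpha$ since $A^b$ is finite. I would then argue that some $D_{\alpha}$ is cofinal in the directed poset of finite supersets of $b$: otherwise each $D_\alpha$ avoids some finite $e_\alpha$, and the finite union $E=b\cup\bigcup_\alpha e_\alpha$ would be a finite superset of $b$ lying in some $D_\alpha$ with $E\supseteq e_\alpha$, a contradiction. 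Fix such a cofinal $\alpha$.

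Finally I would assemble the global witness. Define the thread $t$ by $t_i=s_i$ for $i\notin b$ and $t_{|b}=\alpha$. Given any finite $d$, cofinality of $D_\alpha$ provides $d'\supseteq d\cup b$ in $D_\alpha$, hence $r\in R$ with $r_{|d'\setminus b}=s_{|d'\setminus b}$ and $r_{|b}=\alpha$; then $r_{|d}=t_{|d}$, since on $d\cap b$ both equal $\alpha$ and on $d\setminus b\subseteq d'\setminus b$ both equal $s$. By arbitrariness of $d$, $t\in\overline R=R$, and as $t$ agrees with $s$ off $b$ we conclude $s\in\exists_b R$. The main obstacle is precisely this passage from the family of local witnesses $r^{(d)}$ to one coherent element of $R$; the finiteness of $A$ is what makes the coordinating pigeonhole step go through.
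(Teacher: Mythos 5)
Your proof is correct, but it takes a genuinely different route from the paper's. The paper first rewrites $\exists_b R$ as the intersection $\bigcap_{\Gamma}\exists_\Gamma R$ over all cofinite $\Gamma\supseteq b$ (each factor being locally closed by Lemma \ref{lem:10.4}(i)), and then, to show this intersection is contained in $\exists_b R$, invokes compactness of the product space $A^\omega$ to extract a cluster point $r$ of the net of witnesses $(r_\Gamma)_{\Gamma}$; the argument then runs through neighbourhoods and tails of that net. You instead observe that a global witness only needs to be pinned down on the finite set $b$ itself --- off $b$ it is forced to agree with $s$ --- so the only compactness required is that of the finite set $A^{b}$, which you exploit by a pigeonhole/cofinality argument over the $|A|^{|b|}$ possible patterns $\alpha\colon b\to A$, followed by an explicit construction of the limit thread $t$ and an appeal to $\overline R=R$. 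Both proofs use the finiteness of $A$ in an essential way, but yours is more elementary and self-contained (no nets, no cluster points, no auxiliary identity involving cofinite sets), whereas the paper's version packages the argument so that it reuses the already-proved closure of $\exists_\Gamma R$ for cofinite $\Gamma$ and the standard compactness of $A^\omega$. The one step worth stating carefully in a write-up is the cofinality claim: if no $D_\alpha$ were cofinal, each would be avoided above some finite $e_\alpha$, and the finite union $E=b\cup\bigcup_\alpha e_\alpha$ would belong to no $D_\alpha$, contradicting that every finite $d\supseteq b$ lies in some $D_\alpha$; you do state this correctly.
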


\begin{proof} 
Since $A$ is a finite set, the space $A^\omega$ with the local topology is compact. 

Let $b\subseteq_{\mathrm{fin}}\omega$ and $co(b)=\{\Gamma : b\subseteq\Gamma\subseteq_\mathrm{cofin}\omega\}$.
According to Lemma \ref{lem:10.4}(i), the $\omega$-relation $X= \bigcap_{\Gamma \in co(b)}\exists_\Gamma R$ is locally closed. If we  demonstrate that $\exists_b R=X$, we obtain the desired conclusion.


(1) To show $\exists_b R\subseteq X$, consider an element $s\in \exists_b R$. This means that there exists $r\in R$ such that $s_{|\omega\setminus b}= r_{|\omega\setminus b}$. Now, suppose $b\subseteq\Gamma\subseteq_\mathrm{cofin}\omega$. Since $\omega\setminus \Gamma\subseteq\omega\setminus b$, it follows that $s_{|\omega\setminus \Gamma}= r_{|\omega\setminus \Gamma}$. Therefore, $s\in \exists_\Gamma R$ for every $\Gamma \in co(b)$, which implies $s\in X$. Thus, we have shown $\exists_b R\subseteq X$.

(2) To show $X \subseteq \exists_b R$, let $s\in X$. We have:
\begin{equation}\label{eq:uno}
\text{For every $\Gamma\in co(b)$, there exists $r_\Gamma\in R$ such that $s_{|\omega\setminus \Gamma}= (r_\Gamma)_{|\omega\setminus \Gamma}$}
\end{equation}
Since the space $A^\omega$ is compact, the net $[r_\Gamma]_{\Gamma \in co(b)}$, ordered by $\supseteq$, has a cluster point $r$.

We have $r\in R$ since $R$ is locally closed and $r_\Gamma \in R$ for every $r_\Gamma$ in the net. It remains to show that $s_{\omega\setminus b}=r_{\omega\setminus b}$, which will imply that $s\in \exists_b R$.

Since $r$ is a cluster point of the net $[r_\Gamma]$, every open neighborhood of $r$ intersects every tail of the net. Suppose $\Gamma\supseteq b$ is cofinite. Then $O_{r, \omega\setminus\Gamma}=\{u\in A^\omega: r_{|\omega\setminus\Gamma}=u_{|\omega\setminus\Gamma}\}$ is an open neighborhood of $r$. Since the tail $\{r_\Delta: b\subseteq \Delta\subseteq \Gamma\}$ intersects $O_{r, \omega\setminus\Gamma}$, there exists a cofinite set $\Gamma'$ such that $b\subseteq\Gamma'\subseteq \Gamma$ and $r_{|\omega\setminus\Gamma}= (r_{\Gamma'})_{|\omega\setminus\Gamma}$.

By hypothesis (\ref{eq:uno})  we have $s_{|\omega\setminus \Gamma'}= (r_{\Gamma'})_{|\omega\setminus \Gamma'}$, and since $\omega\setminus\Gamma\subseteq \omega\setminus\Gamma'$, we obtain $r_{|\omega\setminus\Gamma} = s_{|\omega\setminus\Gamma}$. By considering the arbitrariness of the cofinite set $\Gamma$ containing $b$, we derive $r_{|\omega\setminus b} = s_{|\omega\setminus b}$. Hence, $s\in \exists_b R$, and we conclude that $X \subseteq \exists_b R$.
\end{proof}

\subsection{$\omega$-Relation clones}
In this section we introduce the $\omega$-relation clones and prove that $Inv^\omega_\mathcal G(C)$ is an $\omega$-relation clone for every set $C$ of $\omega$-operations.

\begin{definition}
Let $\mathcal R$ be a set of $\omega$-relations on $A$.
$\mathcal R$ is called an \emph{$\omega$-relation clone on $A$} if it satisfies the following conditions:
\begin{itemize}
\item[(i)] $\Delta_A\in  \mathcal R$.
\item[(ii)]  If $R\in \mathcal R$, then
  $R(\sigma)\in \mathcal R$ for every permutation $\sigma:\omega\to\omega$.
\item[(iii)] If $R_j\in\mathcal R$ for every $j\in J$, then $\bigcap_{j\in J} R_j\in \mathcal R$.
  \item[(iv)]  If $R\in \mathcal R$, then $\exists_\Gamma R\in \mathcal R$ for every cofinite $\Gamma\subseteq \omega$.
  \end{itemize}
\end{definition}

Notice that $\Delta_A$ is a locally closed $\omega$-relation.

$\mathcal R$ is called a \emph{$c\omega$-relation clone} if each $R\in\mathcal R$ is a locally closed $\omega$-relation.


The set of $\omega$-relation clones on $A$ is closed under arbitrary intersection. 
If $\mathcal R$ is a set of $\omega$-relations, then $\langle \mathcal R\rangle_{\omega}$ denotes the least $\omega$-relation clone including $\mathcal R$. 
It is easy to show that 
 $Pol^\omega_\mathcal G \mathcal R = Pol^\omega_\mathcal G \langle \mathcal R\rangle_{\omega}$ for every set  $\mathcal R$  of $\omega$-relations.

%

Let $m\in \mathsf M_\omega^\omega$ a matrix  and $\sigma:\omega\to\omega$ be a permutation. Then we denote by $\sigma(m)$ the matrix, whose rows are defined as follows: $\sigma(m)_i=m_{\sigma(i)}$.

\begin{lemma}\label{lem:pp} Let $C\subseteq O_A^{(\omega)}$. Then
  $Inv_\mathcal G^\omega\, C$ and $Inv_{c}^\omega\, C$ are  $\omega$-relation clones.
\end{lemma}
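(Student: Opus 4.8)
The plan is to verify the four defining conditions (i)--(iv) of an $\omega$-relation clone directly for $Inv^\omega_\mathcal{G}\,C$, and then to obtain the statement for $Inv_c^\omega\,C$ by reducing it to the first case. Throughout I use the fact that, since $\mathcal{G}_m=\mathcal P(\omega)$ induces the discrete topology on $A^\omega$, membership $\varphi\in Pol^\omega_\mathcal{G}(R)$ for $R\subseteq A^\omega$ amounts to the condition $\varphi[m]\in R$ for every $m\in\mathsf M_\omega^{\omega,R}$; hence $R\in Inv^\omega_\mathcal{G}\,C$ iff $\varphi[m]\in R$ for all $\varphi\in C$ and all matrices $m$ whose columns lie in $R$. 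The whole proof is then a matter of translating each clone operation on relations into the corresponding operation on the columns of the defining matrices and tracking how it interacts with the row-wise application $m\mapsto\varphi[m]$.

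For condition (i), a matrix $m\in\mathsf M_\omega^{\omega,\Delta_A}$ has every column of the form $a^\omega$, so all its rows coincide; hence $\varphi[m]$ is constant and lies in $\Delta_A$. Condition (iii) is immediate: if every column of $m$ lies in $\bigcap_{j}R_j$ then $m\in\mathsf M_\omega^{\omega,R_j}$ for each $j$, so $\varphi[m]\in R_j$ for each $j$, whence $\varphi[m]\in\bigcap_{j}R_j$. For condition (ii), the key observation is that the row permutation $\sigma(m)_i=m_{\sigma i}$ has columns $\sigma(m)^j=\sigma(m^j)$ and satisfies $\varphi[\sigma(m)]=\sigma(\varphi[m])$. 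Thus, given $m\in\mathsf M_\omega^{\omega,R(\sigma)}$, the matrix $\sigma(m)$ has all its columns in $R$, so $\varphi[\sigma(m)]\in R$, i.e. $\sigma(\varphi[m])\in R$, which is exactly $\varphi[m]\in R(\sigma)$. For condition (iv), given $m\in\mathsf M_\omega^{\omega,\exists_\Gamma R}$, I choose for each column $m^j$ a witness $r^j\in R$ agreeing with $m^j$ on $\omega\setminus\Gamma$ and assemble the matrix $p$ with $p^j=r^j\in R$. Then $\varphi[p]\in R$, and since $m_i=p_i$ for every $i\in\omega\setminus\Gamma$, I get $\varphi[m]_i=\varphi[p]_i$ on $\omega\setminus\Gamma$; taking $\varphi[p]$ as witness shows $\varphi[m]\in\exists_\Gamma R$.

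For $Inv_c^\omega\,C$ I would first record, via Notation \ref{not:0} and Proposition \ref{prop:ii}(iii), that for a locally closed $R$ one has $Pol^\omega(R)=Pol^\omega_\mathcal{G}(R)$, so that $Inv_c^\omega\,C$ is precisely the set of locally closed members of $Inv^\omega_\mathcal{G}\,C$. Conditions (i)--(iv) for $Inv_c^\omega\,C$ then follow by combining the four verifications above with the preservation of local closedness: $\Delta_A$ is locally closed (as already noted), an arbitrary intersection of locally closed sets is locally closed (closed sets of the local topology), $R(\sigma)$ is locally closed by Lemma \ref{lem:10.4}(iii), and $\exists_\Gamma R$ is locally closed for cofinite $\Gamma$ by Lemma \ref{lem:10.4}(i).

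The main obstacle is purely bookkeeping: keeping the row/column indexing of matrices straight, so that the three genuinely nontrivial steps---the identities $\sigma(m)^j=\sigma(m^j)$ and $\varphi[\sigma(m)]=\sigma(\varphi[m])$ for (ii), and the coordinate-wise agreement $m_i=p_i$ on $\omega\setminus\Gamma$ for (iv)---come out correctly. No deeper difficulty arises, since each clone operation on $\omega$-relations mirrors a transparent operation on the columns of the defining matrices, and the $Pol^\omega=Pol^\omega_\mathcal{G}$ identification handles the passage to the locally closed case.
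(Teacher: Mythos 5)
Your proof is correct and follows essentially the same route as the paper's: the same column-to-row bookkeeping for conditions (i)--(iv) (including the identities $\sigma(m)^j=\sigma(m^j)$, $\varphi[\sigma(m)]=\sigma(\varphi[m])$ for permutations and the witness matrix $p$ for $\exists_\Gamma$), and the same appeal to Lemma \ref{lem:10.4} together with the identification $Pol^\omega(R)=Pol^\omega_\mathcal{G}(R)$ for locally closed $R$ to handle $Inv_c^\omega\,C$. No gaps.
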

\begin{proof}
\begin{itemize}
\item[(i)] It is easy to show that every $\omega$-operation is a $\mathcal G$-polymorphism of $\Delta_A$. 

\item[(ii)] Let  $\sigma$ be a permutation of $\omega$. We prove that $\varphi$ is a $\mathcal G$-polymorphism of an $\omega$-relation $R$ if and only if $\varphi$ is a $\mathcal G$-polymorphism of  $R(\sigma)$. Let $m\in\mathsf M_\omega^{\omega,R(\sigma)}$. Since $\sigma(m)\in \mathsf M_\omega^{\omega,R}$ then $\varphi[\sigma(m)]= \sigma(\varphi[m])\in R$; hence 
 $\varphi[m]\in R(\sigma)$.

\item[(iii)] If $\varphi$ is a $\mathcal G$-polymorphism of $R_j$ for every $j\in J$, then it is also a $\mathcal G$-polymorphism of $\bigcap_{j\in J} R_j$.

\item[(iv)] Let $\Gamma\subseteq_{\mathrm{cofin}} \omega$, $\varphi$ be a $\mathcal G$-polymorphism of $R\subseteq A^\omega$ and $m\in\mathsf M_\omega^{\omega,\exists_\Gamma R}$.  Then there exists a matrix $p\in\mathsf M_\omega^{\omega,R}$ such that $(p^j)_{|\omega\setminus\Gamma}=(m^j)_{|\omega\setminus\Gamma}$ for every $j\in\omega$. For every $i\in \omega\setminus\Gamma$, $p_i=m_i$, so that $\varphi(p_i)=\varphi(m_i)$. We conclude that $\varphi[m]\in \exists_\Gamma R$, because $\varphi[p]\in R$.
\end{itemize}
\noindent Concerning  $Inv_{c}^\omega\, C$, we use   Lemma \ref{lem:10.4} and the same arguments as above. Lemma \ref{lem:10.4} ensures  that the operations (i)-(iv) defining the $\omega$-relation clones preserve the locally closedness of the $\omega$-relations in $Inv_{c}^\omega\, C$.
\end{proof}

%
%
%

In the following example we show that,  in general, the $\omega$-relation clone $Inv^\omega_\mathcal G\, C$ is not closed under union of directed sets of $\omega$-relations.

\begin{example} Let $R_n=\{s\in \omega^\omega : \forall j.\, 0\leq s_j\leq n-1\}$ ($n\geq 1$) be a chain of locally closed $\omega$-relations. The $\omega$-operation $\varphi$ such that $$\varphi(s)=\begin{cases}j &\text{if $\exists k$ s.t. $s_i=j$ for all $i\geq k$}\\s_0&\text{otherwise}
\end{cases}$$
 is a $\mathcal G$-polymorphism of each $R_n$, but it is not a $\mathcal G$-polymorphism of $R=\bigcup R_n$. Indeed, $\varphi[m]=(0,1,2,\dots,n,n+1,\dots)\notin R$, where
 $m\in \mathsf M_\omega^{\omega,R}$ is such that $m^j=m_j= (0,1,\dots,j-1,j,\dots,j,\dots)$.
\end{example}

\subsection{Cut-closed relation clones}

If $R$ is an $\omega$-relation on $A$ and $k\in\omega$, then $R[k]\subseteq A^k$ is defined as follows:
$(a_0,\ldots,a_{k-1})\in R[k]$ if there exists $s\in R$ such that
$s_i=a_i$ for all $0\leq i\leq k-1$. 
Note that $R\subseteq R[k]^\top$ for every $k\in\omega$.

\begin{definition}\label{def:cutc}
A relation clone $\mathcal S$ of finitary relations (see Definition \ref{def:src}) is called \emph{cut-closed} if, for every family of relations $S_i\in \mathcal S$ ($i\in I$), we have that 
$(\bigcap_{i\in I} S_i^\top)[n] \in\mathcal S$ for every $n\in\omega$. 
\end{definition}

In the hypothesis of Definition \ref{def:cutc}, we have that 
$S_i^\top[n]\in\mathcal S$. Indeed, 
if $k$ is the arity of $S_i$, then   
$$S_i^\top[n]=\begin{cases}S_i\times A^{n-k}&\text{if $k\leq n$}\\
\pi_{f_n^k}(S_i)&\text{if $k>n$}\end{cases}$$
where, for every $n\leq k$, $f_n^k$ is the canonical embedding of $ n$ into $ k$.
Moreover, $$(\bigcap_{i\in I} S_i^\top)[n] \subseteq \bigcap_{i\in I} (S_i^\top[n]),$$ but the opposite inclusion is false in general.
For example, if $S_1=\{(0,a)\}$ and $S_2=\{(0,b)\}$, then $(S_1^\top \cap S_2^\top)[1]=\emptyset \neq S_1^\top[1] \cap S_2^\top[1]= \{(0)\}$.

%


\vspace{0.3cm}

The definition of $\mathcal R_{\mathrm{fin}}$ is given in Section \ref{sec:rel}.

\begin{proposition}\label{lem:finrel}  If $\mathcal R$ is an $\omega$-relation clone on $A$, then $\mathcal R_{\mathrm{fin}}$ is a cut-closed relation clone. 
\end{proposition}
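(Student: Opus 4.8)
The plan is to verify directly that $\mathcal R_{\mathrm{fin}}$ satisfies the four axioms of a relation clone (Definition \ref{def:src}(i)--(iv)) together with cut-closedness, by translating each finitary operation, through the top-extension operator $(-)^\top$, into a combination of the three $\omega$-relation clone operations available in $\mathcal R$: permutations $R(\sigma)$, arbitrary intersections, and cofinite existential quantifications $\exists_\Gamma$. The guiding principle is that $S\in\mathcal R_{\mathrm{fin}}$ means exactly $S^\top\in\mathcal R$, so it suffices to realise the top extension of each newly constructed relation inside $\mathcal R$.

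Three of the clone axioms, as well as cut-closedness, are straightforward. For intersection (iv), $(\bigcap_i S_i)^\top=\bigcap_i S_i^\top$, so closure of $\mathcal R$ under intersection gives the result. For the diagonal (i), one checks that $(\Delta_A^{(n)})^\top=\{u : u_0=\cdots=u_{n-1}\}$ equals $\exists_\Gamma\Delta_A$ with $\Gamma=\omega\setminus\{0,\dots,n-1\}$ cofinite; since $\Delta_A\in\mathcal R$, this lies in $\mathcal R$. For the product (ii) of an $n$-ary $S$ with an $m$-ary $U$ one writes
$$(S\times U)^\top=S^\top\cap U^\top(\sigma),$$
where $\sigma$ is any permutation of $\omega$ with $\sigma(i)=n+i$ for $i<m$ (such a permutation exists because the two $m$-element sets $\{0,\dots,m-1\}$ and $\{n,\dots,n+m-1\}$ have infinite complements), so that $U^\top(\sigma)=\{r : (r_n,\dots,r_{n+m-1})\in U\}$. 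Finally, cut-closedness is almost immediate: if $S_i\in\mathcal R_{\mathrm{fin}}$ then $R:=\bigcap_i S_i^\top\in\mathcal R$, and the key observation is that
$$(R[n])^\top=\{t : \exists s\in R,\ t_i=s_i\ (i<n)\}=\exists_\Gamma R$$
for $\Gamma=\omega\setminus\{0,\dots,n-1\}$ cofinite, whence $(R[n])^\top\in\mathcal R$ and so $R[n]\in\mathcal R_{\mathrm{fin}}$.

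The main obstacle is axiom (iii), closure under $\pi_f$ for an arbitrary map $f:m\to n$, since this single operation must simultaneously permute, identify and delete coordinates, none of which is itself a primitive of $\mathcal R$. I would handle it through an intermediate relation recording both an input and its output. Put
$$W=S^\top\cap\bigcap_{i<m}E_i,\qquad E_i=\{u\in A^\omega : u_{n+i}=u_{f(i)}\},$$
so that $W$ stores some $s\in S$ in its first $n$ coordinates and the output tuple $s_{f(0)},\dots,s_{f(m-1)}$ in coordinates $n,\dots,n+m-1$. Each binary diagonal $E_i$ belongs to $\mathcal R$, being obtained as $((\Delta_A^{(2)})^\top)(\rho_i)$ for a permutation $\rho_i$ with $\rho_i(0)=n+i$ and $\rho_i(1)=f(i)$ (these are distinct since $f(i)<n\le n+i$), while $(\Delta_A^{(2)})^\top\in\mathcal R$ by axiom (i). Hence $W\in\mathcal R$ by intersection. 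It then remains to project onto the output block and reindex: with $\Gamma=\omega\setminus\{n,\dots,n+m-1\}$ one computes $\exists_\Gamma W=\{u : \exists s\in S,\ u_{n+i}=s_{f(i)}\ (i<m)\}$, and applying a permutation $\tau$ with $\tau(n+i)=i$ yields $(\exists_\Gamma W)(\tau)=(\pi_f(S))^\top\in\mathcal R$. This gives $\pi_f(S)\in\mathcal R_{\mathrm{fin}}$ and completes the verification that $\mathcal R_{\mathrm{fin}}$ is a cut-closed relation clone.
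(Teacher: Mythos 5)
Your proof is correct and follows essentially the same route as the paper: each relation-clone axiom and cut-closedness is realised by expressing the relevant top extension via the three $\omega$-relation-clone primitives (permutation, intersection with diagonals of the form $\exists_\Gamma\Delta_A$, and cofinite existential quantification). The only difference is bookkeeping in the $\pi_f$ case --- the paper first relocates $S$ to a high coordinate block by a permutation and projects onto $\{0,\dots,k-1\}$ at the end, whereas you keep $S$ in place, build the output in coordinates $n,\dots,n+m-1$, and permute back afterwards --- which is an equivalent arrangement of the same ingredients.
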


\begin{proof} We prove that $\mathcal R_{\mathrm{fin}}$ is a relation clone. 

(i)  $\Delta_A^{(n)}= \{ a^n: a\in A \}\in \mathcal R_{\mathrm{fin}} $, because $(\Delta_A^{(n)})^\top=\exists_{\omega\setminus n} \Delta_A\in  \mathcal R$.
 
(ii) Let $U\subseteq A^n$ and $S\subseteq A^k$ elements of $\mathcal R_{\mathrm{fin}}$. Then $U\times S\in \mathcal R_{\mathrm{fin}}$, because $(U\times S)^\top=U^\top \cap (S^\top(\sigma))$, where
$\sigma =(0,n)\cdots (k-1,n+k-1)$ is a product of transpositions.

(iii) Let $S\subseteq A^n$ be an element of $\mathcal R_{\mathrm{fin}}$ and $f: k\to n$ be a map. We recall that $\pi_f(S)=\{s\circ f: s\in S\}\subseteq A^k$.
Let $t > n+k$ and $\sigma$ be the permutation  $(0,t)\cdots(n-1,t+n-1)$. Then $S^\top(\sigma) = \{s\in A^\omega: (s_{t},\dots,s_{t+n-1})\in S\}$. Consider the following sequence of $\omega$-relations  $R_i\in \mathcal R,\ 0\leq i\leq k$:
$R_0= S^\top(\sigma)$,  $R_{i+1} = 
  R_i \cap \exists_{\omega\setminus\{i,t+f(i)\}} \Delta_A$.
Then $\pi_f(S)^\top = \exists_{\omega\setminus k}R_k$.

(iv) The closure under arbitrary intersection of finitary relations having the same arity is trivial.

Finally,  $\mathcal R_{\mathrm{fin}}$ is cut-closed, because $((\bigcap_{i\in I} S_i^\top)[n])^\top =  \exists_{\omega\setminus  n}\bigcap_{i\in I} S_i^\top$.
\end{proof}


In the following example, we show that $\mathcal R_{\mathrm{fin}}$ is not in general a strong relation clone (see Definition \ref{def:src}).

\begin{example}\label{exa:star}
 Let $A = \omega \cup \{\star\}$ and $\mathcal S$ be the chain of  $k$-ary relations  $S_n=\{s\in A^k : s_j\in n,\ \text{for every $0\leq j\leq k-1$}\}$ ($n\geq 1$).   We show that the cut-closed relation clone $(Inv_\mathcal G^\omega(Pol^\omega_\mathcal G\,\mathcal S^\top))_{\mathrm{fin}}$ is not strong.

 For all $s\in A^\omega$, we define $\mathrm{nat}(s) = \{s_i: s_i\in\omega\}$ and $\varphi: A^\omega\to A$ as follows:
 $$\varphi(s)=\begin{cases}\mathrm{max}(\mathrm{nat}(s))&\text{if the maximum of $\mathrm{nat}(s)$ exists}\\ \star &\text{otherwise.}\end{cases}$$
The $\omega$-operation $\varphi$ is a $\mathcal G$-polymorphism of each $S_n^\top$, but it is not a $\mathcal G$-polymorphism of $S^\top$, where $S=\bigcup S_n$. Indeed,
$\varphi[m]= \star^\omega$ for the  matrix $m\in \mathsf M^{\omega,S^\top}_\omega$ such that $m^j= j^\omega$. Therefore, $S_n\in (Inv_\mathcal G^\omega(Pol^\omega_\mathcal G\,\mathcal S^\top))_{\mathrm{fin}}$ for every $n\geq 1$, but  $\bigcup S_n\notin (Inv_\mathcal G^\omega(Pol^\omega_\mathcal G\,\mathcal S^\top))_{\mathrm{fin}}$.
\end{example}


The following proposition shows that the relation clones on finite sets are always cut-closed.

\begin{proposition} 
Every relation clone on a finite set is cut-closed.
\end{proposition}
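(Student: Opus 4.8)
The plan is to reduce the possibly infinite intersection $\bigcap_{i\in I}S_i^\top$ to finitely many of its \emph{finite} subintersections, exploiting that $A^\omega$ is compact in the product topology (since $A$ is finite and discrete), and then to realise each finite subintersection inside $\mathcal S$ using the relation-clone operations. Throughout I write $k_i$ for the arity of $S_i$.

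First I would settle the finite case: for a nonempty finite $F\subseteq I$ I claim $(\bigcap_{i\in F}S_i^\top)[n]\in\mathcal S$ (the degenerate case $F=\emptyset$ gives $A^n$, which lies in $\mathcal S$ by Definition \ref{def:src}(i)--(ii)). Letting $K$ be the maximum of $n$ and of all $k_i$ ($i\in F$), each $S_i^\top[K]=S_i\times A^{K-k_i}$ belongs to $\mathcal S$, as recalled just after the definition of cut-closedness, so $U:=\bigcap_{i\in F}S_i^\top[K]\in\mathcal S$ by Definition \ref{def:src}(iv). Since every constraint imposed by the $S_i$ ($i\in F$) bears only on the first $K$ coordinates, a direct check gives $(\bigcap_{i\in F}S_i^\top)[n]=\pi_{f_n^K}(U)$, where $f_n^K$ is the canonical embedding of $n$ into $K$; hence it lies in $\mathcal S$ by Definition \ref{def:src}(iii).

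Next I would use finiteness of $A$ a second time: there are only finitely many $n$-ary relations on $A$, so $\{(\bigcap_{i\in F}S_i^\top)[n] : F\subseteq I \text{ finite}\}$ is a \emph{finite} subfamily of $\mathcal S$, and its intersection $V$ therefore lies in $\mathcal S$ by Definition \ref{def:src}(iv). It then remains to identify $V$ with $(\bigcap_{i\in I}S_i^\top)[n]$. The inclusion $(\bigcap_{i\in I}S_i^\top)[n]\subseteq V$ is immediate from monotonicity of the projection $R\mapsto R[n]$ applied to $\bigcap_{i\in I}S_i^\top\subseteq\bigcap_{i\in F}S_i^\top$.

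The main obstacle, and the only genuinely topological step, is the reverse inclusion $V\subseteq(\bigcap_{i\in I}S_i^\top)[n]$. Given $\boldsymbol a=(a_0,\dots,a_{n-1})\in V$, I would set $C_i=\{s\in A^\omega : s_{|n}=\boldsymbol a\}\cap S_i^\top$; each $C_i$ is closed, being the intersection of two cylinder sets. For every finite $F\subseteq I$, the membership $\boldsymbol a\in(\bigcap_{i\in F}S_i^\top)[n]$ furnishes a thread witnessing $\bigcap_{i\in F}C_i\neq\emptyset$, so the family $\{C_i\}_{i\in I}$ has the finite intersection property. Compactness of $A^\omega$ then yields some $s\in\bigcap_{i\in I}C_i$, and this $s$ satisfies $s_{|n}=\boldsymbol a$ and $s\in\bigcap_{i\in I}S_i^\top$, whence $\boldsymbol a\in(\bigcap_{i\in I}S_i^\top)[n]$. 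This closes the argument.
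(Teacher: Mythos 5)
Your proof is correct, but it is organised differently from the one in the paper. The paper fixes the whole index set $I$ and varies the \emph{arity}: it pads or projects each $S_i$ to a $j$-ary relation $S_{i,j}$ for every $j\geq n$, establishes the identity $R[n]=\bigcap_{j\geq n}\pi_{f_n^j}\bigl(\bigcap_{i\in I}S_{i,j}\bigr)$, and proves the nontrivial inclusion by building a finitely branching tree of partial witnesses and invoking K\"onig's Lemma. You instead fix the arity $n$ and vary the \emph{finite subfamily} $F\subseteq I$: you first show each $(\bigcap_{i\in F}S_i^\top)[n]$ lies in $\mathcal S$ by purely algebraic manipulations, then use the finiteness of the set of $n$-ary relations on $A$ to see that only finitely many distinct relations occur, and finally identify their intersection with $R[n]$ by a compactness/finite-intersection-property argument on the closed cylinder sets $C_i$. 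The two compactness tools (K\"onig's Lemma on a finitely branching tree versus compactness of $A^\omega$ with the FIP) are of course equivalent here, but your decomposition has the merit of cleanly separating the algebraic content (closure of $\mathcal S$ under the finite subintersections, using only \emph{finite} instances of Definition \ref{def:src}(iv), in line with the remark following that definition) from the single topological step; the paper's version is more combinatorially self-contained and produces the witnessing thread explicitly as an infinite branch. Both arguments use finiteness of $A$ in two distinct places, and both are complete.
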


\begin{proof} Let $\mathcal S$ be a relation clone on a finite set $A$, $S_i\in \mathcal S$ ($i\in I$) be a family of finitary relations, $R=\bigcap_{i\in I} S_i^\top$ and, for every $j\leq k$, $f_j^k$ be the map embedding $ j$ into $ k$.

Let $j\in\omega$ and $i\in I$. If $S_i$ has arity $k \leq j$, then the relation $S_{i,j}=S_i\times A^{j-k}\in \mathcal S$ has arity $j$. Similarly, if $S_i$ has arity $k > j$, then the relation $S_{i,j}=\pi_{f_j^k}(S_i)\in \mathcal S$ has arity $j$. We claim that for all $n\in \omega$
 $$R[n] =\bigcap_{j\geq n}[\pi_{f_n^j}(\bigcap_{i\in I} S_{i,j})].$$
 
Let $U=\bigcap_{j\geq n}[\pi_{f_n^j}(\bigcap_{i\in I} S_{i,j})]\in\mathcal S$ and  $s\in A^n$ such that $s\in U$. 
Consider the nonempty set $X_j$ of all $t\in A^j$ ($j\geq n$) such that $t_i=s_i$ for every $0\leq i\leq n-1$ and
$t\in \bigcap_{i\in I} S_{i,j}$. Let $X=\bigcup_{j\geq n} X_j$ be partially ordered by the prefix ordering. Then the infinite set $X$ is a finite branching tree, because $A$ is finite. Then by K\"onig Lemma there exists an infinite branch $B$. Since $B$ is totally ordered by prefix ordering, then we define $u\in A^\omega$ such that $u_i = t_i$ for some (and then all) $t\in B$ such that $i\in \mathrm{dom}(t)$.
We have that $u_{| n}=s$. We now prove that  $u\in R=\bigcap_{i\in I} S_i^\top$. Given $i\in I$, let
$k$ be the arity of $S_i$, and $j\geq \text{max}\{n,k\}$. Since $B\cap X_j\neq \emptyset$ then
$u_{|j}\in S_{i,j}=S_i\times A^{j-k}$. It follows that $u_{|i}\in S_i$ and hence $u\in S_i^\top$. By the arbitrariness of $i$, it follows that $u\in R$.
\end{proof}

\section{Limits of decreasing sequences}\label{sec:nine}
In order to characterise the $c\omega$-relation clones 
we define the notion of decreasing sequence of finitary relations. Each of these sequences has a locally closed $\omega$-relation as a limit.
The main results of the section are:

\begin{enumerate}
\item $\mathcal R$ is a $c\omega$-relation clone iff $\mathcal R =\mathrm{Lim}\,\mathcal S$, for some relation clone $\mathcal S$.
\item $\mathcal S$ is a cut-closed relation clone iff $\mathcal S=(\mathrm{Lim}\,\mathcal S)_{\mathrm{fin}}$.
\item
If $\mathcal R$ is a $c\omega$-relation clone, then 
  $Inv^\omega_{c}(Pol^\omega_\mathcal G\, \mathcal R)\subseteq \mathrm{Lim}\;Inv(Pol\;\mathcal R_{\mathrm{fin}}).$
\end{enumerate}

\begin {definition}\label{def:dec}
  An $\omega$-indexed sequence $\mathtt S$ of finitary relations on $A$ is \emph{decreasing} if for all $i\in\omega$
\begin{itemize}
\item $\mathtt S_i\subseteq A^i$ is a relation of arity $i$, and
\item $\mathtt S_i^\top \supseteq \mathtt S_{i+1}^\top$.
\end{itemize}
\end{definition}

Decreasing sequences of finitary relations will be denoted by $\mathtt R, \mathtt S, \mathtt T,\dots$. 

The \emph{limit of a decreasing sequence} $\mathtt S$  is the locally closed $\omega$-relation 
$$\mathrm{Lim} (\mathtt S)= \bigcap_{i\in \omega} \mathtt S_i^\top.$$

\begin{lemma}\label{lem:[k]}
 Let $R$ be an $\omega$-relation on $A$. Then we have:
 \begin{itemize}
\item[(i)]  $R[i]^\top=\exists_{\omega\setminus i}R$; 
\item[(ii)]  The sequence $\mathtt R=(R[i] :i\in\omega)$ is a decreasing sequence such that $\mathrm{Lim} (\mathtt R)=\overline R$.
\end{itemize}
\end{lemma}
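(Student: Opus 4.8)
The plan is to prove (i) by directly unfolding both definitions, and then to leverage (i) to dispatch (ii) cheaply.

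For (i), I would simply expand both sides. By definition of the top extension, $s\in R[i]^\top$ iff $(s_0,\dots,s_{i-1})\in R[i]$, which by definition of $R[i]$ means that there is $r\in R$ with $r_j=s_j$ for every $0\le j\le i-1$. On the other side, since $\omega\setminus(\omega\setminus i)=i=\{0,\dots,i-1\}$, a thread $u$ lies in $\exists_{\omega\setminus i}R$ exactly when there is $r\in R$ with $u_j=r_j$ for every $j\in\{0,\dots,i-1\}$. The two conditions are verbatim identical, so $R[i]^\top=\exists_{\omega\setminus i}R$.

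For (ii), I first check that $\mathtt R=(R[i]:i\in\omega)$ is a decreasing sequence. Each $R[i]$ is by construction an $i$-ary relation, so the arity condition holds. For the inclusion $R[i]^\top\supseteq R[i+1]^\top$ I would argue directly: if $s\in R[i+1]^\top$ then some $r\in R$ matches $s$ on $\{0,\dots,i\}$, hence in particular on $\{0,\dots,i-1\}$, so $s\in R[i]^\top$ (equivalently, this is the monotonicity $\exists_{\omega\setminus(i+1)}R\subseteq\exists_{\omega\setminus i}R$ read off from (i)). It then remains to identify the limit $\mathrm{Lim}(\mathtt R)=\bigcap_{i\in\omega}R[i]^\top$ with $\overline R$, which I would do by double inclusion using the description of the local closure via finite subsets of $\omega$. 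For $\overline R\subseteq\bigcap_i R[i]^\top$: given $s\in\overline R$ and $i\in\omega$, applying the closure condition to the finite set $d=\{0,\dots,i-1\}$ yields $r\in R$ with $s_{|d}=r_{|d}$, i.e. $(s_0,\dots,s_{i-1})\in R[i]$, so $s\in R[i]^\top$ for every $i$. For the reverse inclusion, given $s\in\bigcap_i R[i]^\top$ and an arbitrary finite $d\subseteq\omega$, I choose $i>\max(d)$; membership $s\in R[i]^\top$ provides $r\in R$ agreeing with $s$ on the whole initial segment $\{0,\dots,i-1\}\supseteq d$, hence $s_{|d}=r_{|d}$, witnessing $s\in\overline R$.

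No step presents a genuine obstacle; the content is entirely a matter of unwinding the definitions of $R[i]$, $\exists_\Gamma$, and the local closure. The only care needed is the index management in the last inclusion, namely passing between the arbitrary finite sets $d$ occurring in the definition of $\overline R$ and the initial segments $\{0,\dots,i-1\}$ cut out by $R[i]^\top$; this is harmless precisely because every finite subset of $\omega$ is contained in some initial segment. As a consistency check, note that $\mathrm{Lim}(\mathtt R)$ is automatically locally closed, being an intersection of the locally closed sets $R[i]^\top$ (cf. Lemma \ref{lem:clofin}), in agreement with the conclusion that it equals $\overline R$.
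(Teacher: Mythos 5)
Your proof is correct and follows essentially the same route as the paper: unwind the definitions for (i), and establish (ii) by the double inclusion between $\bigcap_i R[i]^\top$ and $\overline R$, using initial segments as the witnessing finite sets. The only cosmetic difference is that the paper obtains $\overline R\subseteq\bigcap_i R[i]^\top$ abstractly (the intersection is locally closed by Lemma \ref{lem:clofin} and contains $R$), whereas you verify it directly from the definition of the local closure; both are fine.
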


\begin{proof}
  (i) Trivial.
  
  (ii) By Lemma \ref{lem:clofin},  $\bigcap_{i\in\omega} R[i]^\top$ is a locally closed $\omega$-relation containing $R$.
Let $s\in \bigcap_{i\in\omega} R[i]^\top$. Then, for every $i$ there exists $r\in R$ such that
 $s_{| i}= r_{| i}$. Therefore, $s\in \overline{R}$.
\end{proof}

The set of decreasing sequences of finitary relations on a set $A$ can be coordinate-wise partially ordered as follows:
$$\mathtt S \leq \mathtt T\ \text{iff}\ \forall i\in\omega.\ \mathtt S_i\subseteq \mathtt T_i.$$
If $\{\mathtt S^j\}_{j\in J}$ is a directed set of decreasing sequences $\mathtt S^j$, then 
$\bigvee_{j\in J} \mathtt S^j$ is the decreasing sequence such that $(\bigvee_{j\in J} \mathtt S^j)_i= \bigcup_{j\in J} (\mathtt S^j)_i$ for all $i\in\omega$.

\begin{lemma}
 Let  $\{\mathtt S^j\}_{j\in J}$ be a directed set of decreasing sequences. Then we have  $\mathrm{Lim}(\bigvee_{j\in J} \mathtt S^j) = \bigcup_{j\in J} \mathrm{Lim}(\mathtt S^j)$.
\end{lemma}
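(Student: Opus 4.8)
The plan is to expand both sides over the same doubly-indexed family of top extensions and reduce the statement to an interchange of an intersection with a union. Write $\mathtt S=\bigvee_{j\in J}\mathtt S^j$, so that $\mathtt S_i=\bigcup_{j\in J}(\mathtt S^j)_i$ by definition. The first fact I would record is that top extension commutes with unions: for each $i$, both $(\bigcup_j(\mathtt S^j)_i)^\top$ and $\bigcup_j((\mathtt S^j)_i)^\top$ describe the set $\{s\in A^\omega : s_{|i}\in\bigcup_j(\mathtt S^j)_i\}$, so $\mathtt S_i^\top=\bigcup_{j\in J}(\mathtt S^j)_i^\top$. Consequently $\mathrm{Lim}(\mathtt S)=\bigcap_{i\in\omega}\bigcup_{j\in J}(\mathtt S^j)_i^\top$, while $\bigcup_{j\in J}\mathrm{Lim}(\mathtt S^j)=\bigcup_{j\in J}\bigcap_{i\in\omega}(\mathtt S^j)_i^\top$; the lemma is thus exactly the equality of these two.

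The inclusion $\bigcup_j\mathrm{Lim}(\mathtt S^j)\subseteq\mathrm{Lim}(\mathtt S)$ is the free half and requires no hypothesis: if $s\in\mathrm{Lim}(\mathtt S^{j_0})$ then $s\in(\mathtt S^{j_0})_i^\top$ for every $i$, hence $s\in\bigcup_j(\mathtt S^j)_i^\top$ for every $i$, i.e. $s\in\mathrm{Lim}(\mathtt S)$. For the reverse inclusion I would first isolate two structural features of the data. Because each $\mathtt S^j$ is decreasing, $(\mathtt S^j)_{i+1}^\top\subseteq(\mathtt S^j)_i^\top$, so for a fixed thread $s$ the sets $J_i(s)=\{j\in J : s\in(\mathtt S^j)_i^\top\}$ form a descending chain as $i$ grows. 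Because $\{\mathtt S^j\}_{j\in J}$ is directed, and hence monotone for the coordinatewise order, $j\le j'$ forces $(\mathtt S^j)_i\subseteq(\mathtt S^{j'})_i$, so every $J_i(s)$ is upward closed in $J$.

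Now I would take $s\in\mathrm{Lim}(\mathtt S)$; for each $i$ there is some $j_i\in J$ with $s\in(\mathtt S^{j_i})_i^\top$, that is $J_i(s)\neq\emptyset$, and the target is a single index $j^\star\in\bigcap_{i\in\omega}J_i(s)$, since such a $j^\star$ gives $s\in\mathrm{Lim}(\mathtt S^{j^\star})$. For any fixed $n$ this is routine: directedness yields an upper bound $j^\star\ge j_0,\dots,j_n$, and upward closure of $J_0(s),\dots,J_n(s)$ then places $j^\star$ in each of them, so $s\in(\mathtt S^{j^\star})_i^\top$ for all $i\le n$. The main obstacle is the passage from every finite $n$ to all $i$ at once: a finite upper bound only controls an initial segment of the coordinates, whereas a single witness valid for every $i$ demands a common upper bound for the entire countable family $\{j_i : i\in\omega\}$, equivalently $\bigcap_{i\in\omega}J_i(s)\neq\emptyset$. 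I therefore expect the decisive step to be the verification that the directed family actually supplies such a bound; this is where a countable-directedness or compactness property of the index set must enter, since in its absence the interchange of $\bigcap_i$ with $\bigcup_j$ (hence the claimed equality) can fail.
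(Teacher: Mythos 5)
Your expansion of both sides and your proof of the easy inclusion $\bigcup_{j\in J}\mathrm{Lim}(\mathtt S^j)\subseteq\mathrm{Lim}(\bigvee_{j\in J}\mathtt S^j)$ match the paper exactly: the paper's entire proof is the chain of equalities $\mathrm{Lim}(\bigvee_j\mathtt S^j)=\bigcap_{i}(\bigcup_{j}(\mathtt S^j)_i)^\top=\bigcap_{i}\bigcup_{j}(\mathtt S^j)_i^\top=\bigcup_{j}\bigcap_{i}(\mathtt S^j)_i^\top=\bigcup_{j}\mathrm{Lim}(\mathtt S^j)$, in which the interchange of $\bigcap_i$ with $\bigcup_j$ --- precisely the step you isolate as the crux --- is performed silently and with no justification. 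So the point at which your proposal stops is exactly the point at which the paper's proof has no content; you are not missing an idea that the paper supplies.

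That said, your proposal is not a proof: you never produce the common witness $j^\star\in\bigcap_i J_i(s)$, and your suspicion that directedness alone cannot supply one is correct --- the lemma as literally stated is false. Take $J=\omega$, fix a nonempty $A$, and set $(\mathtt S^j)_i=A^i$ for $i\leq j$ and $(\mathtt S^j)_i=\emptyset$ for $i>j$. Each $\mathtt S^j$ is decreasing (its top extensions are $A^\omega$ up to index $j$ and $\emptyset$ afterwards), the family is a chain and hence directed, and $\mathrm{Lim}(\mathtt S^j)=\emptyset$ for every $j$; yet $\bigvee_j\mathtt S^j$ is the sequence $(A^i)_{i\in\omega}$, whose limit is $A^\omega$. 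What would close the gap is exactly what you name: if every \emph{countable} subfamily of $\{\mathtt S^j\}_{j\in J}$ had an upper bound in the family, an upper bound $j^\star$ of $\{j_i:i\in\omega\}$ would lie in every $J_i(s)$ by the upward closure you established, and the hard inclusion would follow. Without such an extra hypothesis neither your argument nor the paper's can be completed, and only the inclusion $\mathrm{Lim}(\bigvee_{j\in J}\mathtt S^j)\supseteq\bigcup_{j\in J}\mathrm{Lim}(\mathtt S^j)$ holds in general. You should flag this to the authors rather than try to repair the proof as stated.
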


\begin{proof}
$\mathrm{Lim}(\bigvee_{j\in J} \mathtt S^j) =\bigcap_{i\in\omega} (\bigvee_{j\in J} \mathtt S^j)_i^\top = \bigcap_{i\in\omega}(\bigcup_{j\in J} (\mathtt S^j)_i)^\top=$\\
$\bigcap_{i\in\omega}\bigcup_{j\in J} (\mathtt S^j)_i^\top = \bigcup_{j\in J} \bigcap_{i\in\omega}(\mathtt S^j)_i^\top=
 \bigcup_{j\in J} \mathrm{Lim}(\mathtt S^j)
 $.
\end{proof}

 If $\mathcal S$ is a set of finitary relations, then we denote by $\mathrm{Dec}\,\mathcal S$ the set of all decreasing sequences $\mathtt S$ such that $\mathtt S_i\in \mathcal S$ for every $i$.
We define
$$\mathrm{Lim}\ \mathcal S =\{ \mathrm{Lim} (\mathtt S) : \mathtt S\in \mathrm{Dec}\,\mathcal S\}.$$

\begin{definition}\label{def:omegarc} An $\omega$-relation clone $\mathcal R$ is  \emph{strong} if, for every directed set $\{\mathtt S^j\}_{j\in J}$ of decreasing sequences $\mathtt S^j \in \mathrm{Dec}\,\mathcal R_{\mathrm{fin}}$, we have that $\mathrm{Lim}(\bigvee_{j\in J} \mathtt S^j)\in\mathcal R$.
\end{definition}

\begin{theorem}\label{thm:convergent} 
(1) $\mathcal R$ is a (strong) $c\omega$-relation clone iff $\mathcal R=\mathrm{Lim}\ \mathcal S$, for some (strong) relation clone $\mathcal S$.

(2) Let $\mathcal S$ be a relation clone. Then  $\mathcal S$ is  cut-closed iff $\mathcal S=(\mathrm{Lim}\ \mathcal S)_{\mathrm{fin}}$.
\end{theorem}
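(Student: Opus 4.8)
The plan is to prove both equivalences by passing back and forth between a $c\omega$-relation clone $\mathcal R$ and the finitary relation clone $\mathcal R_{\mathrm{fin}}$, using Lemma \ref{lem:[k]} to turn a locally closed $\omega$-relation into its canonical decreasing sequence of truncations and Lemma \ref{lem:10.4} to stay inside the locally closed relations.

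For the forward implication of (1), I would take $\mathcal S=\mathcal R_{\mathrm{fin}}$, which is a (cut-closed) relation clone by Proposition \ref{lem:finrel}. If $R\in\mathcal R$ then $R$ is locally closed, so by Lemma \ref{lem:[k]}(ii) $R=\overline R=\mathrm{Lim}(\mathtt R)$ with $\mathtt R_i=R[i]$; moreover $R[i]^\top=\exists_{\omega\setminus i}R\in\mathcal R$ by axiom (iv), so $R[i]\in\mathcal R_{\mathrm{fin}}$ and $R\in\mathrm{Lim}\,\mathcal R_{\mathrm{fin}}$. Conversely any $\mathrm{Lim}(\mathtt S)=\bigcap_i\mathtt S_i^\top$ with $\mathtt S_i\in\mathcal R_{\mathrm{fin}}$ lies in $\mathcal R$ by closure under intersection. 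In the strong case I would further check that $\mathcal R_{\mathrm{fin}}$ is a strong relation clone: for a directed family $\{S^j\}$ of same-arity relations the sequences $\mathtt S^j_i=(S^j)^\top[i]$ are directed in $\mathrm{Dec}\,\mathcal R_{\mathrm{fin}}$, and the identity $\mathrm{Lim}(\bigvee_j\mathtt S^j)=\bigcup_j\mathrm{Lim}(\mathtt S^j)=(\bigcup_j S^j)^\top$ together with strongness of $\mathcal R$ gives $\bigcup_j S^j\in\mathcal R_{\mathrm{fin}}$.

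For the backward implication of (1) I would verify the four axioms directly on $\mathrm{Lim}\,\mathcal S$, whose members are locally closed by construction. The diagonal is $\mathrm{Lim}((\Delta_A^{(i)})_i)$. Closure under a permutation $\sigma$ holds because $R(\sigma)$ is locally closed (Lemma \ref{lem:10.4}(iii)) and each $(\mathtt S_i^\top)(\sigma)$ is the top extension of a finite relation obtained from $\mathtt S_i$ by cylindrifying (a product with a power of $A=\Delta_A^{(1)}$) and reindexing via some $\pi_f$. Closure under arbitrary intersection uses the observation that $T^\top[k]\in\mathcal S$ for every finite $T\in\mathcal S$: writing $R_j=\bigcap_i(\mathtt S^j_i)^\top$, the sequence $\mathtt W_k=\bigcap_{j,i}\big((\mathtt S^j_i)^\top[k]\big)$ is decreasing with members in $\mathcal S$, and $\bigcap_k\mathtt W_k^\top=\bigcap_{j,i}\overline{(\mathtt S^j_i)^\top}=\bigcap_{j,i}(\mathtt S^j_i)^\top=\bigcap_j R_j$ by Lemma \ref{lem:[k]}(ii). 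The delicate axiom is (iv): for cofinite $\Gamma$ with $F=\omega\setminus\Gamma$ finite, $\exists_\Gamma R$ is the cylinder over the finite projection $\pi_F(R)$ and is locally closed by Lemma \ref{lem:10.4}(i); the task is to realise this cylinder as $\bigcap_k\mathtt T_k^\top$ with $\mathtt T_k\in\mathcal S$, built from projections and cylindrifications of the truncated relations $\hat R[k]=\bigcap_{m\ge k}\pi_{[k]}(\mathtt S_m)\in\mathcal S$.

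The main obstacle is precisely this last step: existential projection does \emph{not} commute with the infinite intersection defining $R$, so over an infinite base set $A$ one must argue carefully that the witnessing sequence has $\exists_\Gamma R$ itself, and not a strictly larger cylinder, as its limit — this is exactly the ``cutting versus infinite intersection'' interaction isolated in part (2). Granting (iv), part (2) is short. For the ``only if'' direction, if $\mathcal S$ is cut-closed and $S\in(\mathrm{Lim}\,\mathcal S)_{\mathrm{fin}}$ then $S^\top=\bigcap_i\mathtt T_i^\top$ with $\mathtt T_i\in\mathcal S$, whence $S=S^\top[n]=(\bigcap_i\mathtt T_i^\top)[n]\in\mathcal S$ by cut-closedness, while $\mathcal S\subseteq(\mathrm{Lim}\,\mathcal S)_{\mathrm{fin}}$ needs only Lemma \ref{lem:clofin}. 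For the ``if'' direction, given $\mathcal S=(\mathrm{Lim}\,\mathcal S)_{\mathrm{fin}}$ and $S_i\in\mathcal S$, the $\omega$-relation $Q=\bigcap_i S_i^\top$ lies in the $\omega$-relation clone $\mathrm{Lim}\,\mathcal S$ (by the backward direction of (1)), so $Q[n]^\top=\exists_{\omega\setminus n}Q\in\mathrm{Lim}\,\mathcal S$ by axiom (iv), giving $Q[n]\in(\mathrm{Lim}\,\mathcal S)_{\mathrm{fin}}=\mathcal S$, i.e. cut-closedness.
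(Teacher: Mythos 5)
Your overall architecture coincides with the paper's: the forward direction of (1) via $\mathcal S=\mathcal R_{\mathrm{fin}}$ and Lemma \ref{lem:[k]}, the backward direction by verifying the four $\omega$-relation-clone axioms on $\mathrm{Lim}\,\mathcal S$, and part (2) by combining cut-closedness with $((\bigcap_i S_i^\top)[n])^\top=\exists_{\omega\setminus n}\bigcap_i S_i^\top$. The forward direction, the strong case, and axioms (i)--(iii) are sound; your detour through $\mathtt W_k=\bigcap_{j,i}\bigl((\mathtt S^j_i)^\top[k]\bigr)$ for intersections is more roundabout than the paper's direct interchange $\bigcap_j R_j=\bigcap_n[\bigcap_j(\mathtt S^j)_n]^\top$ (which works because all the $(\mathtt S^j)_n$ already share arity $n$), but it is correct.

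The genuine gap is axiom (iv), which you explicitly leave open and on which both the backward direction of (1) and the ``if'' half of (2) hinge; ``granting (iv)'' is not a proof. Two remarks. First, the building blocks you propose, $\hat R[k]=\bigcap_{m\ge k}\pi_{[k]}(\mathtt S_m)$, are the wrong objects: they compute $\bigcap_m\mathtt S_m^\top[k]$, which contains $R[k]=(\bigcap_m\mathtt S_m^\top)[k]$ but is in general strictly larger (a length-$k$ tuple may extend into each $\mathtt S_m$ separately without extending into $R$), so any sequence built from them converges to a cylinder that may properly contain $\exists_\Gamma R$. Second, the paper fills this spot by exhibiting the specific decreasing sequence $\mathtt T_n=\exists_{\Gamma\cap n}\mathtt S_n$ (each term lies in $\mathcal S$, being a cylinder over a $\pi_f$-image of $\mathtt S_n$) and asserting $\exists_\Gamma R=\bigcap_n(\exists_{\Gamma\cap n}\mathtt S_n)^\top$. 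The inclusion $\subseteq$ is immediate, but $\supseteq$ is exactly the coherence-of-witnesses issue you isolate: for $s$ in the right-hand side one obtains, for each $n$, some $t^n\in\mathtt S_n$ agreeing with $s$ on $n\setminus\Gamma$, and one must glue these into a single $u\in R$ agreeing with $s$ on $\omega\setminus\Gamma$, which does not follow from decreasingness alone over an infinite $A$ (one can write down decreasing sequences for which the stage-$n$ cylinders do not shrink to the cylinder of $R$). The paper gives no argument for this inclusion, so your instinct that this is the delicate point is well founded; but as submitted your proposal neither proves the identity nor supplies a substitute, so the theorem is not established.
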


\begin{proof} (1) ($\Rightarrow$) Let  $\mathcal R$ be a  $c\omega$-relation clone. We prove that $\mathcal R=\mathrm{Lim}\, \mathcal R_{\mathrm{fin}}$, where by Proposition  \ref{lem:finrel} $\mathcal R_{\mathrm{fin}}$ is a cut-closed relation clone. 

  If $R\in\mathcal R$, then by Lemma \ref{lem:[k]} the sequence $\{R[n]\}_{n\in\omega}$ is decreasing,
each $R[n]^\top = \exists_{\omega\setminus  n}R \in \mathcal R$ and 
  $R=\overline R = \bigcap_{n\in\omega} R[n]^\top$, because $R$ is locally closed.
Therefore, $R\in \mathrm{Lim}\, \mathcal R_{\mathrm{fin}}$.

If $R\in \mathrm{Lim}\, \mathcal R_{\mathrm{fin}}$, then there exists a decreasing sequence $\mathtt S\in \mathrm{Dec}\, \mathcal R_{\mathrm{fin}}$ such that
$R=\bigcap_{n\in\omega} (\mathtt S_n)^\top$. Since $(\mathtt S_n)^\top\in \mathcal R$ and $\mathcal R$ is  closed under intersection, then $R\in\mathcal R$.

($\Leftarrow$) Let $\mathcal S$ be a relation clone on $A$. We prove that $\mathrm{Lim}\, \mathcal S$ is a  $c\omega$-relation clone.

(i) $\Delta_A = \bigcap_{n\in\omega}(\Delta_A^{(n)})^\top\in \mathrm{Lim}\, \mathcal S$,  because the decreasing sequence $(\Delta_A^{(n)}: n\in\omega)$ belongs to $\mathrm{Dec}\,\mathcal S$. 

  (ii) Let $R= \mathrm{Lim} (\mathtt S)$ with $\mathtt S\in \mathrm{Dec}\,\mathcal S$ and $\sigma$ be a permutation of $\omega$, whose inverse $\sigma^{-1}$ will be denoted by $\tau$.
    We define a decreasing sequence $\mathtt T\in \mathrm{Dec}\,\mathcal S$ such that $\mathrm{Lim} (\mathtt T)=R(\sigma)$.
  For $k\in\omega$, let $f(k)=\mathrm{max}\{ \tau(0),\ldots,\tau(k-1)\}$ and $\mathtt T_k=  \pi_{\tau_{| k}}(\mathtt S_{f(k)})=
  \{t\in A^k: \exists s\in \mathtt S_{f(k)}\mbox{ such that } t=s\circ \tau_{| k}\}$.
 The fact that  $$(t_0,\ldots,t_k)\in \mathtt T_{k+1}\Rightarrow (t_0,\ldots,t_{k-1})\in \mathtt T_k,$$ 
 entails that $\mathtt T_k^\top\supseteq \mathtt T_{k+1}^\top$.
  As for  $\bigcap_{k\in \omega} (\mathtt T_k)^\top=R(\sigma)$:

  $$\begin{array}{l}
    R(\sigma)=\{t\in A^\omega: (t_{\sigma(0)},\ldots,t_{\sigma(n)},\ldots)\in   \bigcap_{k\in \omega} (\mathtt S_k)^\top\}=\\
    \{t\in A^\omega: \forall k\ (t_{\sigma(0)},\ldots,t_{\sigma(k-1)})\in \mathtt S_k\}=\\
    \{t\in A^\omega: \forall k\ (t_0,\ldots,t_{k-1})\in  \pi_{\tau_{| k}}(\mathtt  S_{f(k)})=\mathtt  T_k\}=\\
     \bigcap_{k\in \omega} (\mathtt T_k)^\top.
    \end{array}
  $$
  

(iii) Let $R_i\in \mathrm{Lim}(\mathtt S^i)$ ($i\in I$) with $\mathtt S^i\in \mathrm{Dec}\,\mathcal S$. Then we show that $\mathrm{Lim}(\bigwedge_{i\in I} \mathtt S^i) = \bigcap_{i\in I} R_i $.
  Indeed, we have:
  $$\bigcap_{i\in I} R_i=\bigcap_{i\in I}\bigcap_{n\in\omega}((\mathtt S^i)_n)^\top=\bigcap_{n\in\omega}\bigcap_{i\in I}((\mathtt S^i)_n)^\top = \bigcap_{n\in\omega}[\bigcap_{i\in I} (\mathtt S^i)_n]^\top$$

  (iv) Let $R\in \mathrm{Lim}(\mathtt S)$ with $\mathtt S\in \mathrm{Dec}\,\mathcal S$ and $\Gamma\subseteq_{\mathrm{cofin}}\omega$. We claim that $\exists_\Gamma R= \bigcap_{n\in\omega}(\exists_{\Gamma\cap n}\mathtt S_n)^\top$, where $(\exists_{\Gamma\cap n}\mathtt S_n)_{n\in\omega}$ is a decreasing sequence of relations in $\mathcal S$.

 Moreover, if $\mathcal S$ (resp. $\mathcal R$) is strong, then it is easy to prove that $ \mathrm{Lim}\, \mathcal S$ (resp. $\mathcal R_{\mathrm{fin}}$) is strong.

 (2) ($\Rightarrow$) Let $T \in (\mathrm{Lim}\ \mathcal S)_{\mathrm{fin}}$ of arity $k$. Then there exists a decreasing sequence $\mathtt S\in \mathrm{Dec}\,\mathcal S$ such that $T^\top=\bigcap_{n\in\omega}
 \mathtt S_n^\top$. Then $T= (\bigcap_{n\in\omega}
 \mathtt S_n^\top)[k] \in\mathcal S$, because $\mathcal S$ is cut-closed. The inclusion $\mathcal S\subseteq (\mathrm{Lim}\ \mathcal S)_{\mathrm{fin}}$ is trivial.

 ($\Leftarrow$) By (1) $\mathrm{Lim}(\mathcal S)$  is a $c\omega$-relation clone  and by Proposition \ref{lem:finrel}  $(\mathrm{Lim}\ \mathcal S)_{\mathrm{fin}}$ is cut-closed. 
 \end{proof}


\begin{lemma}\label{cor:pol2fin}
  For every $c\omega$-relation clone $\mathcal R$,
  $Pol^\omega\mathcal R=Pol^\omega(\mathcal R_{\mathrm{fin}}^\top)$.

\end{lemma}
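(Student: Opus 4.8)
The plan is to prove the two inclusions separately, the first being immediate and the second requiring us to reduce an arbitrary locally closed $\omega$-relation in $\mathcal R$ to finitary data.

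For the inclusion $Pol^\omega\mathcal R\subseteq Pol^\omega(\mathcal R_{\mathrm{fin}}^\top)$, I would simply observe that $\mathcal R_{\mathrm{fin}}^\top\subseteq\mathcal R$: every member of $\mathcal R_{\mathrm{fin}}^\top$ has the form $S^\top$ with $S\in\mathcal R_{\mathrm{fin}}$, and by definition $S\in\mathcal R_{\mathrm{fin}}$ means precisely $S^\top\in\mathcal R$. Since $Pol^\omega$ is defined as an intersection $\bigcap_{R}Pol^\omega(R)$ and is therefore antitone with respect to set inclusion, enlarging the family from $\mathcal R_{\mathrm{fin}}^\top$ to $\mathcal R$ can only shrink the set of polymorphisms, which yields the desired containment.

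The substantive inclusion is $Pol^\omega(\mathcal R_{\mathrm{fin}}^\top)\subseteq Pol^\omega\mathcal R$. I would fix $\varphi\in Pol^\omega(\mathcal R_{\mathrm{fin}}^\top)$ and an arbitrary $R\in\mathcal R$, aiming at $\varphi\in Pol^\omega(R)$. The key point is that, because $\mathcal R$ is a $c\omega$-relation clone, $R$ is locally closed, so by Lemma \ref{lem:[k]}(ii) it is the limit of its finitary cuts: $R=\overline R=\bigcap_{i\in\omega}R[i]^\top$. Moreover, each cut lands back in the clone: by Lemma \ref{lem:[k]}(i) we have $R[i]^\top=\exists_{\omega\setminus i}R$, and since $\omega\setminus i$ is cofinite, the axiom (iv) defining $\omega$-relation clones gives $\exists_{\omega\setminus i}R\in\mathcal R$. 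Hence $R[i]\in\mathcal R_{\mathrm{fin}}$ and $R[i]^\top\in\mathcal R_{\mathrm{fin}}^\top$ for every $i$, so $\varphi$ is a polymorphism of each $R[i]^\top$.

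It then remains to pass from the family $\{R[i]^\top\}_i$ to its intersection $R$. Since each $R[i]^\top$ is locally closed (Lemma \ref{lem:clofin}), Proposition \ref{prop:ii}(iii) lets us read $\varphi\in Pol^\omega(R[i]^\top)$ as: $\varphi[m]\in R[i]^\top$ for every $m\in\mathsf M_\omega^{\omega,R[i]^\top}$. Taking any $m\in\mathsf M_\omega^{\omega,R}$, the inclusion $R\subseteq R[i]^\top$ forces the columns of $m$ to lie in each $R[i]^\top$, so $\varphi[m]\in R[i]^\top$ for all $i$, and therefore $\varphi[m]\in\bigcap_i R[i]^\top=R$; this gives $\varphi\in Pol^\omega(R)$, and as $R\in\mathcal R$ was arbitrary, $\varphi\in Pol^\omega\mathcal R$. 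The only delicate step is the reduction in the previous paragraph, namely recognising that local closedness expresses $R$ as an intersection of top extensions guaranteed to belong to $\mathcal R_{\mathrm{fin}}^\top$; once this is in place the concluding intersection argument is routine and is exactly the monotonicity computation already carried out in Lemma \ref{lem:pp}(iii).
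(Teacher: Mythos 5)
Your proof is correct and follows essentially the same route as the paper's: the easy inclusion from $\mathcal R_{\mathrm{fin}}^\top\subseteq\mathcal R$ and contravariance, then writing a locally closed $R\in\mathcal R$ as $\bigcap_{i}R[i]^\top$ via Lemma \ref{lem:[k]} and using that $\varphi$ preserves each cut and hence their intersection. The only difference is that you spell out the justification that $R[i]^\top=\exists_{\omega\setminus i}R\in\mathcal R$ (so that $R[i]^\top\in\mathcal R_{\mathrm{fin}}^\top$), a step the paper leaves implicit.
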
 

\begin{proof} 
Since $\mathcal R_{\mathrm{fin}}^\top \subseteq \mathcal R$, then $Pol^\omega\mathcal R\subseteq Pol^\omega(\mathcal R_{\mathrm{fin}}^\top)$. For the opposite inclusion, let $\varphi\in Pol^\omega(\mathcal R_{\mathrm{fin}}^\top)$ and $R\in\mathcal R$. 
By Lemma \ref{lem:[k]}(ii) $R= \bigcap_{i\in \omega} R[i]^\top$. Since $\varphi$ is a $\mathcal G$-polymorphism of each $R[i]^\top$, then it is also a $\mathcal G$-polymorphism of the intersection $R=\bigcap_{i\in \omega} R[i]^\top$. 
\end{proof}

The following lemma directly follows from that definition
of $Inv^\top$ in Section \ref{sec:clomega}. 

\begin{lemma}\label{lem:last}
Let $C$ be a set of $\omega$-operations. Then we have $ ([Inv^\omega_c(C)]_{\mathrm{fin}})^\top=Inv^\top(C)$.
\end{lemma}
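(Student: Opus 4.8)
The plan is to unfold both sides of the claimed equality into an explicit description of precisely which top extensions $S^\top$ of finitary relations $S$ they contain, and then to observe that the two descriptions coincide. Since every element of either set is, by construction, of the form $S^\top$ for a finitary $S$, it suffices to characterise membership of such an $S^\top$ in each side.

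First I would treat the left-hand side. By the definitions of $(-)_{\mathrm{fin}}$ and of $(-)^\top$, a finitary relation $S$ contributes $S^\top$ to $([Inv^\omega_c(C)]_{\mathrm{fin}})^\top$ exactly when $S^\top\in Inv^\omega_c(C)$, that is, when $S^\top=\overline{S^\top}$ and $C\subseteq Pol^\omega(S^\top)$. By Lemma \ref{lem:clofin} the top extension $S^\top$ of a finitary relation is always locally closed, so the condition $S^\top=\overline{S^\top}$ holds automatically and membership reduces to $C\subseteq Pol^\omega(S^\top)$. I would then unfold this last condition: since $S^\top$ is locally closed, Notation \ref{not:0} (justified by Proposition \ref{prop:ii}(iii)) gives $Pol^\omega(S^\top)=Pol^\omega_{\mathcal G}(S^\top)$, and Proposition \ref{prop:ii}(iii) moreover shows that the relevant closure of $S^\top$ is $S^\top$ itself. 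Hence $\varphi\in Pol^\omega(S^\top)$ holds if and only if $\varphi[m]\in S^\top$ for every matrix $m\in\mathsf M_\omega^{\omega,S^\top}$, so that $C\subseteq Pol^\omega(S^\top)$ is equivalent to requiring $\varphi[m]\in S^\top$ for every $\varphi\in C$ and every $m\in\mathsf M_\omega^{\omega,S^\top}$.

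Finally I would compare this with the right-hand side. Recalling the definition of $Inv^\top$ from Section \ref{sec:clomega}, we have $Inv^\top(C)=\bigcap_{\varphi\in C}Inv^\top(\varphi)$, so that $S^\top\in Inv^\top(C)$ if and only if for every $\varphi\in C$ and every $m\in\mathsf M_\omega^{\omega,S^\top}$ one has $\varphi[m]\in S^\top$. This is exactly the condition obtained above for the left-hand side, whence the two sets are equal. The only step that is not pure bookkeeping is the identification of $Pol^\omega(S^\top)$ with the elementary $\mathcal G$-polymorphism condition $\varphi[m]\in S^\top$; this is supplied directly by Proposition \ref{prop:ii}(iii) together with Lemma \ref{lem:clofin}, and everything else is an unwinding of the definitions of $Inv^\omega_c$, $(-)_{\mathrm{fin}}$, $(-)^\top$ and $Inv^\top$.
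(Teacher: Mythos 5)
Your proof is correct and is essentially the argument the paper has in mind: the paper states the lemma with no written proof, asserting only that it ``directly follows from the definition of $Inv^\top$'', and your unwinding of $Inv^\omega_c$, $(-)_{\mathrm{fin}}$, $(-)^\top$ and $Inv^\top$ — using Lemma \ref{lem:clofin} to discharge the local-closedness condition and Proposition \ref{prop:ii}(iii) to reduce $Pol^\omega(S^\top)$ to the elementary condition $\varphi[m]\in S^\top$ for all $m\in\mathsf M_\omega^{\omega,S^\top}$ — is precisely that direct verification. Nothing is missing.
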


Notice that, if $\mathcal R$ and $\mathcal T$ are two  $c\omega$-relation clones such that $\mathcal R_{\mathrm{fin}}=\mathcal T_{\mathrm{fin}}$, then $\mathcal R=\mathrm{Lim}\,\mathcal R_{\mathrm{fin}} =\mathrm{Lim}\,\mathcal T_{\mathrm{fin}} =\mathcal T$.

\begin{theorem}\label{thm:clone}

  Let    $\mathcal R$ be  a $c\omega$-relation clone. 
  Then  $$ Inv_{c} ^\omega(Pol^\omega\mathcal R)\subseteq\mathrm{Lim}\, Inv(Pol\,\mathcal R_{\mathrm{fin}}). $$
 The opposite inclusion is  in general false.

\end{theorem}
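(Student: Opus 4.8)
The plan is to prove the inclusion by representing each $R$ on the left-hand side as the limit of the canonical decreasing sequence of its finite sections and checking that every such section lies in $Inv(Pol\,\mathcal R_{\mathrm{fin}})$; the strictness of the gap will be witnessed by a $c\omega$-relation clone whose infinitary polymorphism clone is richer than the finitary side can detect.

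First I would fix $R\in Inv^\omega_c(Pol^\omega\mathcal R)$, so that $R$ is locally closed and, by definition of $Inv^\omega_c$, $Pol^\omega\mathcal R\subseteq Pol^\omega R$ (recall that, since $\mathcal R$ consists of locally closed relations, $Pol^\omega\mathcal R=Pol^\omega_{\mathcal G}\mathcal R$ by Notation \ref{not:0}). By Lemma \ref{lem:[k]}(ii) the sequence $\mathtt R=(R[i]:i\in\omega)$ is decreasing and $\mathrm{Lim}(\mathtt R)=\overline R=R$. Hence it suffices to show $R[i]\in Inv(Pol\,\mathcal R_{\mathrm{fin}})$ for every $i$, that is, that each finitary $f\in Pol\,\mathcal R_{\mathrm{fin}}$ preserves the finitary relation $R[i]$.

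The central step is the equivalence, for a finitary $f$ of arity $k$: $f\in Pol\,\mathcal R_{\mathrm{fin}}$ iff $f^\top\in Pol^\omega\mathcal R$. I would obtain it by combining Lemma \ref{cor:pol2fin}, which gives $Pol^\omega\mathcal R=Pol^\omega(\mathcal R_{\mathrm{fin}}^\top)$, with Lemma \ref{lem:4.9boh}: applying the latter to $\varphi=f^\top$ and noting that $f^\top[m]$ depends only on the first $k$ columns of $m$, one gets $f^\top\in Pol^\omega(S^\top)$ iff $f\in Pol(S)$ for every finitary $S$, and quantifying over $S\in\mathcal R_{\mathrm{fin}}$ yields the claim. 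Now, given $f\in Pol\,\mathcal R_{\mathrm{fin}}$, we have $f^\top\in Pol^\omega\mathcal R\subseteq Pol^\omega R$, so $f^\top$ preserves $R$. To deduce $f\in Pol(R[i])$, take $m'\in\mathsf M_i^{k,R[i]}$, lift each column $m'^l\in R[i]$ to some $s^l\in R$ with $s^l_{|i}=m'^l$, and form the matrix $m\in\mathsf M_\omega^{\omega,R}$ whose first $k$ columns are $s^0,\dots,s^{k-1}$ (padding the remaining columns with a fixed element of $R$; the case $R=\emptyset$ is trivial). Then $f^\top[m]\in R$, and restricting to the first $i$ coordinates gives $(f^\top[m])_{|i}=f[m']\in R[i]$. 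This shows $R[i]\in Inv(Pol\,\mathcal R_{\mathrm{fin}})$, whence $R=\mathrm{Lim}(\mathtt R)\in\mathrm{Lim}\,Inv(Pol\,\mathcal R_{\mathrm{fin}})$.

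For the failure of the reverse inclusion I would exhibit a counterexample, and this is where I expect the real difficulty to lie. The strategy is to choose a $c\omega$-relation clone $\mathcal R$ for which $Pol^\omega\mathcal R$ contains a genuinely infinitary polymorphism $\varphi$ not approximated by $Pol\,\mathcal R_{\mathrm{fin}}$, so that $\varphi$ fails to preserve some locally closed $\omega$-relation $R$ all of whose finite sections $R[i]$ nonetheless lie in $Inv(Pol\,\mathcal R_{\mathrm{fin}})$. A natural source is the non-strongness phenomenon already used in the paper (the $A=\omega\cup\{\star\}$ example and Proposition \ref{prop:example}(3)), where a ``sup/max''-type semiconstant $\omega$-operation separates infinitary from finitary invariants: over an infinite base one can arrange a decreasing sequence of finitary relations whose limit is sent outside itself by such an operation, even though every finitary polymorphism preserves each member of the sequence. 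The main obstacle is precisely to package this into a single $\mathcal R$ and to verify simultaneously that $\varphi\in Pol^\omega\mathcal R$ and that $R[i]\in Inv(Pol\,\mathcal R_{\mathrm{fin}})$ for all $i$; note in particular that the easy choices (such as $\mathcal R=\langle\Delta_A\rangle_\omega$) yield equality rather than strict inclusion, so the base set must genuinely be infinite and the separating operation genuinely infinitary. By contrast, the forward inclusion is a routine combination of Lemmas \ref{lem:[k]}, \ref{cor:pol2fin} and \ref{lem:4.9boh}.
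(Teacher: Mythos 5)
Your proof of the inclusion is correct, and it takes a more concrete route than the paper's. The paper reduces everything to the finitary parts: using Lemma \ref{cor:pol2fin} and Lemma \ref{lem:last} it computes $([Inv_{c}^\omega(Pol^\omega\mathcal R)]_{\mathrm{fin}})^\top=Inv^\top(Pol^\omega\,\mathcal R_{\mathrm{fin}}^\top)$, observes via Proposition \ref{lem:finrel} and Theorem \ref{thm:convergent} that $[\mathrm{Lim}\,Inv(Pol\,\mathcal R_{\mathrm{fin}})]_{\mathrm{fin}}=Inv(Pol\,\mathcal R_{\mathrm{fin}})$, and concludes from $(Pol\,\mathcal R_{\mathrm{fin}})^\top\subseteq Pol^\omega\,\mathcal R_{\mathrm{fin}}^\top$ and the contravariance of $Inv^\top$, together with the fact that a $c\omega$-relation clone is recovered as the limit of its finitary part. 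You instead unfold this: you take $R$ in the left-hand side, write $R=\mathrm{Lim}(R[i])$ by Lemma \ref{lem:[k]}(ii), and verify by hand (lifting a matrix over $R[i]$ to one over $R$) that each cut $R[i]$ is preserved by every $f\in Pol\,\mathcal R_{\mathrm{fin}}$, using $f^\top\in Pol^\omega\mathcal R\subseteq Pol^\omega R$. This is sound; your bridge ``$f\in Pol\,\mathcal R_{\mathrm{fin}}$ iff $f^\top\in Pol^\omega\mathcal R$'' is exactly Lemma \ref{cor:pol2fin} combined with Proposition \ref{prop:4.9bis}/Lemma \ref{lem:4.9boh}. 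What the paper's abstract route buys is that it doubles as the proof of Theorem \ref{thm:convergent}(2)-type statements; what yours buys is transparency, since it avoids the cut-closedness machinery entirely.

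The genuine gap is the second assertion. ``The opposite inclusion is in general false'' is half the theorem, and you only describe a strategy and name the obstacle (``the main obstacle is precisely to package this into a single $\mathcal R$\dots''), without producing a relation clone that witnesses the failure. The paper does give one, borrowed from \cite{BHM12}, and it is simpler than the $A=\omega\cup\{\star\}$ example you point to: take $A=\omega$, the unary relations $S_i=\omega\setminus\{0,i\}$ for $i\geq 1$, and let $\mathcal R$ be the least $c\omega$-relation clone containing $\mathcal S^\top$. Set $P=S_1\cup S_2=\omega\setminus\{0\}$ and define $\varphi(s)=0$ if $\mathrm{set}(s)=\omega\setminus\{0\}$ and $\varphi(s)=s_0$ otherwise. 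One checks that $\varphi\in Pol^\omega(\mathcal S^\top)=Pol^\omega\mathcal R$ (any matrix over $S_i^\top$ has first row omitting $i$, so $\varphi$ acts as the projection $\e_0$ there), that every finitary polymorphism of $\mathcal S$ preserves $P$ (given $a_1,\dots,a_k\in P$, pick $i\notin\{0,a_1,\dots,a_k\}$ and use preservation of $S_i$), but that $\varphi$ fails to preserve $P^\top$ on a matrix whose first row enumerates $\omega\setminus\{0\}$. Hence $P^\top\in\mathrm{Lim}\,Inv(Pol\,\mathcal R_{\mathrm{fin}})\setminus Inv_{c}^\omega(Pol^\omega\mathcal R)$. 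Without some such explicit construction and verification, your proposal establishes only the inclusion, not its strictness.
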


\begin{proof} By Lemma \ref{lem:pp} and Theorem \ref{thm:convergent} $Inv_{c} ^\omega(Pol^\omega\, \mathcal R)$ and $\mathrm{Lim}\, Inv(Pol\,\mathcal R_{\mathrm{fin}})$ are $c\omega$-relation clones. 
By Lemmas \ref{cor:pol2fin} and \ref{lem:last} we have
$$([Inv_{c}^\omega(Pol^\omega\, \mathcal R)]_{\mathrm{fin}})^\top= ([Inv^\omega_{c} (Pol^\omega\,\mathcal R_{\mathrm{fin}}^\top)]_{\mathrm{fin}})^\top=Inv^\top (Pol^\omega\,\mathcal R_{\mathrm{fin}}^\top).$$
Since by Proposition  \ref{lem:finrel}  $[\mathrm{Lim}\, Inv(Pol\,\mathcal R_{\mathrm{fin}})]_{\mathrm{fin}}$ is cut-closed, then by Theorem \ref{thm:convergent} we have that $[\mathrm{Lim}\, Inv(Pol\,\mathcal R_{\mathrm{fin}})]_{\mathrm{fin}}=Inv(Pol\,\mathcal R_{\mathrm{fin}})$. Therefore, the conclusion follows from $(Pol\,\mathcal R_{\mathrm{fin}})^\top\subseteq  Pol^\omega\,\mathcal R_{\mathrm{fin}}^\top$ and the contravariance of $Inv^\top$.

The following example, borrowed from \cite{BHM12}, is a counterexample to the opposite inclusion. 
Let $\mathcal S$ be the set of unary relations $S_i=\omega\setminus\{0,i\}$, $i\geq 1$. Consider the relation $P=S_1\cup S_2=\omega\setminus\{0\}$ and the $\omega$-operation $\varphi:\omega^\omega\to\omega$ defined as follows:
  $\varphi(s)=0$ if $set(s) = \omega\setminus\{0\}$;  $\varphi(s)=s_0$ in all other cases. 
  Then we have   $P\in Inv(Pol( \mathcal S))$  but $P^\top \notin Inv^\top(Pol^\omega(\mathcal S))$,
  because  $\varphi\in Pol^\omega(\mathcal S)$ is not a $\mathcal G$-polymorphism of $P^\top$.
  Then the counterexample is obtained by letting $\mathcal R$ to be the least $c\omega$-relation clone including $\mathcal S^\top$.
\end{proof}

Let $\mathcal R$ be a 
$c\omega$-relation clone.
Since $$\mathcal R\subseteq Inv_{c} ^\omega(Pol^\omega \mathcal R)\subseteq\mathrm{Lim}\, Inv(Pol\,\mathcal R_{\mathrm{fin}})$$
then the equality $Inv(Pol\,\mathcal R_{\mathrm{fin}})=\mathcal R_{\mathrm{fin}}$ implies $Inv_{c}^\omega(Pol^\omega\,  \mathcal R)=\mathcal R$. As a consequence of this reasoning  we get   the following corollary.

\begin{corollary}
  If   $\mathcal R$ is either a $c\omega$-relation clone on a finite set or a strong $c\omega$-relation clone on a countable set, then 
 $\mathcal R=Inv_{c} ^\omega(Pol^\omega\,\mathcal R)$. 
\end{corollary}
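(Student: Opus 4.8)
The plan is to lean on the squeeze that the text assembles immediately before the corollary. For any $c\omega$-relation clone $\mathcal R$ one has the chain
$$\mathcal R\subseteq Inv_{c}^\omega(Pol^\omega\mathcal R)\subseteq \mathrm{Lim}\, Inv(Pol\,\mathcal R_{\mathrm{fin}}),$$
where the first inclusion is the trivial half of the Galois connection (each $R\in\mathcal R$ is locally closed, so $R=\overline R$, and $Pol^\omega\mathcal R=\bigcap_{R'\in\mathcal R}Pol^\omega R'\subseteq Pol^\omega R$, whence $R\in Inv_c^\omega(Pol^\omega\mathcal R)$), and the second inclusion is exactly Theorem \ref{thm:clone}. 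Since $\mathcal R$ is a $c\omega$-relation clone, Theorem \ref{thm:convergent}(1) gives $\mathcal R=\mathrm{Lim}\,\mathcal R_{\mathrm{fin}}$. Consequently, as soon as the single finitary identity
$$Inv(Pol\,\mathcal R_{\mathrm{fin}})=\mathcal R_{\mathrm{fin}}$$
is available, the right-hand term of the chain becomes $\mathrm{Lim}\,\mathcal R_{\mathrm{fin}}=\mathcal R$, and the chain collapses to $\mathcal R=Inv_c^\omega(Pol^\omega\mathcal R)$. So my whole task reduces to establishing this one identity under each of the two hypotheses, which is precisely where the classical $Inv$-$Pol$ theory closes.

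For the finite case I would argue as follows. By Proposition \ref{lem:finrel}, $\mathcal R_{\mathrm{fin}}$ is a (cut-closed) relation clone, in particular a relation clone. When the base set $A$ is finite, Theorem \ref{thm:invpol} (Geiger; Bondarc\v uk--Kalu\v znin--Kotov--Romov) states that a set of finitary relations is a relation clone if and only if it is Galois-closed, i.e.\ equal to $Inv(Pol$ of itself$)$. Applying this to $\mathcal R_{\mathrm{fin}}$ yields $Inv(Pol\,\mathcal R_{\mathrm{fin}})=\mathcal R_{\mathrm{fin}}$, and the squeeze above then forces $\mathcal R=Inv_c^\omega(Pol^\omega\mathcal R)$.

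For the countable strong case the step to supply is that $\mathcal R_{\mathrm{fin}}$ is not merely a relation clone but a \emph{strong} one. This is exactly the ``moreover'' clause recorded in the proof of Theorem \ref{thm:convergent}: if $\mathcal R$ is a strong $c\omega$-relation clone, then $\mathcal R_{\mathrm{fin}}$ is strong. With $A$ countable and $\mathcal R_{\mathrm{fin}}$ a strong relation clone, Romov's Theorem \ref{thm:invpolromov} gives $Inv(Pol\,\mathcal R_{\mathrm{fin}})=\mathcal R_{\mathrm{fin}}$, and once again the squeeze concludes.

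The only genuinely non-routine point, and hence the place I would expect to spend effort, is the strongness transfer in the countable case: one must check that the directed-union-of-equal-arity-relations condition of Definition \ref{def:src}(v) for $\mathcal R_{\mathrm{fin}}$ really does follow from the strongness condition phrased for $\mathcal R$ in terms of directed suprema of decreasing sequences in $\mathrm{Dec}\,\mathcal R_{\mathrm{fin}}$. Everything else is a bookkeeping assembly of the inclusion of Theorem \ref{thm:clone}, the representation $\mathcal R=\mathrm{Lim}\,\mathcal R_{\mathrm{fin}}$ of Theorem \ref{thm:convergent}, and the two cited Galois-closure theorems.
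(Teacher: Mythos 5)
Your proof is correct and follows exactly the route the paper takes: the squeeze $\mathcal R\subseteq Inv_{c}^\omega(Pol^\omega\mathcal R)\subseteq \mathrm{Lim}\,Inv(Pol\,\mathcal R_{\mathrm{fin}})$ from Theorem \ref{thm:clone}, the collapse via $\mathcal R=\mathrm{Lim}\,\mathcal R_{\mathrm{fin}}$ from Theorem \ref{thm:convergent}(1), and the Galois closure of $\mathcal R_{\mathrm{fin}}$ supplied by Theorem \ref{thm:invpol} in the finite case and Theorem \ref{thm:invpolromov} (with the strongness transfer from the proof of Theorem \ref{thm:convergent}) in the countable strong case. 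The strongness-transfer point you flag as the only non-routine step is indeed the one the paper dispatches with ``it is easy to prove'' inside the proof of Theorem \ref{thm:convergent}.
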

\begin{proof} By Theorems \ref{thm:clone}, \ref{thm:convergent}(1), \ref{thm:invpol} and \ref{thm:invpolromov}.
    \end{proof}

\section{Conclusion}
In this paper, we examine various  concepts of clones of operations and relations on a given base set $A$. Unlike classic clone theory, which limits the arities of functions and relations to be finite, our study allows for arity $\omega$  for both operations and relations. Additionally, there are no restrictions on the cardinality of the base set $A$.
Since  there are still restrictions in arity,  topology plays an important role in
this setting.
In Section \ref{sec:topo} we develop a method for equipping the  set $A^B$ with a topology using Boolean ideals on $B$. 
This framework is particularly applicable when $B = A^\omega$, hence $A^B = O_A^{(\omega)}$. However, not all ideals on $A^\omega$ are useful in the context of clone theory. It is essential for the topologies to behave well with respect to composition. To identify such suitable ideals, we introduce the concepts of substitutive and infinitely substitutive ideals (cf. Definition \ref{def:subst} and Proposition \ref{prop:sub}).
Of special interest are the local, global, trace, and uniform ideals on $A^\omega$, which lead to corresponding topologies on $O_A^{(\omega)}$. The local topology has been significant in clone theory since its inception, while the other three, though less known, are important when dealing with arity $\omega$.

On the relational side, topology is introduced through the concept of ideal maps.
An ideal map on a set $A$ associates an ideal on $\alpha$ to every $(\alpha \times \omega)$-matrix  over $A$, thereby equipping $A^\alpha$ with a topology.
Each  ideal map $\mathcal{I}$ on $A$ induces
a Galois connection between $\mathcal{P}(O_A^{(\omega)})$ and $\mathcal{P}(Rel_A^{(\leq \omega)})$. However, the general notion of an ideal map is somewhat broad. We have determined that the well-behaved ideal maps are the canonical ones,  derived from ideals on $A^\omega$.
Given an ideal $X$ on $A^\omega$, one of the main results of the paper is the characterisation of $X$-closed infinitary $\omega$-clones, as stated in Theorem \ref{thm:duedue2}.
To describe $\omega$-clones that are not necessarily infinitary through invariant relations, in Section \ref{sec:matrical} we introduce the notion of matrical polymorphisms. The main result characterising $X$-closed $\omega$-clones is presented in Theorem \ref{thm:duedue5}. As a corollary, we obtain a characterisation of trace-closed and uniform-closed $\omega$-clones.

Sections \ref{sec:eight} and \ref{sec:nine} are dedicated to studying clones of relations of arity $\leq \omega$. The notions of $\omega$-relational clone (cf. Definition \ref{def:omegarc}) and relational clone are compared and connected through the concept of cut-closedness (cf. Definition \ref{def:cutc}). We demonstrate that the set of finitary relations, whose top-extensions belong to a given $\omega$-relational clone, constitutes a cut-closed relational clone. Special attention is given to $\omega$-relational clones whose relations are locally closed (termed $c\omega$-relational clones). The main results regarding these clones are Theorem \ref{thm:convergent} and Theorem \ref{thm:clone}, where cut-closed relation clones and $c\omega$-relation clones are characterised. The key concept in this context is the limit of decreasing sequences of finitary relations (cf. Definition \ref{def:dec}).

Some open questions and material for further work are analyzed below.

\begin{enumerate}

   \item Consider the set $\mathrm{RC}$ (resp. $c\omega\mathrm{RC}$) of relation clones (resp. $c\omega$-relation clones) on a given set $A$, partially ordered by $\subseteq$. In Theorem \ref{thm:convergent}, we defined two monotonic functions
   \[
   \mathrm{Lim}: \mathrm{RC} \to c\omega\mathrm{RC} \quad \text{and} \quad (-)_{\mathrm{fin}}: c\omega\mathrm{RC} \to \mathrm{RC}
   \]
   such that
      $\mathrm{Lim} \circ (-)_{\mathrm{fin}} = Id_{c\omega\mathrm{RC}}$  and $(-)_{\mathrm{fin}} \circ \mathrm{Lim}$
   is a closure operator, whose closed elements are the cut-closed relation clones. 
   We conjecture that a strong relation clone $\mathcal{S}$ is cut-closed iff $\mathcal{S} = \mathrm{Inv}(\mathrm{Pol}\; \mathcal{S})$. 

   \item We have studied $\mathrm{Inv}^\omega_\mathcal{X}$-$\mathrm{Pol}^\omega_\mathcal{X}$ in the case $\mathcal{X}=\mathcal{G}$. This is the simplest case since for all matrices $m \in \mathrm{M}_\omega^\omega$, the topology $\mathcal{G}_m$ on $A^\omega$ is the discrete one. The general case is left for future work.

   \item In Proposition \ref{prop:can}, we have shown that an ideal map is canonical iff it is the object part of a presheaf from $\mathbb{M}^{\mathrm{op}}$ into $\mathbf{Sets}$. We conjecture that this presheaf is indeed a sheaf with respect to the Grothendieck topology defined in Section \ref{sec:int}.

   \item In the last theorem of this paper, we have shown that, given a $c\omega$-relation clone $\mathcal{R}$,
   \[
   \mathcal{R} \subseteq \mathrm{Inv}_c^\omega(\mathrm{Pol}^\omega \mathcal{R}) \subseteq \mathrm{Lim} \, \mathrm{Inv}(\mathrm{Pol} \, \mathcal{R}_{\mathrm{fin}}).
   \]
   These inclusions become equalities when $\mathcal{R}_{\mathrm{fin}} = \mathrm{Inv}(\mathrm{Pol} \, \mathcal{R}_{\mathrm{fin}})$, but this is not always the case. The challenge is to identify a more general condition on $\mathcal{R}_{\mathrm{fin}}$ that ensures these inclusions are indeed equalities. In particular, if $\mathcal{R} = \mathrm{Lim} \, \mathcal{S}$ for a cut-closed strong relation clone $\mathcal{S}$, we conjecture that the equality
   $\mathrm{Lim} \, \mathcal{S} = \mathrm{Inv}_c^\omega(\mathrm{Pol}^\omega \, \mathrm{Lim} \, \mathcal{S})$ 
   implies that $\mathrm{Inv}(\mathrm{Pol} \, \mathcal{S}) = \mathcal{S}$.
\end{enumerate}

We conclude by sketching some work in progress. The underlying idea is to replace the operations $q_n$ of clone algebras in the finitary case and $q$ in the infinitary case with a binary operation of composition.
For instance, given an infinitary clone algebra $(C, q, \epsilon_i)_{i \in \omega}$, we consider the algebra
$(C^\omega, \cdot, 1)$, defined by
$$
a \cdot b = (q(a_0, b), q(a_1, b), \ldots, q(a_n, b), \ldots); \quad 1 = (\epsilon_0, \epsilon_1, \ldots, \epsilon_n, \ldots),
$$

\noindent which turns out to be a monoid, whose universe is the set $C^\omega$ of $\omega$-sequences. In the case of functional clone algebras, this amounts to replacing functions from $A^\omega$ into $A$ with functions from $A^\omega$ into $A^\omega$.

Based on this idea, we are working on a new and more abstract theory of clones that encompasses both $\omega$-clones and infinitary $\omega$-clones into a variety of finitary algebras, called cl-monoids. In these algebras, the monoidal structure is enriched with the algebraic abstraction of the concept of $\omega$-sequence.













\end{document}